\theoremstyle{plain}
\newtheorem{theorem}{Theorem}
\newtheorem{lemma}[theorem]{Lemma}
\newtheorem{corollary}[theorem]{Corollary}
\newtheorem{claim}[theorem]{Claim}
\newtheorem{conjecture}{Conjecture}
\theoremstyle{definition}
\newtheorem{remark}[theorem]{Remark}
\def\final{0}  
\def\iflong{\iffalse}
\newcommand{\kristof}[1]{{\color{red}[{\tiny \textbf{Kristóf:} \bf #1}]\marginpar{\color{red}*}}}
\newcommand{\yusuke}[1]{{\color{red}[{\tiny \textbf{Yusuke:} \bf #1}]\marginpar{\color{red}*}}}
\newcommand{\nao}[1]{{\color{red}[{\tiny \textbf{Nao:} \bf #1}]\marginpar{\color{red}*}}}
\newcommand{\kristof}[1]{}
\newcommand{\yusuke}[1]{}
\newcommand{\nao}[1]{}
\DeclareMathOperator*{\argmin}{arg\,min}
\DeclareMathOperator*{\argmax}{arg\,max}
\newcommand{\bR}{\mathbb{R}}
\newcommand{\cB}{\mathcal{B}}
\newcommand{\cI}{\mathcal{I}}
\newcommand{\suppp}{{\rm supp}\sp{+}}
\newcommand{\suppm}{{\rm supp}\sp{-}}
\title{Market Pricing for Matroid Rank Valuations}
\author{
Kristóf Bérczi\thanks{MTA-ELTE Egerváry Research Group, Department of Operations Research, Eötvös Loránd University, Budapest, Hungary. Email: \texttt{berkri@cs.elte.hu}.}
\and
Naonori Kakimura\thanks{Department of Mathematics, Faculty of Science and Technology, Keio University, Yokohama, Japan. Email: \texttt{kakimura@math.keio.ac.jp}.}
\and
Yusuke Kobayashi\thanks{Research Institute for Mathematical Sciences (RIMS), Kyoto University, Kyoto, Japan. Email: \texttt{yusuke@kurims.kyoto.ac.jp}.}
}
\begin{document}

\maketitle

\begin{abstract}
In this paper, we study the problem of maximizing social welfare in combinatorial markets through pricing schemes. We consider the existence of prices that are capable to achieve optimal social welfare without a central tie-breaking coordinator. In the case of two buyers with rank valuations, we give polynomial-time algorithms that always find such prices when one of the matroids is a simple partition matroid or both matroids are strongly base orderable. This result partially answers a question raised by D\"uetting and V\'egh in 2017. We further formalize a weighted variant of the conjecture of D\"uetting and V\'egh, and show that the weighted variant can be reduced to the unweighted one based on the weight-splitting theorem for weighted matroid intersection by Frank. We also show that a similar reduction technique works for M${}^\natural$-concave functions, or equivalently, gross substitutes functions.
\end{abstract}

\section{Introduction}\label{sec:intro}

In this paper, we study the problem of maximizing social welfare in combinatorial markets through pricing schemes. Let us consider a combinatorial market consisting of indivisible goods and buyers, where each buyer has a valuation function that describes the buyer's preferences over the subsets of items. The goal is to allocate the items to buyers in such a way that the social welfare, that is, the total sum of the buyers' values, is maximized. Such an allocation can be found efficiently under reasonable assumptions on the valuations \cite{nisan2006communication}. As an application of the Vickrey--Clarke--Groves (VCG) mechanism \cite{clarke1971multipart,vickrey1961counterspeculation,groves1973incentives} for welfare maximization, the VCG auction is another illustrious example. However, the problem becomes much more intricate if the optimal welfare is ought to be achieved using simpler mechanisms employed in real world markets, such as pricing.

In a pricing scheme, the seller sets the item prices, and the \emph{utility} of a buyer for a given bundle of items is defined as the value of the bundle with respect to the buyer's valuation, minus the total price of the items in the bundle. Ideally, the prices are set in such a way that there exists an allocation of the items to buyers in which the market clears and everyone receives a bundle that maximizes her utility. A pair of pricing and allocation possessing these properties is called a \emph{Walrasian equilibrium}\footnote{Walrasian equilibrium is often called competitive pricing, or market equilibrium in the literature.}, while we will refer to the price vector itself as \emph{Walrasian pricing}.  The fundamental notion of Walrasian equilibrium first appeared in \cite{walras1896elements}, and the definition immediately implies that the allocation in a Walrasian equilibrium maximizes social welfare. Therefore, the problem might seem to be settled for markets that admit such an equilibrium. 

Cohen-Addad et al. \cite{cohen2016invisible} observed that Walrasian prices alone are not sufficient to coordinate the market. The reason is that ties among different bundles have to be broken up carefully by a central coordinator, in a manner consistent with the corresponding optimal allocation. However, in real markets, buyers walk into the shop in an arbitrary sequential order and choose an arbitrary best bundle for themselves without caring about social optimum. In their paper, it is shown that the absence of a tie-braking rule may result in an arbitrary bad allocation. 

To overcome these difficulties, Cohen-Addad et al. \cite{cohen2016invisible} introduced the notion of \emph{dynamic pricing schemes}. In this setting, the seller is allowed to dynamically update the prices between buyer arrivals. Achieving optimal social welfare based on dynamic pricing would be clearly possible if the order in which buyers arrive was known in advance. Nevertheless, determining an optimal dynamic pricing scheme is highly non-trivial when the prices need to be set before getting access to the preferences of the next buyer.

The main open problem in~\cite{cohen2016invisible} asked whether any market with \emph{gross substitutes valuations} has a dynamic pricing scheme that achieves optimal social welfare. A market with gross substitutes valuations is known to be an important class of markets having Walrasian prices~\cite{kelso1982job}. It is worth noting that the existence of an optimal scheme reduces to the existence of an appropriate initial price vector; an optimal allocation then can be determined by induction. For a formal definition, we refer the reader to \cite{berger2020power}.

As a starting step towards understanding the general case, D\"utting and V\'egh~\cite{vegh} suggested to look at matroid rank functions as valuations, because a matroid rank function is a fundamental example of gross substitutes valuations. In particular, they proposed the following conjecture for the case of two buyers.\footnote{D\"utting and V\'egh conjectured that the price vector $p$ can be chosen to have all different values, that is, $p(s_1)\neq p(s_2)$ for $s_1\neq s_2$. This difference is not essential, because we can apply a perturbation to $p$ without affecting the requirements in Conjecture~\ref{conj:01}.} Here, a matroid with a ground set $S$ and a base family ${\cal B}$ is denoted by $M=(S,{\cal B})$ and we denote $p(X) := \sum_{s \in X} p(s)$ for $p:S \to \mathbb{R}$ and $X \subseteq S$.

\begin{conjecture} \label{conj:01}
Let $M_1=(S, \mathcal{B}_1)$ and $M_2=(S, \mathcal{B}_2)$ be matroids with a common ground set $S$ such that   there exist disjoint bases $B_1 \in \mathcal{B}_1$ and $B_2 \in \mathcal{B}_2$ with $B_1 \cup B_2 = S$.  Then, there exists a function $p: S \to \mathbb{R}$ (called a {\em price vector}) satisfying the following conditions.  
\begin{enumerate}
\item For any $B_1 \in \argmin_{X \in \mathcal{B}_1} p(X)$, it holds that $S \setminus B_1 \in \mathcal{B}_2$. 
\item For any $B_2 \in \argmin_{X \in \mathcal{B}_2} p(X)$, it holds that $S \setminus B_2 \in \mathcal{B}_1$. 
\end{enumerate}
\end{conjecture}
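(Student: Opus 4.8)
We outline the approach we will take toward Conjecture~\ref{conj:01}; throughout it is convenient to pass to matroid duals. Since $B_1$ and $B_2$ are disjoint with $B_1\cup B_2=S$, every base of $M_1$ has complement of size $|B_2|$, and a subset of $S$ of that size belongs to $\mathcal B_2$ exactly when it is a base of $M_2$ (being then simultaneously independent and spanning). Hence $S\setminus B_1\in\mathcal B_2$ is the same as $B_1$ being a base of $M_2^{*}$, so Condition~1 says precisely that every $p$-minimal base of $M_1$ is a common base of $M_1$ and $M_2^{*}$, and symmetrically Condition~2 says that every $p$-minimal base of $M_2$ is a common base of $M_2$ and $M_1^{*}$. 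These two target families are nonempty --- indeed they are interchanged by $B\mapsto S\setminus B$, with $B_1$ as a witness --- so the task is to pick the prices so that the ``cheap'' bases of each matroid are forced into its set of common bases.

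A natural first attempt fixes one common base $B^{*}$ of $M_1$ and $M_2^{*}$ and asks for prices $p$ making $B^{*}$ the unique $p$-minimal base of $M_1$ while $S\setminus B^{*}$ is the unique $p$-minimal base of $M_2$; the latter is equivalent to $B^{*}$ being the unique $p$-maximal base of $M_2^{*}$. Using that the edges of the base polytope of $M_1$ at $B^{*}$ are exactly the single exchanges $B^{*}-f+e\in\mathcal B_1$ with $f\in B^{*}$, $e\notin B^{*}$ (and likewise for $M_2^{*}$), this reduces to finding $p\colon S\to\mathbb R$ with $p(f)<p(e)$ for every exchange $B^{*}-f+e\in\mathcal B_1$ and $p(e)<p(f)$ for every exchange $B^{*}-f+e\in\mathcal B(M_2^{*})$. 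Such a system of strict inequalities is feasible if and only if the bipartite digraph with vertex classes $B^{*}$ and $S\setminus B^{*}$ --- with an arc $f\to e$ for each $M_1$-exchange and an arc $e\to f$ for each $M_2^{*}$-exchange --- has no directed cycle, in which case any topological order supplies $p$. This ``acyclic exchange digraph'' route cannot settle the conjecture on its own: already when $M_1=M_2$ is the uniform matroid of rank $1$ on two elements, every common base carries a directed $2$-cycle --- equivalently, cannot be made simultaneously the unique cheapest base of $M_1$ and, through its complement, the unique cheapest base of $M_2$ --- although the conjecture holds trivially there. In general one must drop the demand that the cheapest bases of $M_1$ and $M_2$ be complementary, allow them to be non-unique, and instead verify directly that every $p$-minimal base of $M_1$ has a spanning complement in $M_2$ and conversely.

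The plan I would then follow is iterative. Start from prices that make a fixed common base $B^{*}$ cheap --- for instance $p\equiv 0$ on $B^{*}$ and small distinct positive values on $S\setminus B^{*}$ --- and, while some $p$-minimal base of $M_1$ violates Condition~1 (equivalently, contains a cocircuit of $M_2$), repair $p$ by reweighting the offending elements along a path of the matroid-intersection auxiliary digraph of $M_1$ and $M_2^{*}$, exactly as in a matroid-intersection augmentation, all the while checking that the symmetric requirement for $M_2$ is not broken. For the two special cases this programme can be carried to completion. When one of the matroids, say $M_1$, is a simple partition matroid, its $p$-minimal bases are precisely the sets obtained by picking a cheapest element in each class, so the $M_1$-side of every exchange stays inside a single class and the interaction with $M_2$ can be analysed class by class, which lets one construct suitable prices directly (or run the augmentation to termination). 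When both matroids are strongly base orderable, the strong exchange bijections between $B^{*}\setminus B'$ and $B'\setminus B^{*}$ available in $M_1$, in $M_2^{*}$, and in their duals allow one to pair any repair move on one side with a compatible simultaneous exchange on the other, so the process again closes. The hard part --- and the reason Conjecture~\ref{conj:01} remains open in general --- is controlling this two-sided interaction in the absence of such structure: a reweighting that makes the cheap bases of $M_1$ spanning in $M_2$ may create a new $p$-minimal base of $M_2$ that is not spanning in $M_1$, and no global potential is evident that strictly decreases under the repair step.
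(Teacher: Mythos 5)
There is a genuine gap, and it is structural: what you have written is a research plan, not a proof. The statement you were asked to prove is precisely the open conjecture of D\"utting and V\'egh; the paper itself does not prove it in general, but it does give complete proofs for the two special cases (Theorems~\ref{thm:partition} and~\ref{thm:sbo}), and your proposal does not actually establish even those. Your reformulation via $M_2^*$ (i.e., Conjecture~\ref{conj:02}), your observation that one cannot insist that the cheapest bases of $M_1$ and $M_2$ be complementary (the rank-$1$ uniform example with its forced $2$-cycle), and your idea of obtaining $p$ from a topological order of a two-sided exchange digraph are all sound and match the paper's starting point. But from there your argument stops: the iterative ``repair by reweighting along matroid-intersection augmenting paths'' has no invariant, no termination measure, and no proof that fixing a violation of Condition~1 does not create a violation of Condition~2 --- you say so yourself. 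For the partition case, ``the interaction with $M_2$ can be analysed class by class'' is an assertion, not an argument; for the strongly base orderable case, ``pair any repair move with a compatible simultaneous exchange'' is likewise never made precise. A proof must exhibit a concrete $p$ and verify both conditions against \emph{all} $p$-minimal bases, and nothing in the proposal does this.

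The missing idea, which is exactly how the paper escapes the $2$-cycle obstruction you identify, is to work with \emph{two} common bases of $M_1$ and $M_2^*$ rather than one: choose $B_1,B_2\in\mathcal{B}_1\cap\mathcal{B}_2^*$ with $|B_1\cap B_2|$ minimum, set $p=0$ on $B_1\setminus B_2$, $p$ very large on $B_2\setminus B_1$, and build the exchange digraph only on $(B_1\cap B_2)\cup\bigl(S\setminus(B_1\cup B_2)\bigr)$, with arcs $x\to y$ for $M_1$-exchanges out of $B_1$ and $y\to x$ for exchanges out of $B_2$ in $M_2^*$. The whole proof then rests on showing this digraph is acyclic, which follows from the minimality of $|B_1\cap B_2|$: a chordless dicycle would, via Krogdahl's unique-matching theorem (Theorem~\ref{thm:krog}), produce common bases with strictly smaller intersection --- directly in the partition case, and via the parallel-copy doubling $2M_i^+$ together with Davies--McDiarmid (Theorem~\ref{thm:DM76}) in the strongly base orderable case. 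A greedy/topological argument then shows $B_1$ is the \emph{unique} $p$-minimal base of $M_1$ and $B_2$ the unique $p$-maximal base of $M_2^*$, so both conditions hold because both are common bases (they are typically not complementary, which is what defuses your $2$-cycle example). Without this choice of a pair of intersection-minimizing common bases and an acyclicity proof of this kind (or some substitute for it), your plan does not close even in the two special cases, let alone in general.
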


The requirements in the conjecture can be interpreted as follows. There are two buyers and each buyer $i \in \{1,2\}$ wants to buy a set of items that forms a basis in $\mathcal{B}_i$. If buyer $i$ comes to a shop first, then she chooses a cheapest set $B_i$ in $\mathcal{B}_i$ with an arbitrary tie-breaking rule. Regardless of the choice of $B_i$, the remaining set $S \setminus B_i$ is a desired set for the other buyer.  

Actually, Conjecture~\ref{conj:01} resolves the existence of a static pricing scheme for a two-buyer market with matroid rank valuations. That is, if Conjecture~\ref{conj:01} is true, then the following conjecture is also true. See Lemma~\ref{lem:00to01} for the details.  

\begin{conjecture} \label{conj:00}
Let $M_1=(S, \mathcal{B}_1)$ and $M_2=(S, \mathcal{B}_2)$ be matroids with rank functions $r_1$ and $r_2$, respectively.  Then, there exists a function $p: S \to \mathbb{R}$ satisfying the following conditions.  
\begin{enumerate}
\item For any $B_1 \in \argmax_{X\subseteq S} (r_1(X)-p(X))$ and for any $B_2 \in \argmax_{Y\subseteq S \setminus B_1} (r_2(Y)-p(Y))$,  $r_1(B_1)+r_2(B_2) = \max \{r_1(X) + r_2(Y) \mid X, Y \subseteq S,\ X\cap Y=\emptyset \}$. 
\item
For any $B_2 \in \argmax_{Y\subseteq S} (r_2(Y)-p(Y))$ and for any $B_1 \in \argmax_{X\subseteq S \setminus B_2} (r_1(X)-p(X))$,  $r_1(B_1)+r_2(B_2) = \max \{r_1(X) + r_2(Y) \mid X, Y \subseteq S,\ X\cap Y=\emptyset \}$. 
\end{enumerate}
\end{conjecture}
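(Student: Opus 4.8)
The plan is to assume Conjecture~\ref{conj:01} and, for arbitrary matroids $M_1=(S,\mathcal{B}_1)$, $M_2=(S,\mathcal{B}_2)$ with rank functions $r_1,r_2$, produce a price vector witnessing Conjecture~\ref{conj:00}. I would first restrict attention to price vectors $p$ whose values all lie in $(0,1)$, because then every buyer's behaviour becomes transparent: since the marginal values of a matroid rank function are $0$ or $1$, if a buyer with valuation $r_i$ faces an available set $U\subseteq S$, then every bundle in $\argmax_{X\subseteq U}(r_i(X)-p(X))$ is a minimum-$p$-weight maximal independent set of $M_i|_U$, and all of these have cardinality $r_i(U)$. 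Hence, when buyer~$1$ picks first and selects some $p$-cheapest basis $B_1$ of $M_1$, the realized welfare is $r_1(S)+r_2(S\setminus B_1)$ irrespective of buyer~$2$'s tie-breaking, and symmetrically for buyer~$2$ going first.

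Next, a short exchange argument shows that an optimal disjoint pair $(X,Y)$ for $k:=\max\{r_1(X)+r_2(Y)\mid X\cap Y=\emptyset\}$ can always be chosen with $X$ a basis of $M_1$ (push an independent $X$ towards a basis, stealing an element from $Y$ whenever forced), so that $k=r_1(S)+\max_{B\in\mathcal{B}_1}r_2(S\setminus B)$ and symmetrically $k=r_2(S)+\max_{B\in\mathcal{B}_2}r_1(S\setminus B)$. Combining this with the previous paragraph, Conjecture~\ref{conj:00} for $(M_1,M_2)$ follows once I produce $p\in(0,1)^S$ such that every $B_1\in\argmin_{B\in\mathcal{B}_1}p(B)$ attains $r_2(S\setminus B_1)=\max_{B\in\mathcal{B}_1}r_2(S\setminus B)$, and the symmetric condition holds for $M_2$.

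To produce such a $p$ I would reduce to Conjecture~\ref{conj:01}. Assume first $r_1(S)=r_2(S)$ --- this is free, by adjoining dummy elements that are coloops in the matroid of smaller rank and loops in the other, which only shifts $k$ and all the relevant bundles by a fixed amount. Then build auxiliary matroids $N_1,N_2$ on a common enlarged ground set $S'\supseteq S$ so that (i) $N_1,N_2$ admit a pair of disjoint complementary bases, matching the hypothesis of Conjecture~\ref{conj:01}; and (ii) the price vector $q\colon S'\to\mathbb{R}$ that Conjecture~\ref{conj:01} then supplies restricts --- after an affine rescaling into $(0,1)$ (which does not change $\argmin$ over the bases of either $N_i$) and, if convenient, a generic perturbation (which can only shrink those $\argmin$ sets, hence preserves the conclusion exactly as in the footnote to Conjecture~\ref{conj:01}) --- to $p:=q|_S$ satisfying the two conditions above. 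The intended dictionary is: a $q$-cheapest basis of $N_i$ restricts to a $p$-cheapest basis $B_i$ of $M_i$, and the statement ``$S'\setminus C$ is a basis of $N_{3-i}$'' translates into ``$r_{3-i}(S\setminus B_i)$ is as large as possible''; since Conjecture~\ref{conj:01} asserts both of its conditions simultaneously, the single $p$ then handles both of Conjecture~\ref{conj:00}'s conditions.

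The crux --- and where I expect the real work to be --- is the construction in (i)--(ii). The hypothesis of Conjecture~\ref{conj:01}, namely $\operatorname{rk}N_1+\operatorname{rk}N_2=|S'|$ together with a pair of disjoint complementary bases, is easy to arrange in degenerate ways (for instance taking direct sums of $M_1,M_2$ with uniform matroids on the dummy elements), but any such choice decouples the two buyers, making the pulled-back conditions vacuous. One needs $N_1$ and $N_2$ to be genuinely intertwined over the shared part of $S'$ --- through truncations or matroid unions rather than direct sums --- so that buyer~$1$'s choice of basis in $N_1$ actually constrains buyer~$2$'s residual set in $N_2$, while still respecting the rank identity. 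A particular difficulty hiding here is the case where $M_1\vee M_2$ is not spanning on $S$ (equivalently $k<|S|$): then $N_1,N_2$ cannot simply be $M_1,M_2$ with extra coloops, since padding cannot increase $k$ relative to the ground-set size, so the enlargement must absorb this slack on the ``complement'' side. Once the right $N_1,N_2$ are in hand, verifying the dictionary in (ii), including the bookkeeping for dummy prices and for ties, should be routine.
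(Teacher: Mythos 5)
There is a genuine gap: your argument stops exactly where the proof has to start. Your preliminary reductions are fine --- with prices strictly in $(0,1)$ every utility maximizer is a cheapest maximal independent set of the available items, and the exchange argument giving $\max\{r_1(X)+r_2(Y)\mid X\cap Y=\emptyset\}=r_1(S)+\max_{B\in\mathcal{B}_1}r_2(S\setminus B)$ is correct --- but the entire content of the lemma is the construction you defer in (i)--(ii) and yourself label ``the crux'' and ``where I expect the real work to be.'' You never exhibit the auxiliary matroids $N_1,N_2$, you only explain why the obvious choices (direct sums with dummies, padding with coloops) fail, so what you have is a reduction framework plus an open subproblem, not a proof that Conjecture~\ref{conj:01} implies Conjecture~\ref{conj:00}.

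The missing idea in the paper goes in the opposite direction from your enlargement $S'\supseteq S$: one \emph{shrinks} the instance. Take bases $\hat B_1\in\mathcal{B}_1$, $\hat B_2\in\mathcal{B}_2$ maximizing $|\hat B_1\cup\hat B_2|$ (this maximum equals the optimal welfare $k$, and is computable by matroid intersection on $M_1$ and $M_2^*$); delete $S\setminus(\hat B_1\cup\hat B_2)$ and contract $\hat B_1\cap\hat B_2$ in both matroids. The resulting minors on $S'=\hat B_1\bigtriangleup\hat B_2$ have the complementary disjoint bases $\hat B_1\setminus\hat B_2$ and $\hat B_2\setminus\hat B_1$, so Conjecture~\ref{conj:01} applies verbatim --- no intertwining construction is needed, and the case $k<|S|$ that worries you is absorbed by the deleted elements. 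The price $p'$ from Conjecture~\ref{conj:01}, rescaled into $(0,1)$, is then extended by $p\equiv 0$ on $\hat B_1\cap\hat B_2$ and $p\equiv 1$ on $S\setminus(\hat B_1\cup\hat B_2)$; a short argument (any utility maximizer must contain the contracted set, avoid the deleted set, and restrict to a $p'$-cheapest basis of the minor, by maximality of $|\hat B_1\cup\hat B_2|$) yields both conditions of Conjecture~\ref{conj:00}. Note this abandons your self-imposed requirement that all prices lie strictly in $(0,1)$: allowing the extreme values $0$ and $1$ on the contracted and deleted parts is precisely what lets the full-ground-set tie-breaking be controlled without building new matroids.
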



In the conjecture, if buyer $i$ comes to a shop first, then she chooses an arbitrary bundle $B_i$ that maximizes her utility $r_i-p$, and the second buyer chooses a best bundle in $S \setminus B_i$.
The requirements mean that any choice of $B_i$ results in an allocation maximizing the social welfare. 
Thus, whoever comes first, we can achieve the optimal social welfare.

\paragraph*{Previous work}

The notion of Walrasian equilibrium dates back to 1874 \cite{walras1896elements}, originally defined for divisible goods. In their analysis of the matching problem, Kelso and Crawford \cite{kelso1982job} introduced the so-called gross substitutes condition, and showed the existence of Walrasian prices for gross substitutes valuations.
Gul and Stacchetti \cite{gul1999walrasian} later verified that, in a sense, this condition is necessary to ensure the existence of a Walrasian equilibrium.\footnote{The simplest example of gross substitutes valuations are unit demand preferences, when each agent can enjoy at most one item. Gul and Stacchetti showed that gross substitutes preferences form the largest set containing unit demand preferences for which an existence theorem can be obtained.} 

It was first observed by Cohen-Addad et al.~\cite{cohen2016invisible} and Hsu et al.~\cite{hsu2016prices} that Walrasian prices are not sufficient to control the market as ties must be broken in a coordinated fashion that is consistent with maximizing social welfare. A natural idea for resolving this issue would be trying to find Walrasian prices where ties do not occur. However, Hsu et al.\ showed that  minimal Walrasian prices always induce ties. Even more, Cohen-Addad et al.\ proved that no static prices can give more than $2/3$ of the social welfare when buyers arrive sequentially. As a workaround, they proposed a dynamic pricing scheme for matching markets (i.e., unit-demand valuations), where the prices are updated between buyer-arrivals based upon the current inventory without knowing the identity of the next buyer. On the negative side, they presented a market with coverage valuations where Walrasian prices do exist, but no dynamic pricing scheme can achieve the optimal social welfare. Meanwhile, Hsu et al.\ showed that, under certain conditions, minimal Walrasian equilibrium prices induce low over-demand and high welfare. Recently, Berger et al.~\cite{berger2020power} considered markets beyond unit-demand valuations, and gave a characterization of all optimal allocations in multi-demand markets. Based on this, they provided a polynomial-time algorithm for finding optimal dynamic prices up to three multi-demand buyers.

To overcome the limitations of Walrasian equilibrium, Feldman et al.~\cite{feldman2016combinatorial} proposed a relaxation called combinatorial Walrasian equilibrium in which the seller can partition the items into indivisible bundles prior to sale, and they provided an algorithm that determines bundle prices obtaining at least half of the optimal social welfare.

Another line of research concentrated on posted-price mechanisms in online settings. As alternatives to optimal auctions, Blumrosen and Holenstein \cite{blumrosen2008posted} studied posted-price mechanisms and dynamic auctions in Bayesian settings under the objective of maximizing revenue. They gave a characterization of the optimal revenue for general distributions, and provided algorithms that achieve the optimal solution. Chawla et al.~\cite{chawla2010multi,chawla2010power} developed a theory of sequential posted-price mechanisms, and provided constant-factor approximation algorithms for several multi-dimensional multi-unit auction problems and generalizations to matroid feasibility constraints. In \cite{feldman2014combinatorial}, Feldman et el.\ verified the existence of prices that, in expectation, achieve at least half of the optimal social welfare for fractionally subadditive valuations, a class that includes all submodular functions. D\"utting et al.~\cite{dutting2016posted,dutting2017prophet} provided a general framework for posted-price mechanisms in Bayesian settings. Chawla et al.~\cite{chawla2019pricing} showed that static, anonymous bundle pricing mechanisms are useful when buyers' preferences have complementarities. Ezra et al.~\cite{ezra2018pricing} provided upper and lower bounds on the largest fraction of the optimal social welfare that can be guaranteed with static prices for several classes of valuations, such as submodular, XOS, or subadditive. A setting related to online bipartite matching, called the Max-Min Greedy matching, was considered in \cite{eden2019max}.

\paragraph*{Our results}

In the present paper, we concentrate on combinatorial markets with two buyers having matroid rank valuations, where the matroid corresponding to buyer $i$ is denoted by $M_i=(S,\cB_i)$ for $i=1,2$. 
Since this setting is reduced to Conjecture~\ref{conj:01}, in which each buyer has to buy a set of items that forms a basis of a matroid, we focus on Conjecture~\ref{conj:01}. 

While Conjecture~\ref{conj:01} remains open in general, we give polynomial-time\footnote{In matroid algorithms, it is usually assumed that the matroids are accessed through independence oracles, and the complexity of an algorithm is measured by the number of oracle calls and other conventional elementary steps.} algorithms for two important special cases. In the first one, one of the matroids is a partition matorid. Although partition matroids have relatively simple structure, finding the proper price vector $p$ is non-trivial even in this seemingly simple case.

\begin{restatable}{theorem}{partition}
\label{thm:partition}
Let $M_1$ be a partition matroid with partition classes of size at most $2$ and with all-ones upper bound on the partition classes, and let $M_2$ be an arbitrary matroid. Then Conjectures~\ref{conj:01} and~\ref{conj:00} hold, and a price vector $p$ satisfying the conditions can be computed in polynomial time.
\end{restatable}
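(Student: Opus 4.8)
The plan is to simplify the statement, dispatch the first requirement for free, and reduce the second to an ordering problem whose resolution is the technical core.

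\emph{Reductions and reformulation.} A singleton partition class $\{s\}$ forces $s$ into every $M_1$-basis, so we may delete $s$ from $M_2$ and drop the class; hence assume every class has size exactly two, $P_i=\{x_i,y_i\}$ for $i=1,\dots,k$. The hypothesis then gives disjoint $B_1\in\mathcal{B}_1$, $B_2\in\mathcal{B}_2$ with $B_2=S\setminus B_1$ and $|B_1|=|B_2|=k$, so $r_{M_2}(S)=k$. Since a partition matroid whose classes are two-element sets with all-ones caps is self-dual, the condition $S\setminus B_1\in\mathcal{B}_2$ just says $B_1\in\mathcal{B}_{M_2^{*}}$, and the two requirements become: (i) the complement of a cheapest $M_1$-basis lies in $\mathcal{B}_2$; (ii) every cheapest basis of $M_2$ is a transversal of the partition, i.e.\ meets each $P_i$ in exactly one element. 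I would build a price vector with pairwise distinct values, so that both $\argmin$'s are singletons and it suffices to control the two greedy optima.

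\emph{Requirement (i) for free; reducing (ii) to ordering.} The hypothesis supplies $B^{*}:=S\setminus B_1\in\mathcal{B}_2$, a set that is at once a transversal and a basis of $M_2$, i.e.\ a common basis of $M_1$ and $M_2$ (computable by matroid intersection in polynomial time). Make, inside each class, the element of $B^{*}$ strictly the dearer one; then the unique cheapest $M_1$-basis is $S\setminus B^{*}$, whose complement $B^{*}$ lies in $\mathcal{B}_2$, so (i) holds regardless of the remaining magnitudes. The task is thus to pick those magnitudes — equivalently a linear order on $S$ refining the per-class comparisons — so that the greedy minimum-weight basis of $M_2$ is a transversal. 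Such a greedy basis $G$ picks, from each class, either the cheap element $x_i$ or the dear element $y_i:=B^{*}\cap P_i$, so it arises from $B^{*}$ by flipping $y_i\leftrightarrow x_i$ inside some set $Z$ of classes; I want $G$ to remain a basis (it is then automatically a transversal), and I will place the cheap elements of the flipped classes first, then the unflipped classes suitably ordered, then the dear elements of the flipped classes last.

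\emph{The dependency digraph and the obstacle.} On the unflipped classes form the digraph with an arc $j\to j'$ ($j\neq j'$) whenever $y_{j'}$ is in the fundamental circuit of $x_j$ with respect to $G$. If this digraph is acyclic and no unflipped class is \emph{flippable} (i.e.\ $G-y_j+x_j\notin\mathcal{B}_2$), then ordering the unflipped classes so that $j'$ precedes $j$ for every arc $j\to j'$ works: a short induction along the sweep shows that the greedy skips every such $x_j$ (its fundamental circuit is already spanned) and adds every $y_j$ (never in the span of the already-chosen part of the basis $G$), so the greedy basis is exactly $G$. The flippable classes and the directed cycles are the only obstructions; I would remove them by enlarging $Z$ — flipping a flippable class singly, and flipping a minimal directed cycle of classes simultaneously — with standard matroid basis-exchange arguments (e.g.\ the unique-matching lemma) guaranteeing that the result is again a basis, and again a transversal because every swap stays inside a class. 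The main obstacle is to choose $Z$ coherently — e.g.\ processing the strongly connected components of the digraph in reverse order of the condensation and flipping exactly the non-trivial ones — so that the procedure terminates with a transversal basis $G$ and a compatible order, and then to verify that the resulting greedy minimum-weight basis of $M_2$ is $G$; this is a finite combinatorial induction on the structure of the dependency digraph. All ingredients (matroid intersection, fundamental circuits, strongly connected components, simultaneous exchanges) are polynomial, which yields the algorithm, and Conjecture~\ref{conj:00} in this case then follows from Conjecture~\ref{conj:01} via Lemma~\ref{lem:00to01}.
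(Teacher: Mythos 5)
Your scaffolding is sound and closely parallels the paper's proof in primal language: fixing $B^{*}=S\setminus B_1\in\mathcal{B}_1\cap\mathcal{B}_2$, making the $B^{*}$-element the dearer one in each class (so requirement (i) is automatic — the paper has the same freedom, as its $B_1$ is an arbitrary common basis), building an exchange/dependency digraph, pricing by a topological order, and verifying via a greedy sweep that the optimum of the second buyer is exactly the target basis (this sweep is the paper's Claim~\ref{clm:03b}). However, the heart of the proof is missing: you never establish that a target transversal basis $G$ of $M_2$ exists for which there is no flippable unflipped class and the dependency digraph on unflipped classes is acyclic. You explicitly label this ``the main obstacle,'' offer two different and unreconciled strategies, and assert that ``a finite combinatorial induction'' finishes it without carrying it out. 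Moreover, the one-shot strategy you lean on — compute the condensation of the initial digraph and flip the non-trivial strongly connected components — is not justified: flipping an entire non-trivial SCC is not a valid basis exchange (Theorem~\ref{thm:krog} needs a \emph{unique} perfect matching on the symmetric difference, which a chordless cycle provides but a general SCC does not), and after any flip the digraph is recomputed with respect to the new $G$, so a static condensation order of the original digraph carries no information. The iterative variant (flip flippable classes singly, then chordless cycles) can be made to work — the set of flipped classes only grows, so it terminates, and chordlessness plus the absence of flippable classes gives the unique matching needed for the cycle flip — but none of this is argued, and it is precisely the non-trivial content of the theorem.

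For comparison, the paper closes exactly this gap with a single extremal choice instead of an iterative procedure: it takes the pair of common bases minimizing $|B_1\cap B_2|$ (in your coordinates, a transversal basis $G$ of $M_2$ disagreeing with $B^{*}$ on as many classes as possible). Then any flippable class or chordless dicycle in the exchange digraph would, via Theorem~\ref{thm:krog} together with the observation that in $M_1$ an exchange $B-x+y$ is possible iff $\{x,y\}$ is a partition class, produce a common basis with strictly smaller intersection, a contradiction; acyclicity and the absence of flippable classes thus come for free, and the rest is the topological-order pricing and greedy verification you already have. (The paper also works in the dual formulation, Conjecture~\ref{conj:02}, which lets one digraph handle both buyers symmetrically; your self-duality observation for $M_1$ plays the corresponding role.) As written, though, your argument does not prove the existence of the required $G$, so the proof is incomplete at its central step; your minor reduction discarding singleton classes also needs a word (set their price high, using that the hypothesis guarantees $M_2$ has a basis avoiding them), but that is routine.
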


Next we consider strongly base orderable matroids, a class of matroids with distinctive structural properties. 
Roughly, in a strongly base orderable matroid, for any pair of bases, there exists a bijection between them satisfying a certain property (see Section~\ref{sec:prelim} for the formal definition). 
Note that various matroids appearing in combinatorial and graph optimization problems belong to this class, such as partition, laminar, transversal matroids, or more generally, gammoids. 

\begin{restatable}{theorem}{sbo}
\label{thm:sbo}
If both $M_1$ and $M_2$ are strongly base orderable, then Conjectures~\ref{conj:01} and~\ref{conj:00} hold. Furthermore, a price vector $p$ satisfying the conditions can be computed in polynomial time if, for any pair of bases, the bijection between them can be computed in polynomial time.
\end{restatable}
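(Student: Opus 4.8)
The plan is to reduce to Conjecture~\ref{conj:01} using Lemma~\ref{lem:00to01}, and, by the perturbation remark in the footnote, to produce a price vector $p$ with pairwise distinct values — equivalently, a linear order $\prec$ on $S$, where a smaller $p$-value comes first. Fix disjoint bases $B_1^\ast\in\mathcal B_1$ and $B_2^\ast:=S\setminus B_1^\ast\in\mathcal B_2$. For a linear order $\prec$, the set $\argmin_{X\in\mathcal B_i}p(X)$ is the single basis $G_i(\prec)$ produced by the greedy algorithm on $M_i$, so the two conditions of Conjecture~\ref{conj:01} become: the set of elements \emph{rejected} by the greedy algorithm on $M_1$ is a basis of $M_2$, and the set rejected by the greedy algorithm on $M_2$ is a basis of $M_1$.

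First I would secure condition~1 for a whole family of orders. For $s\in B_2^\ast$ let $C_1(s):=C_{M_1}(s,B_1^\ast)\setminus\{s\}\subseteq B_1^\ast$ be the fundamental circuit of $s$ with respect to $B_1^\ast$, and let $\pi$ be the partial order generated by ``$t\prec s$ for all $s\in B_2^\ast$ and $t\in C_1(s)$''; it is acyclic since all its relations go from $B_1^\ast$ to $B_2^\ast$. An easy induction on the processed prefix shows that for \emph{every} linear extension $\prec$ of $\pi$ the greedy run on $M_1$ accepts exactly $B_1^\ast$ and rejects exactly $B_2^\ast$: each element of $B_1^\ast$ is accepted because $B_1^\ast$ is independent in $M_1$, and each $s\in B_2^\ast$ is rejected because by the time it is processed the set $C_1(s)$ has already been accepted and $s\in\operatorname{cl}_{M_1}(C_1(s))$. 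Hence $G_1(\prec)=B_1^\ast$ and condition~1 holds for any such $\prec$.

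It remains to choose the linear extension of $\pi$ so that the set $R_2$ of elements rejected by the greedy run on $M_2$ is a basis of $M_1$; this is where strong base orderability of \emph{both} matroids is used. I would build $\prec$ incrementally: the already-placed prefix fixes a partial greedy base of $M_2$, and I would use the exchange bijection guaranteed by strong base orderability of $M_2$ (between $B_2^\ast$ and that partial base) to select the next element to place, in a way that keeps the set of $M_2$-rejected elements obtained so far extendible, within $S$, to a basis of $M_1$; the latter extendibility is certified using the exchange bijection for $M_1$ relative to $B_1^\ast$. Since strong base orderability is closed under taking minors, placing an element passes to a minor of both matroids that still carries the disjoint-bases property, so the construction can be set up as a recursion on a shrinking ground set; at the end, setting $p(s)$ to the position of $s$ in $\prec$ yields the desired price vector, and Conjecture~\ref{conj:00} then follows from Lemma~\ref{lem:00to01}.

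I expect the coordination in the third paragraph to be the main obstacle. One-sided arguments fail: placing all of $B_1^\ast$ first makes condition~1 free but can violate condition~2, placing all of $B_2^\ast$ first does the opposite, and small examples show that no two-level price vector need exist, so the two greedy runs have to be played off against each other — which is exactly what the compatibility of the two exchange bijections should buy us. The polynomial running time is then immediate: fundamental circuits, closures, greedy runs, and every recursive step invoke only the independence oracles and the assumed bijection oracles a polynomial number of times.
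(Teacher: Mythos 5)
Your reduction in the first two paragraphs is sound as far as it goes: with distinct prices the minimizers are the greedy bases, and imposing the partial order ``fundamental circuit of $s$ with respect to $B_1^\ast$ precedes $s$'' does force the $M_1$-greedy run to accept exactly $B_1^\ast$ under every linear extension. But the third paragraph, which you yourself flag as the main obstacle, is exactly where the theorem lives, and it is a plan rather than an argument. You never state the invariant that the incremental construction maintains, never prove that a next element compatible with $\pi$ and with the invariant always exists, and never show that at termination the set rejected by the $M_2$-greedy run is a basis of $M_1$. ``Use the exchange bijection of $M_2$ to select the next element, certified by the exchange bijection of $M_1$'' is not a mechanism: the bijections of the two matroids are unrelated to each other, and making them cooperate is precisely the coordination problem. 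There is also a structural worry about your setup: you commit in advance to an arbitrary disjoint pair $(B_1^\ast,S\setminus B_1^\ast)$ and to forcing the $M_1$-greedy output to be $B_1^\ast$ via the circuit constraints, which may overconstrain the order before the $M_2$ side is even considered; the paper does not fix the target bases arbitrarily but chooses them by an extremal condition that is essential to its argument.

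For comparison, the paper's proof works in the dual formulation (Conjecture~\ref{conj:02}), picks two common bases $B_1,B_2\in\mathcal{B}_1\cap\mathcal{B}_2$ minimizing $|B_1\cap B_2|$, and builds the bipartite exchange digraph $D$ on $(B_1\cap B_2)\cup(S\setminus(B_1\cup B_2))$. The crux is that $D$ is acyclic: a chordless dicycle would, via Theorem~\ref{thm:krog} applied to the union matroids $2M_i^+$ (built with parallel copies), produce a set that is a basis of both unions, and then the Davies--McDiarmid theorem (Theorem~\ref{thm:DM76}) --- this is where strong base orderability enters --- splits it into two common bases with strictly smaller intersection, contradicting minimality. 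A topological order of $D$, with $B_1\setminus B_2$ priced lowest and $B_2\setminus B_1$ highest, then makes $B_1$ and $B_2$ the unique optimizers by the greedy analysis of Claim~\ref{clm:03b}. None of these ingredients (the minimal-intersection pair, Krogdahl's unique-matching theorem, Davies--McDiarmid) has a counterpart in your sketch, and without some substitute for them your incremental scheme has no reason to succeed; as it stands the proposal has a genuine gap at its central step.
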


Another contribution of this paper is to show the equivalence between Conjecture~\ref{conj:01} and its weighted counterpart as below.

\begin{conjecture} \label{conj:03}
For $i \in \{1, 2\}$, let $M_i=(S, \mathcal{B}_i)$ be a matroid and $w_i: S \to \mathbb{R}$ be a weight function.   Assume that there exist disjoint bases $B_1 \in \mathcal{B}_1$ and $B_2 \in \mathcal{B}_2$ with $B_1 \cup B_2 = S$.  Then, there exists a function $p: S \to \mathbb{R}$ satisfying the following conditions.  
\begin{enumerate}
\item For any $B_1 \in \argmax_{X \in \mathcal{B}_1} (w_1(X) - p(X))$, we have that 
$B_1$ is a maximizer of $w_1(X) + w_2(S \setminus X)$ subject to $X \in \mathcal{B}_1$ and $S \setminus X \in \mathcal{B}_2$.  
\item For any $B_2 \in \argmax_{X \in \mathcal{B}_2} (w_2(X) - p(X))$, we have that 
$B_2$ is a maximizer of $w_1(S \setminus X) + w_2(X)$ subject to $S \setminus X \in \mathcal{B}_1$ and $X \in \mathcal{B}_2$.  
\end{enumerate}
\end{conjecture}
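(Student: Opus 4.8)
The plan is to establish that Conjectures~\ref{conj:01} and~\ref{conj:03} are equivalent. The implication ``Conjecture~\ref{conj:03} $\Rightarrow$ Conjecture~\ref{conj:01}'' is immediate by taking $w_1 \equiv w_2 \equiv 0$: then $\argmax_{X \in \mathcal{B}_i}(w_i(X) - p(X)) = \argmin_{X \in \mathcal{B}_i} p(X)$, and ``$B_1$ is a maximizer of the identically-zero objective subject to $X \in \mathcal{B}_1$ and $S \setminus X \in \mathcal{B}_2$'' simply says $S \setminus B_1 \in \mathcal{B}_2$ (the constraint set being nonempty by hypothesis), which is condition~1 of Conjecture~\ref{conj:01}. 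So the substance is the reverse implication; I will also arrange the reduction to be polynomial-time, so that an algorithm for the unweighted case yields one for the weighted case.

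For the reverse implication I would first recast Conjecture~\ref{conj:03} in matroid-intersection terms. Put $N := M_2^{*}$ and $w := w_1 - w_2$. A set $X$ satisfies $X \in \mathcal{B}_1$ and $S \setminus X \in \mathcal{B}_2$ if and only if $X$ is a common base of $M_1$ and $N$ (so the hypothesis of Conjecture~\ref{conj:03} says exactly that $M_1$ and $N$ have a common base, hence are of the same rank), and since $w_1(X) + w_2(S \setminus X) = w(X) + w_2(S)$, condition~1 asks that every maximizer of $w_1 - p$ over $\mathcal{B}_1$ be a maximum-$w$-weight common base of $M_1$ and $N$; symmetrically, condition~2 asks that for every maximizer $B_2$ of $w_2 - p$ over $\mathcal{B}_2$ the set $S \setminus B_2$ be such a common base. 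I would also record the standard fact that for a matroid $M$ and weight $g$ the family of maximum-$g$-weight bases of $M$ is itself the base family of a matroid $M^{(g)}$ --- explicitly $M^{(g)} = \bigoplus_j (M / T_{j-1})|S_j$, where $S_1, S_2, \dots$ are the level sets of $g$ in decreasing order of value and $T_j := S_1 \cup \dots \cup S_j$ --- which is computable from a rank oracle for $M$ and satisfies $(M^{(g)})^{*} = (M^{*})^{(-g)}$.

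The key tool is Frank's weight-splitting theorem for weighted matroid intersection: there is a decomposition $w = g_1 + g_2$ and a common base $B^{*}$ of $M_1$ and $N$ such that $B^{*}$ is a maximum-$g_1$-weight base of $M_1$ and a maximum-$g_2$-weight base of $N$, and all of these are computable in polynomial time. Combining this with the previous paragraph, a short argument --- if $B$ is $g_1$-maximal in $M_1$ and $g_2$-maximal in $N$ then $w(B) = g_1(B) + g_2(B) \ge g_1(B') + g_2(B') = w(B')$ for every common base $B'$, and conversely a $w$-maximal common base must be extremal for both $g_1$ and $g_2$ by comparison with $B^{*}$ --- shows that the maximum-$w$-weight common bases of $M_1$ and $N$ are precisely the common bases of $M_1' := M_1^{(g_1)}$ and $N^{(g_2)}$. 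Setting $M_2' := (N^{(g_2)})^{*}$, so that $\mathcal{B}(M_2')$ consists of the minimum-$g_2$-weight bases of $M_2$, the bases $B^{*} \in \mathcal{B}(M_1')$ and $S \setminus B^{*} \in \mathcal{B}(M_2')$ are disjoint with union $S$; hence $(M_1', M_2')$ satisfies the hypothesis of Conjecture~\ref{conj:01}, which I invoke to obtain a price vector $p' : S \to \mathbb{R}$ such that every $p'$-minimum base of $M_1'$ (resp.\ of $M_2'$) has complement in $\mathcal{B}(M_2')$ (resp.\ in $\mathcal{B}(M_1')$).

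Finally I would set, for a sufficiently small $\varepsilon > 0$,
\[
p := (w_1 - g_1) + \varepsilon\, p',
\]
using the identity $w_1 - g_1 = w_2 + g_2$ (immediate from $g_1 + g_2 = w_1 - w_2$), which is what lets one price vector serve both buyers. Then $w_1 - p = g_1 - \varepsilon p'$, so for $\varepsilon$ small enough relative to the finitely many $g_1$- and $p'$-values involved, the maximizers of $w_1 - p$ over $\mathcal{B}_1$ are exactly the $p'$-minimum bases of $M_1'$; by the property of $p'$ each such base also lies in $\mathcal{B}(N^{(g_2)})$, hence is a common base of $M_1'$ and $N^{(g_2)}$, hence a maximum-$w$-weight common base of $M_1$ and $N$, which is condition~1. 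Symmetrically $w_2 - p = -g_2 - \varepsilon p'$, so the maximizers of $w_2 - p$ over $\mathcal{B}_2$ are the $p'$-minimum bases of $M_2'$, and since $(M_2')^{*} = N^{(g_2)}$, the property of $p'$ gives condition~2. Every step is polynomial (Frank's splitting, the minor constructions for $M_1'$, $M_2'$, $N^{(g_2)}$, and a rational $\varepsilon$ read off from the input weights), so the reduction is polynomial-time. The main obstacle is the interface with Frank's theorem: proving both inclusions in the identification of the maximum-$w$-weight common bases with the common bases of $M_1^{(g_1)}$ and $N^{(g_2)}$, and checking that the level-set constructions $M^{(g)}$ are genuine matroids equipped with polynomial-time rank oracles; the choice of $\varepsilon$ and the bit-complexity bookkeeping are routine once the identity $w_1 - g_1 = w_2 + g_2$ is in hand.
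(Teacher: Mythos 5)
Your proposal is correct and follows essentially the same route as the paper: apply Frank's weight-splitting theorem to $M_1$ and $M_2^*$ with weight $w_1-w_2$, pass to the matroids of maximum-weight bases, invoke the unweighted statement (Conjecture~\ref{conj:01}, equivalently Conjecture~\ref{conj:02}) for that pair, and output $p = w_1 - q_1 + \varepsilon\,\hat p$ with $\varepsilon$ small enough not to disturb the weight-splitting optima. The only differences are cosmetic (you phrase the intermediate step via the dual pair $(M_1',M_2')$ and spell out the polynomial-time and level-set details that the paper relegates to a remark).
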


Clearly, Conjecture~\ref{conj:01} is a special case of Conjecture~\ref{conj:03}; this follows easily by setting $w_1 \equiv w_2 \equiv 0$. Somewhat surprisingly, the reverse implication also holds for arbitrary matroids. 

\begin{restatable}{theorem}{unweighttoweight}
\label{thm:unweight2weight}
If Conjecture~\ref{conj:01} is true, then Conjecture~\ref{conj:03} is also true. 
\end{restatable}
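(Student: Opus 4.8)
The plan is to reduce the weighted statement to a maximum‑weight matroid intersection problem, split the weight with Frank's theorem, and then feed two auxiliary matroids built from the split into Conjecture~\ref{conj:01}. Write $M_2^*=(S,\mathcal{B}_2^*)$ for the dual of $M_2$ and put $w:=w_1-w_2$. The constraint occurring in Conjecture~\ref{conj:03} — namely $X\in\mathcal{B}_1$ and $S\setminus X\in\mathcal{B}_2$ — says exactly that $X$ is a common basis of $M_1$ and $M_2^*$; and since $w_1(X)+w_2(S\setminus X)=w_2(S)+w(X)$, the objectives in conditions~1 and~2 are, up to the constant $w_2(S)$, the $w$‑weight of such a common basis. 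The hypothesis of Conjecture~\ref{conj:03} gives $B_1\in\mathcal{B}_1\cap\mathcal{B}_2^*$, so a maximum‑$w$‑weight common basis $X^*$ of $M_1$ and $M_2^*$ exists. Frank's weight‑splitting theorem then supplies a decomposition $w=w^{(1)}+w^{(2)}$ with
\[
\max_{Y\in\mathcal{B}_1\cap\mathcal{B}_2^*} w(Y)\;=\;\max_{Y\in\mathcal{B}_1} w^{(1)}(Y)\;+\;\max_{Y\in\mathcal{B}_2^*} w^{(2)}(Y),
\]
from which it follows that the maximum‑$w$‑weight common bases of $M_1$ and $M_2^*$ are precisely the sets that are simultaneously a maximum‑$w^{(1)}$‑weight basis of $M_1$ and a maximum‑$w^{(2)}$‑weight basis of $M_2^*$ (one direction is immediate from the identity; the other comes from forcing equality in $\max_{\mathcal{B}_1} w^{(1)}+\max_{\mathcal{B}_2^*} w^{(2)}\ge w^{(1)}(Y)+w^{(2)}(Y)=w(Y)$).

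Next I would pass to auxiliary matroids, using the classical fact that for any matroid $N$ and weight $v$ the maximum‑$v$‑weight bases of $N$ form the base family of a matroid (obtained from the heaviest weight class downward as an iterated direct sum of a restriction of $N$ with the analogous matroid of a contraction of $N$ — this is the structure underlying the greedy algorithm). Let $\tilde M_1$ be the matroid whose bases are the maximum‑$w^{(1)}$‑weight bases of $M_1$, and let $\tilde M_2$ be the matroid whose bases are the minimum‑$w^{(2)}$‑weight bases of $M_2$, equivalently the complements of the maximum‑$w^{(2)}$‑weight bases of $M_2^*$. By the characterization above, $X^*\in\mathcal{B}(\tilde M_1)$ and $S\setminus X^*\in\mathcal{B}(\tilde M_2)$, so $\tilde M_1$ and $\tilde M_2$ admit disjoint bases whose union is $S$. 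Conjecture~\ref{conj:01}, assumed true, therefore yields a price vector $p$ such that every $B\in\argmin_{X\in\mathcal{B}(\tilde M_1)} p(X)$ satisfies $S\setminus B\in\mathcal{B}(\tilde M_2)$, and symmetrically every $B\in\argmin_{X\in\mathcal{B}(\tilde M_2)} p(X)$ satisfies $S\setminus B\in\mathcal{B}(\tilde M_1)$.

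Then I would take the price vector for the weighted instance to be $q:=w_1-w^{(1)}+p$, after first rescaling $p$ (which changes neither of its defining properties, as $\argmin$ is scale‑invariant) so small that the $p$‑term cannot reverse the $w^{(1)}$‑ordering of bases of $M_1$ nor the $w^{(2)}$‑ordering of bases of $M_2$. For a basis $X$ of $M_1$ we have $w_1(X)-q(X)=w^{(1)}(X)-p(X)$, so maximizing $w_1-q$ over $\mathcal{B}_1$ first forces $X$ into $\mathcal{B}(\tilde M_1)$ (where $w^{(1)}$ is maximal) and then minimizes $p(X)$; hence $\argmax_{X\in\mathcal{B}_1}(w_1(X)-q(X))=\argmin_{X\in\mathcal{B}(\tilde M_1)} p(X)$. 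Crucially, using $w^{(1)}+w^{(2)}=w=w_1-w_2$, for a basis $X$ of $M_2$ we get $w_2(X)-q(X)=-w^{(2)}(X)-p(X)$, so by the same reasoning $\argmax_{X\in\mathcal{B}_2}(w_2(X)-q(X))=\argmin_{X\in\mathcal{B}(\tilde M_2)} p(X)$. Plugging these two identities into the two conclusions of Conjecture~\ref{conj:01} for $\tilde M_1,\tilde M_2$, and rewriting ``$B\in\mathcal{B}(\tilde M_1)$ with $S\setminus B\in\mathcal{B}(\tilde M_2)$'' via the characterization above as ``$B$ is a maximum‑$w$‑weight common basis of $M_1$ and $M_2^*$'', equivalently ``$B$ maximizes $w_1(X)+w_2(S\setminus X)$ subject to $X\in\mathcal{B}_1$ and $S\setminus X\in\mathcal{B}_2$'', yields exactly conditions~1 and~2 of Conjecture~\ref{conj:03} for $q$.

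The step I expect to be the crux is producing a single price vector that governs both buyers at once: a priori, buyer~$1$ responds to $w^{(1)}$ on $M_1$ and buyer~$2$ to $w^{(2)}$ on $M_2^*$, which look unrelated. What rescues the construction is the splitting identity $w^{(1)}+w^{(2)}=w_1-w_2$, which pins the coefficient of $w^{(1)}$ in $w_1-q$ to be exactly $+1$ and hence makes $w_2-q$ collapse to $-w^{(2)}$ plus the arbitrarily small tie‑breaking perturbation $-p$; the remaining ingredients — that maximum‑weight bases of a matroid again form a matroid, the complementary‑bases check for $\tilde M_1,\tilde M_2$, and the smallness of $p$ — are routine.
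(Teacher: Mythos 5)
Your proposal is correct and follows essentially the same route as the paper: reduce to maximum-weight matroid intersection of $M_1$ and $M_2^*$ with weight $w_1-w_2$, apply Frank's weight-splitting theorem, pass to the matroids of weight-optimal bases, invoke the unweighted conjecture on them, and take the final price to be $w_1-w^{(1)}$ plus a sufficiently small multiple of the unweighted price. The only cosmetic difference is that you apply Conjecture~\ref{conj:01} to $\tilde M_1$ and the complement matroid $\tilde M_2$, while the paper applies the equivalent dual reformulation (Conjecture~\ref{conj:02}) to the maximizer matroids on $\mathcal{B}_1$ and $\mathcal{B}_2^*$.
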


More generally, we prove that Theorem~\ref{thm:unweight2weight} can be generalized to the case with gross substitutes valuations, i.e., M${}^\natural$-concave functions.
See Theorem~\ref{thm:MnaturalReduction} in Section~\ref{sec:Mnatural} for the details.

Based on Theorem~\ref{thm:unweight2weight} and the properties of partition and strongly base orderable matroids, we have the following corollaries.

\begin{corollary}
\label{cor:partitionw}
Let $M_1$ be a partition matroid with partition classes of size at most $2$ and with all-ones upper bound on the partition classes, and let $M_2$ be an arbitrary matroid. 
Then Conjecture~\ref{conj:03} holds, and a price vector $p$ satisfying the conditions can be computed in polynomial time.
\end{corollary}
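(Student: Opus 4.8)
The plan is to obtain Corollary~\ref{cor:partitionw} by composing Theorem~\ref{thm:unweight2weight} with Theorem~\ref{thm:partition}, taking care of the polynomial-time claim. Existence is essentially immediate: Theorem~\ref{thm:unweight2weight} says that if Conjecture~\ref{conj:01} holds for all pairs of matroids then Conjecture~\ref{conj:03} holds, and Theorem~\ref{thm:partition} establishes Conjecture~\ref{conj:01} whenever $M_1$ is a partition matroid with classes of size at most $2$ and all-ones upper bound and $M_2$ is arbitrary. So Conjecture~\ref{conj:03} holds in particular for such $(M_1,M_2)$, and the only real work is to make the construction effective.

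To do this I would look inside the reduction behind Theorem~\ref{thm:unweight2weight}. Given the weighted instance $(M_1,w_1,M_2,w_2)$, maximizing $w_1(X)+w_2(S\setminus X)$ over $X\in\mathcal B_1$ with $S\setminus X\in\mathcal B_2$ is, up to the additive constant $w_2(S)$, the problem of finding a maximum-weight common basis of $M_1$ and the dual matroid $M_2^\ast$ under the weight $w_1-w_2$; this is solvable in polynomial time. Frank's weight-splitting theorem then produces, also in polynomial time, a decomposition of the weight together with the deletions and contractions of $M_1$ (and the complementary operations on $M_2$) that isolate an unweighted sub-instance on which Conjecture~\ref{conj:01} is to be invoked, possibly as a disjoint union of pieces, one per weight level. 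The structural point I would emphasize is that any minor of a partition matroid with classes of size at most $2$ and all-ones upper bound is again such a partition matroid once loops are stripped: a deletion merely shrinks a class, while contracting an element of a two-element class turns its partner into a loop, and a loop of $M_1$ belongs to no basis of $M_1$, so it can be removed from $M_1$ and forced into the basis of $M_2$. Hence each $M_1$-side minor handed to Theorem~\ref{thm:partition} by the reduction still meets that theorem's hypotheses, so a price vector for the unweighted sub-instance is computable in polynomial time; pulling it back through the (polynomial) reduction yields the desired $p$ for $(M_1,w_1,M_2,w_2)$, whose validity for the two conditions of Conjecture~\ref{conj:03} is exactly the content of the correctness proof of Theorem~\ref{thm:unweight2weight}.

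In summary, the steps are: (i) solve the associated weighted matroid intersection problem on $M_1$ and $M_2^\ast$; (ii) apply Frank's weight-splitting theorem to obtain the split and the minors, checking that the $M_1$-side minors remain admissible partition matroids after discarding loops; (iii) apply Theorem~\ref{thm:partition} to the unweighted sub-instance(s) in polynomial time, noting that for a disjoint union a valid price vector is obtained piecewise since a minimum-price basis of a direct sum is the union of minimum-price bases of the summands; (iv) reassemble and lift the result to a price vector for the original weighted instance. I expect the only delicate point to be step (ii): one must verify that the operations the weight-splitting step performs on $M_1$ are genuine minor operations, and not truncations or more exotic constructions that would destroy the partition structure, and that the loops created by contraction are handled consistently on both sides (deleted from $M_1$, contracted in $M_2$). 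Everything else is a mechanical composition of Theorems~\ref{thm:partition} and~\ref{thm:unweight2weight}.
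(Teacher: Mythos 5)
Your overall plan---compose Theorem~\ref{thm:unweight2weight} with Theorem~\ref{thm:partition} and check that the unweighted instance created by the reduction still has an admissible partition matroid on the first side---is exactly the paper's route, and the existence part of your first paragraph is fine. The gap is in your description of what the reduction actually hands to the unweighted theorem. The proof of Theorem~\ref{thm:unweight2weight} performs no deletions or contractions and produces no ``sub-instance per weight level'': after applying Theorem~\ref{thm:Frank} to $M_1$ and $M_2^*$ with weight $w_1-w_2$ to get a split $q_1+q_2=w_1-w_2$, it applies Conjecture~\ref{conj:02} \emph{once, globally,} to the pair $\hat M_1=(S,\hat{\mathcal{B}}_1)$, $\hat M_2=(S,\hat{\mathcal{B}}_2)$ on the full ground set, where $\hat{\mathcal{B}}_1=\argmax_{X\in\mathcal{B}_1}q_1(X)$ and $\hat{\mathcal{B}}_2=\argmax_{X\in\mathcal{B}_2^*}q_2(X)$, and then sets $p=w_1-q_1+\varepsilon\hat p$. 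The only structural fact needed for the corollary is that $\hat M_1$ is again a partition matroid with classes of size at most $2$ and all-ones bounds: within each class, either the two $q_1$-values tie (the class survives) or the class effectively shrinks to its heavier element. This is close in spirit to your ``minors stripped of loops'' remark, but it is a statement about a single matroid on $S$, not about a family of minors.

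Taken literally, your per-weight-level decomposition with piecewise application of Theorem~\ref{thm:partition} does not work, for two reasons. First, the level sets of $q_1$ and of $q_2$ are different partitions of $S$ in general, so while $\hat M_1$ decomposes as a direct sum along the $q_1$-levels, $\hat M_2$ does not; there is no common collection of ``pieces'' on which both matroids restrict to an instance of Conjecture~\ref{conj:01}/\ref{conj:02}, and a price vector assembled piecewise along the $q_1$-levels gives you no control over the global maximizers in $\hat{\mathcal{B}}_2$, i.e.\ no way to verify the second condition of Conjecture~\ref{conj:03} (buyer $2$'s condition). Second, genuine deletions and contractions change the ground set, whereas Conjecture~\ref{conj:03} requires one price vector on all of $S$ serving both buyers simultaneously; your ``loops forced into the basis of $M_2$'' step is exactly the kind of coordination that the global application of Conjecture~\ref{conj:02} handles and a piecewise construction does not. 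You in fact flag step (ii) as unverified, and it is the step that fails. The repair is simpler than what you propose: drop the decomposition entirely, observe that $\hat M_1$ is an admissible partition matroid and $\hat M_2$ is an arbitrary matroid with $\hat{\mathcal{B}}_1\cap\hat{\mathcal{B}}_2\neq\emptyset$, apply Theorem~\ref{thm:partition} (in its Conjecture~\ref{conj:02} form) once to get $\hat p$, and conclude with $p=w_1-q_1+\varepsilon\hat p$ as in the proof of Theorem~\ref{thm:unweight2weight}; polynomiality follows since $q_1,q_2$ come from a weighted matroid intersection computation, an independence oracle for $\hat M_1$ and $\hat M_2$ is readily available, and $\varepsilon$ can be chosen as in the paper's remark. (The analogous preservation statement for the strongly base orderable case is Lemma~\ref{lem:maximizerSBO}.)
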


\begin{corollary}
\label{cor:sbow}
If both $M_1$ and $M_2$ are strongly base orderable, then Conjecture~\ref{conj:03} holds. Furthermore, a price vector $p$ satisfying the conditions can be computed in polynomial time if, for any pair of bases, the bijection between them can be computed in polynomial time. 
\end{corollary}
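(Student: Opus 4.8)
The plan is to re-examine the reduction underlying Theorem~\ref{thm:unweight2weight} and to verify that, when $M_1$ and $M_2$ are strongly base orderable, it outputs an unweighted instance that is again strongly base orderable; one may then invoke Theorem~\ref{thm:sbo} in place of the (currently unavailable) general form of Conjecture~\ref{conj:01}. Concretely, the proof of Theorem~\ref{thm:unweight2weight} uses Frank's weight-splitting theorem to pass from $(M_1,w_1),(M_2,w_2)$ to an unweighted pair of auxiliary matroids built from $M_1$ and $M_2$ by (i) passing to duals and (ii) restricting a matroid to its maximum-weight bases with respect to a fixed weight function, in such a way that the disjoint-spanning-bases hypothesis of Conjecture~\ref{conj:01} is inherited and a price vector for the unweighted instance can be lifted back to one for the weighted instance. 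A pure black-box use of Theorem~\ref{thm:unweight2weight} would require Conjecture~\ref{conj:01} for all matroids, so instead I would track the class membership of the auxiliary matroids through the reduction and feed the resulting instance to Theorem~\ref{thm:sbo}.

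The technical heart is a closure lemma: the class of strongly base orderable matroids is closed under each operation appearing in the reduction. Deletion and contraction preserve strong base orderability (restrict/transport the exchange bijection), direct sums do (combine the bijections on the summands), and duality does as well, since the defining bijection property is self-dual: the $M$-bijection between bases $A,B$ restricts to a bijection $A\setminus B\to B\setminus A$, and its inverse, extended by the identity on $S\setminus(A\cup B)$, is the required bijection for $M^*$. In particular, for strongly base orderable $M$ and weight $w$, the matroid whose bases are the maximum-$w$-weight bases of $M$ is strongly base orderable, because it equals the direct sum $\bigoplus_i (M|T_{i})/T_{i-1}$ over the chain $\emptyset = T_0 \subseteq T_1 \subseteq \cdots$ of unions of the top weight classes — a direct sum of minors of $M$. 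Applying this to $M_1,M_2$ and their duals shows the auxiliary matroids are strongly base orderable, so Theorem~\ref{thm:sbo} supplies a valid price vector for the unweighted instance; lifting it back proves Conjecture~\ref{conj:03} for $(M_1,w_1),(M_2,w_2)$.

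For the running-time claim I would note that Frank's weight-splitting theorem and the level-set decomposition are computable in polynomial time from independence oracles for $M_1,M_2$, so the auxiliary matroids, their independence oracles, and the lifting map are all available in polynomial time; it then remains to run the algorithm of Theorem~\ref{thm:sbo} on the auxiliary instance, which needs a polynomial-time base-to-base exchange bijection for each auxiliary matroid. I would build such a bijection from the assumed ones for $M_1,M_2$: for a summand $(M|T_{i})/T_{i-1}$ this works because any of its bases, padded with a \emph{fixed} maximum-weight completion on the other level sets, is a (maximum-weight, hence ordinary) basis of $M$, so the $M$-bijection between two such padded bases is the identity on the common padding and restricts to the desired bijection on the summand; direct sums and duals are handled exactly as in the closure lemma.

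The step I expect to be the main obstacle is confirming that the reduction in the proof of Theorem~\ref{thm:unweight2weight} really is confined to the operations above — in particular that it nowhere forms a truncation at an intermediate rank or a matroid union, neither of which is known to preserve strong base orderability. Should such an operation occur, I would need either to show it is applied only in a strongly-base-orderable-safe situation (for instance a truncation that coincides with a contraction, or a union with a uniform matroid) or to re-derive that portion of the argument directly within the strongly base orderable setting. A secondary point requiring care is that Frank's weight-splitting must be invoked in a form compatible with the disjoint-spanning-bases hypothesis of Conjecture~\ref{conj:01}, and that the independence and bijection oracles of the auxiliary matroids are genuinely polynomial-time simulable, as sketched above.
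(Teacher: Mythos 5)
Your proposal is correct and follows essentially the same route as the paper: trace the reduction of Theorem~\ref{thm:unweight2weight}, observe that the auxiliary matroids $\hat M_i$ (the maximum-weight-basis restrictions of $M_1$ and of $M_2^*$) are again strongly base orderable, and feed them to Theorem~\ref{thm:sbo}. The only real difference is how closure under maximum-weight restriction is established: you decompose $\hat M$ as a direct sum of minors $\bigoplus_i (M|T_i)/T_{i-1}$ over the weight level sets, whereas the paper's Lemma~\ref{lem:maximizerSBO} shows directly that the exchange bijection between two maximum-weight bases of $M$ preserves $q$-values pointwise and hence already serves as the bijection in $\hat M$ --- a shortcut that also yields the polynomial-time bijection oracle immediately, without the fixed-padding construction (which additionally needs the bijection to fix the common part of the two padded bases).
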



\paragraph*{Paper organization}

The rest of the paper is organized as follows. Basic definitions and notation are given in Section~\ref{sec:prelim}. Theorems~\ref{thm:partition} and \ref{thm:sbo} are proved in Sections~\ref{sec:part} and \ref{sec:sbo}, respectively. The connection between unweighted and weighted variants of the problem is discussed in Section~\ref{sec:weighted}. The reduction technique is extended to gross substitutes valuations in Section~\ref{sec:Mnatural}. We conclude the paper in Section~\ref{sec:conc}.

\section{Preliminaries}\label{sec:prelim}

\paragraph*{Basic notation}

The sets of reals, non-negative reals, integers, and non-negative integers are denoted by $\mathbb{R}$, $\mathbb{R}_+$, $\mathbb{Z}$, and $\mathbb{Z}_+$, respectively. Let $S$ be a finite set. Given a subset $B\subseteq S$ and elements $x,y\in S$, we write $B-x+y$ for short to denote the set $(B\setminus\{x\})\cup\{y\}$. The \emph{symmetric difference} of two sets $X$ and $Y$ is $X\triangle Y:=(X\setminus Y)\cup(Y\setminus X)$. For a function $f:S\to\mathbb{R}$, we use $f(X):=\sum_{x\in X} f(x)$. For two vectors $x, y \in \mathbb{R}^S$, we denote $x \cdot y := \sum_{s\in S} x(s) y(s)$.

\paragraph*{Matroids and matroid intersection}

Matroids were introduced as an abstract generalization of linear independence in vector spaces \cite{whitney1992abstract,nishimura2009lost}. A \emph{matroid} $M$ is a pair $(S,\cI)$ where $S$ is the \emph{ground set} of the matroid and $\cI\subseteq 2^S$ is the family of \emph{independent sets} satisfying the \emph{independence axioms}: (I1) $\emptyset\in\cI$, (I2) $X\subseteq Y\in \cI\Rightarrow X\in\cI$, and (I3) $X,Y\in\cI,\ |X|<|Y|\Rightarrow\exists e\in Y\setminus X\ \text{s.t.}\ X+e\in\cI$. A \emph{loop} is an element that is non-independent on its own. The \emph{rank} of a set $X\subseteq S$ is the maximum size of an independent set contained in $X$, and is denoted by $r(X)$. Here $r$ is called the \emph{rank function} of $M$. Maximal independent sets of $M$ are called \emph{bases} and their set is denoted by $\cB$. Alternatively, matroids can be defined through the \emph{basis axioms}: (B1) $\cB\neq\emptyset$, and (B2) $B_1,B_2\in\cB, x\in B_1\setminus B_2\Rightarrow\exists y\in B_2\setminus B_1\ \text{s.t.}\ B_1-x+y\in\cB$. 
In this paper, a matroid is denoted by a pair $(S, {\cal B})$, where $S$ is a ground set and ${\cal B}$ is a base family.

For a matroid $M=(S, {\cal B})$ and for $T \subseteq S$, \emph{deleting} $T$ gives a matroid $M'$ on the ground set $S \setminus T$ such that a subset of $S \setminus T$ is independent in $M'$ if and only if it is independent in $M$. For $T \subseteq S$, \emph{contracting} $T$ gives a matroid $M'$ on the ground set $S \setminus T$ whose rank function is $r'(X) = r(X \cup T) - r(T)$, where $r$ is the rank function of $M$. \emph{Adding a parallel copy} of an element $s\in S$ gives a new matroid $M'=(S',\cB')$ on ground set $S'=S+s'$ where $\cB'=\{X\subseteq S':\ \text{either}\ X\in\cB,\ \text{or}\ s\notin X,\ s'\in X\ \text{and}\ X-s'+s\in\cB\}$.
The \emph{direct sum} $M_1\oplus M_2$ of matroids $M_1=(S_1,\cB_1)$ and $M_2=(S_2,\cB_2)$ on disjoint ground sets is a matroid $M=(S_1\cup S_2,\cB)$ whose bases are the disjoint unions of a basis $M_1$ and a basis of $M_2$.
The \emph{sum} or \emph{union} $M_1+M_2$ of $M_1=(S,\cB_1)$ and $M_2=(S,\cB_2)$ on the same ground set is a matroid $M=(S,\cB)$ whose independent sets are the disjoint unions of an independent set of $M_1$ and an independent set of $M_2$. 


For a basis $B\in\cB$, let us consider the bipartite graph $G=(S,E[B])$ defined by $E[B]:=\{(x,y)\mid x\in B,\, y\in S\setminus B,\, B-x+y\in\cB\}$. Krogdahl \cite{krogdahl1974combinatorial,krogdahl1976combinatorial,krogdahl1977dependence} verified the following statement (see also \cite[Theorem 39.13]{schrijver2003combinatorial}).

\begin{theorem}[Krogdahl]\label{thm:krog}
Let $M=(S,\cB)$ be a matroid and let $B\in\cB$. Let $B'\subseteq S$ be such that $|B|=|B'|$ and $E[B]$ contains a unique perfect matching on $B\triangle B'$. Then $B'\in\cB$.
\end{theorem}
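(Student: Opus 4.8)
The plan is to reduce to a ``triangular'' configuration of the exchange graph and then induct on the size of the symmetric difference $B\triangle B'$.

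\emph{Step 1: reordering to triangular form.} I would first invoke the classical fact that a bipartite graph with a unique perfect matching can be relabelled to be triangular: if $H=(U\cup W,E)$ is bipartite with $|U|=|W|=k$ and admits a unique perfect matching $\{(u_i,w_i):1\le i\le k\}$, then the indices can be chosen so that $(u_i,w_j)\in E$ implies $i\ge j$. This is proved by induction on $k$, using that a bipartite graph with a unique perfect matching and no isolated vertex has a vertex of degree $1$ (otherwise one finds an $M$-alternating cycle, whose symmetric difference with $M$ is a second perfect matching); removing such a vertex together with its partner, placed at the appropriate end of the order, leaves a strictly smaller graph still with a unique perfect matching. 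Applying this to the bipartite graph $E[B]$ restricted to $B\triangle B'$, and writing $X:=B\setminus B'=\{x_1,\dots,x_k\}$, $Y:=B'\setminus B=\{y_1,\dots,y_k\}$ (note $|X|=|Y|$ since $|B|=|B'|$), we may assume $B-x_i+y_i\in\cB$ for every $i$ and, crucially, $B-x_i+y_j\notin\cB$ whenever $i<j$. Since the elements of $B\cap B'$ are never touched, it suffices to prove $(B\setminus X)\cup Y\in\cB$.

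\emph{Step 2: induction on $k$.} I would then prove by induction on $k$ the following: if $B\in\cB$, $x_1,\dots,x_k\in B$ are distinct, $y_1,\dots,y_k\in S\setminus B$ are distinct, $B-x_i+y_i\in\cB$ for all $i$, and $B-x_i+y_j\notin\cB$ for all $i<j$, then $(B\setminus\{x_1,\dots,x_k\})\cup\{y_1,\dots,y_k\}\in\cB$. The case $k\le 1$ is trivial. For the step, set $B_1:=B-x_1+y_1\in\cB$ and apply the induction hypothesis to $B_1$ with $\{x_2,\dots,x_k\}$ and $\{y_2,\dots,y_k\}$. Two things must be checked, each by a single application of the exchange axiom (B2). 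First, $B_1-x_i+y_i\in\cB$ for $i\ge 2$: applying (B2) to $B-x_i+y_i$ and $B_1$ with the element $x_1$ produces a replacement element lying in $B_1\setminus(B-x_i+y_i)=\{x_i,y_1\}$, and the choice $x_i$ would give $B-x_1+y_i\in\cB$, forbidden since $1<i$; hence the replacement is $y_1$, giving $B_1-x_i+y_i\in\cB$. Second, $B_1-x_i+y_j\notin\cB$ for $2\le i<j$: if $B-x_1-x_i+y_1+y_j\in\cB$, then applying (B2) to this set and $B$ with the element $y_1$ produces a replacement in $\{x_1,x_i\}$ yielding either $B-x_i+y_j\in\cB$ (forbidden since $i<j$) or $B-x_1+y_j\in\cB$ (forbidden since $1<j$), a contradiction. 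Hence $E[B_1]$ restricted to $\{x_2,\dots,x_k\}\cup\{y_2,\dots,y_k\}$ is again triangular with the diagonal as its unique perfect matching, and the induction hypothesis gives $(B_1\setminus\{x_2,\dots,x_k\})\cup\{y_2,\dots,y_k\}=(B\setminus\{x_1,\dots,x_k\})\cup\{y_1,\dots,y_k\}\in\cB$.

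I expect the main obstacle to be the second check in Step 2: one must verify that the unique-matching (triangular) structure of the exchange graph survives the single exchange replacing $B$ by $B_1$, and keeping the labelling and the orientation of the triangularity consistent throughout the recursion is where the care lies. The bipartite reordering lemma of Step 1 is classical, and once the relevant set differences (such as $B_1\setminus(B-x_i+y_i)=\{x_i,y_1\}$ and $B\setminus(B-x_1-x_i+y_1+y_j)=\{x_1,x_i\}$) are identified, the remaining steps are routine invocations of (B2).
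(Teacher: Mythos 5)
Your proposal is correct. Note that the paper does not prove this statement at all: Theorem~\ref{thm:krog} is quoted as a known result of Krogdahl, with a pointer to \cite[Theorem 39.13]{schrijver2003combinatorial}, so there is no in-paper proof to compare against. Your argument is a valid self-contained proof. Step 1 (triangularizing a bipartite graph with a unique perfect matching via repeated removal of degree-one vertices, placed at the correct end of the order) is the standard reduction, and you handle the side-dependence of where the degree-one vertex goes. Step 2 is sound: in check (i), applying (B2) to the bases $B-x_i+y_i$ and $B_1$ with $x_1$ indeed forces the replacement element into $\{x_i,y_1\}$, and the option $x_i$ is excluded by $B-x_1+y_i\notin\cB$ ($1<i$); in check (ii), applying (B2) to the hypothetical basis $B-x_1-x_i+y_1+y_j$ and $B$ with $y_1$ forces a replacement in $\{x_1,x_i\}$, and both outcomes $B-x_i+y_j$ and $B-x_1+y_j$ are excluded by the triangularity of $B$ itself; so the triangular structure is inherited by $B_1$ and the induction closes, with $(B_1\setminus\{x_2,\dots,x_k\})\cup\{y_2,\dots,y_k\}=B'$. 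For comparison, the usual textbook proof (Schrijver's Theorem 39.13) uses the same triangular ordering but then argues by contradiction, taking a circuit $C\subseteq B'$ and the extreme index $j$ with $y_j\in C$ to derive $y_j\in\mathrm{span}(B-x_j)$, contradicting the matching edge. Your inductive, exchange-only variant is marginally longer but avoids circuits and rank/span arguments entirely, using nothing beyond (B2); either route establishes the theorem.
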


In the \emph{weighted matroid intersection problem}, we are given two matroids $M_1=(S,\cB_1)$ and $M_2=(S,\cB_2)$ on the same ground set together with a weight function $w:S\to\bR$, and the goal is to find a common basis maximizing $w(B)$, that is, $B\in\argmax\{w(B)\mid B\in\cB_1\cap\cB_2\}$. The celebrated weight-splitting theorem of Frank~\cite{frank1981weighted} gives a min-max relation for the weighted matroid intersection.

\begin{theorem}[Frank]\label{thm:Frank}
The maximum $w$-weight of a common basis of $M_1=(S,\cB_1)$ and $M_2=(S,\cB_2)$ is equal to the minimum of $\max\{w_1(B) \mid B \in\cB_1\} + \max\{w_2(B) \mid B \in\cB_2\}$ subject to $w=w_1+w_2$.
In particular, for an optimal weight-splitting $w=w_1+w_2$, 
it holds that $\argmax\{w(B)\mid B\in\cB_1\cap\cB_2\} = \argmax\{w_1(B)\mid B\in\cB_1\} \cap \argmax\{w_2(B)\mid B\in\cB_2\}$.
\end{theorem}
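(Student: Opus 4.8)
The plan is to obtain the min--max identity from linear-programming duality for the matroid intersection polytope, and then to read off the $\argmax$ statement as a formal corollary. Assume a common basis exists (the statement is vacuous otherwise) and set $\rho := r_1(S) = r_2(S)$, the common size of all bases. The inequality $\max \le \min$ is immediate: for any splitting $w = w_1 + w_2$ and any $B \in \cB_1 \cap \cB_2$ we have $w(B) = w_1(B) + w_2(B) \le \max_{B_1 \in \cB_1} w_1(B_1) + \max_{B_2 \in \cB_2} w_2(B_2)$; maximising over $B$ and then minimising over splittings yields this half. The content is the reverse inequality, for which I would invoke exactly one external fact --- Edmonds' matroid intersection polytope theorem --- in the following form: the polytope
\[
  Q := \{\, x \in \bR_{+}^{S} : x(U) \le r_1(U) \text{ and } x(U) \le r_2(U) \text{ for all } U \subseteq S,\ x(S) = \rho \,\}
\]
is integral, and its integer points are precisely the incidence vectors of the common bases of $M_1$ and $M_2$ (which in turn reduces to the integrality of the intersection of the two matroid independent-set polytopes). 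Hence $\max\{w(B) : B \in \cB_1 \cap \cB_2\} = \max\{w \cdot x : x \in Q\}$, a linear program over a nonempty bounded polyhedron.

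Next I would pass to the LP dual of $\max\{w \cdot x : x \in Q\}$: it has a non-negative variable $y^i_U$ for each rank inequality $x(U) \le r_i(U)$ ($i = 1, 2$) and a free variable $z$ for $x(S) = \rho$, it minimises $\sum_U y^1_U r_1(U) + \sum_U y^2_U r_2(U) + z\rho$, and its constraint for $s \in S$ reads $z + \sum_{U \ni s} y^1_U + \sum_{U \ni s} y^2_U \ge w(s)$. Fix an optimal dual solution $(y^1, y^2, z)$ and put $\bar w_1(s) := z + \sum_{U \ni s} y^1_U$ and $\bar w_2(s) := \sum_{U \ni s} y^2_U$, so $\bar w_1 + \bar w_2 \ge w$ pointwise by dual feasibility. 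For every $B_1 \in \cB_1$ and $B_2 \in \cB_2$, non-negativity of the $y^i_U$ together with $|B_i \cap U| \le r_i(U)$ gives $\bar w_1(B_1) = z\rho + \sum_U y^1_U\,|B_1 \cap U| \le z\rho + \sum_U y^1_U r_1(U)$ and $\bar w_2(B_2) \le \sum_U y^2_U r_2(U)$; summing, $\max_{\cB_1} \bar w_1 + \max_{\cB_2} \bar w_2$ is at most the dual objective value, which by strong duality equals $\max\{w(B) : B \in \cB_1 \cap \cB_2\}$. To turn $(\bar w_1, \bar w_2)$ into an honest splitting I would keep $w_2 := \bar w_2$ and set $w_1 := w - \bar w_2$; since $\bar w_1 + \bar w_2 \ge w$ we have $w_1 \le \bar w_1$ pointwise, and $w_1' \mapsto \max_{\cB_1} w_1'$ is monotone, so $\max_{\cB_1} w_1 + \max_{\cB_2} w_2 \le \max_{\cB_1} \bar w_1 + \max_{\cB_2} \bar w_2 \le \max\{w(B) : B \in \cB_1 \cap \cB_2\}$. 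Combined with the easy half, every inequality is tight, so $w = w_1 + w_2$ attains the minimum.

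Finally, the ``in particular'' assertion is a formal consequence once an optimal splitting $w = w_1 + w_2$ is available with $w = w_1 + w_2$ pointwise: for any $B \in \cB_1 \cap \cB_2$ we have $w(B) = w_1(B) + w_2(B) \le \max_{\cB_1} w_1 + \max_{\cB_2} w_2$, with equality if and only if $w_1(B) = \max_{\cB_1} w_1$ and $w_2(B) = \max_{\cB_2} w_2$; since the left-hand maximum over $B \in \cB_1 \cap \cB_2$ equals $\max_{\cB_1} w_1 + \max_{\cB_2} w_2$ by the first part, this is exactly $\argmax_{\cB_1 \cap \cB_2} w = \argmax_{\cB_1} w_1 \cap \argmax_{\cB_2} w_2$, the right-hand side being automatically contained in $\cB_1 \cap \cB_2$. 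The only genuine obstacle on this route is the integrality of $Q$; if a self-contained combinatorial proof is wanted instead, I would run the primal--dual weighted matroid intersection algorithm, maintaining for $k = 0, 1, \dots, \rho$ a maximum-$w$-weight common independent set $I_k$ of size $k$ and augmenting along a shortest-length, then shortest-cardinality, source-to-sink path in the exchange graph with node lengths $\pm w$; the delicate point there is that weight-maximality of $I_k$ is preserved under augmentation, and the node potentials accumulated at termination supply the pair $(w_1, w_2)$.
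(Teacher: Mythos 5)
Your proposal is correct. Note that the paper does not prove this statement at all: it is quoted as Frank's weight-splitting theorem and cited from his 1981 paper, so any proof you supply is necessarily "a different route" from the paper's. Your LP-duality derivation is sound: the easy inequality is exactly as you state; the common-base polytope $Q$ is the face of Edmonds' matroid intersection polytope on which $x(S)$ is maximal, hence integral with the common bases as its vertices; the dual variables aggregate into $\bar w_1,\bar w_2$ with $\bar w_1+\bar w_2\ge w$ and $\max_{\cB_1}\bar w_1+\max_{\cB_2}\bar w_2$ bounded by the dual objective; and the monotonicity trick $w_1:=w-\bar w_2\le\bar w_1$ correctly converts the dual solution into an exact splitting attaining the minimum. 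The "in particular" clause then follows formally, with both inclusions of the $\argmax$ identity checked. The trade-off versus Frank's original argument is the usual one: you buy the whole theorem with a single black box (total dual integrality / integrality of the matroid intersection polytope) plus LP duality, whereas Frank's proof is a self-contained primal--dual induction on the size of the common independent set, which additionally yields integrality of the splitting when $w$ is integral (a fact the paper uses in the remark after Theorem~\ref{thm:unweight2weight}) and a polynomial algorithm; your closing sketch of that combinatorial route correctly identifies the delicate step (preservation of extreme optimality under shortest-path augmentation). If you want your proof to also support the paper's integrality remark, you would need to add that the dual LP can be chosen integral when $w$ is integral, which again is exactly the TDI statement.
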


A \emph{$k$-uniform matroid} is a matroid $M=(S,\cB)$ where $\cB=\{X\subseteq S\mid |X|=k\}$ for some $k\in\mathbb{Z}_+$. A \emph{partition matroid} $M=(S,\cB)$ is the direct sum of uniform matroids, or in other words, $\cB=\{X\subseteq S\mid |X\cap S_i|=k_i\ \text{for}\ i=1,\dots,q\}$ for some partition $S=S_1\cup\dots\cup S_q$ of $S$ and $k_i\in\mathbb{Z}_+$ for $i=1,\dots,q$. Each $S_i$ is called a \emph{partition class}. In the paper, we will work with partition matroids satisfying $|S_i|\le 2$ and $k_i=1$ for $i=1,\dots,q$.

For further details on matroids and the matroid intersection problem, we refer the reader to \cite{oxley2011matroid, schrijver2003combinatorial}.

\paragraph*{Dual matroids}

The \emph{dual} of a matroid $M=(S,\cB)$ is the matroid $M^*=(S,\cB^*)$ where $\cB^*=\{B^*\subseteq S\mid S\setminus B^*\in\cB\}$. Given one of the standard oracles for $M$, the same oracle can be constructed for $M^*$ as well.   

We now rephrase Conjecture~\ref{conj:01} by using dual matroids. Suppose that $M_1$ and $M_2$ are matroids as in Conjecture~\ref{conj:01} and let $M^*_2 = (S, \mathcal{B}^*_2)$ be the dual matroid of $M_2$. Then, we can see that $S \setminus B_1 \in \mathcal{B}_2$ is equivalent to $B_1 \in \mathcal{B}^*_2$, and  $B_2 \in \argmin_{X \in \mathcal{B}_2} p(X)$ is equivalent to $S \setminus B_2 \in \argmax_{X \in \mathcal{B}^*_2} p(X)$. Therefore, by replacing $M_2$ and $S \setminus B_2$ with $M^*_2$ and $B_2$, respectively, Conjecture~\ref{conj:01} is equivalent to the following conjecture. 

\begin{conjecture} \label{conj:02}
Let $M_1=(S, \mathcal{B}_1)$ and $M_2=(S, \mathcal{B}_2)$ be matroids with a common ground set $S$ such that there exists a common basis $B \in \mathcal{B}_1 \cap \mathcal{B}_2$.  Then, there exists a function $p: S \to \mathbb{R}$ satisfying the following conditions.  
\begin{enumerate}
\item For any $B_1 \in \argmin_{X \in \mathcal{B}_1} p(X)$, it holds that $B_1 \in \mathcal{B}_2$. 
\item For any $B_2 \in \argmax_{X \in \mathcal{B}_2} p(X)$, it holds that $B_2 \in \mathcal{B}_1$. 
\end{enumerate}
\end{conjecture}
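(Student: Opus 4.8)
What the displayed passage asserts — and what I would verify — is the equivalence of Conjecture~\ref{conj:02} with Conjecture~\ref{conj:01}; since Conjecture~\ref{conj:02} is itself open, ``proving'' it here means establishing this equivalence. The plan is to exhibit an explicit involutive correspondence between instances of the two statements, obtained by replacing the second matroid by its dual, under which a price vector valid for an instance of one statement is, without any modification, valid for the corresponding instance of the other. Throughout I use two elementary facts about the dual $M^* = (S, \mathcal{B}^*)$ of a matroid $M = (S, \mathcal{B})$: first, $\mathcal{B}^* = \{S \setminus B : B \in \mathcal{B}\}$ and $(M^*)^* = M$, so $M \mapsto M^*$ is an involution on the class of matroids on $S$; second, for any $p : S \to \mathbb{R}$ and $X \subseteq S$ we have $p(S \setminus X) = p(S) - p(X)$, so complementation $X \mapsto S \setminus X$ is a bijection $\mathcal{B}_2 \to \mathcal{B}_2^*$ that carries $\argmin_{X \in \mathcal{B}_2} p(X)$ onto $\argmax_{X \in \mathcal{B}_2^*} p(X)$.

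First I would match the hypotheses. For matroids $M_1, M_2$ on $S$, the existence of disjoint bases $B_1 \in \mathcal{B}_1$, $B_2 \in \mathcal{B}_2$ with $B_1 \cup B_2 = S$ forces $B_2 = S \setminus B_1$, hence is equivalent to $B_1 \in \mathcal{B}_1$ and $S \setminus B_1 \in \mathcal{B}_2$, i.e.\ to $B_1 \in \mathcal{B}_1 \cap \mathcal{B}_2^*$; thus this hypothesis of Conjecture~\ref{conj:01} for $(M_1, M_2)$ is exactly the hypothesis of Conjecture~\ref{conj:02} for $(M_1, M_2^*)$. Applying the same observation with $M_2^*$ in place of $M_2$ (using $(M_2^*)^* = M_2$) gives the correspondence in the reverse direction.

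Next I would match the two conditions for a fixed $p : S \to \mathbb{R}$. Condition~1 of Conjecture~\ref{conj:01} for $(M_1, M_2)$ says that every $B_1 \in \argmin_{X \in \mathcal{B}_1} p(X)$ satisfies $S \setminus B_1 \in \mathcal{B}_2$, i.e.\ $B_1 \in \mathcal{B}_2^*$ — which is Condition~1 of Conjecture~\ref{conj:02} for $(M_1, M_2^*)$. For Condition~2, put $B_2^* = S \setminus B_2$: as $B_2$ ranges over $\mathcal{B}_2$, $B_2^*$ ranges bijectively over $\mathcal{B}_2^*$; by the $\argmin/\argmax$ interchange noted above, $B_2 \in \argmin_{X \in \mathcal{B}_2} p(X)$ iff $B_2^* \in \argmax_{X \in \mathcal{B}_2^*} p(X)$; and $S \setminus B_2 \in \mathcal{B}_1$ reads $B_2^* \in \mathcal{B}_1$. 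Hence Condition~2 of Conjecture~\ref{conj:01} for $(M_1, M_2)$ says exactly that every $B_2^* \in \argmax_{X \in \mathcal{B}_2^*} p(X)$ lies in $\mathcal{B}_1$, i.e.\ Condition~2 of Conjecture~\ref{conj:02} for $(M_1, M_2^*)$.

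Combining the three correspondences, a price vector $p$ witnesses Conjecture~\ref{conj:01} for $(M_1, M_2)$ iff the same $p$ witnesses Conjecture~\ref{conj:02} for $(M_1, M_2^*)$; since $M_2 \mapsto M_2^*$ is a bijection on matroids on $S$ carrying valid-hypothesis instances to valid-hypothesis instances, Conjecture~\ref{conj:01} holds in all instances precisely when Conjecture~\ref{conj:02} does. I do not expect a genuine obstacle; the only point to be careful about is that the universal (``for any'') quantifiers in both conditions are preserved — which they are, because complementation maps the entire set $\argmin_{X \in \mathcal{B}_2} p(X)$ onto the entire set $\argmax_{X \in \mathcal{B}_2^*} p(X)$, not merely individual elements — and that no change to $p$ is needed along the correspondence.
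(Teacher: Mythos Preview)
Your proposal is correct and follows essentially the same route as the paper: the paper also observes that replacing $M_2$ by its dual $M_2^*$ turns the conditions of Conjecture~\ref{conj:01} into those of Conjecture~\ref{conj:02} (using $S\setminus B_1\in\mathcal{B}_2 \Leftrightarrow B_1\in\mathcal{B}_2^*$ and $B_2\in\argmin_{X\in\mathcal{B}_2}p(X)\Leftrightarrow S\setminus B_2\in\argmax_{X\in\mathcal{B}_2^*}p(X)$), with the same $p$. Your write-up is simply more explicit about the involutive nature of the correspondence and the preservation of the universal quantifiers.
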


Conjecture~\ref{conj:02} bears a lot of similarities with the problem of packing common bases in the intersection of two matroids. If $M_1$ and $M_2$ share two disjoint common bases then setting the prices low on one of them and high on the other gives a desired $p$. If $S$ can be partitioned into two disjoint bases in both $M_1$ and $M_2$, then the statement may be reminiscent of Rota's famous conjecture concerning rearrangements of bases \cite{huang1994relations}.

\paragraph*{Strongly base orderable matroids}

A matroid $M = (S, \mathcal{B})$ is \emph{strongly base orderable} if  for any two bases $B_1, B_2 \in \mathcal{B}$, there exists a bijection $f: B_1 \to B_2$ such that  $(B_1 \setminus X) \cup f(X) \in \mathcal{B}$ for any $X \subseteq B_1$,  where we denote $f(X) := \{ f(e) \mid e \in X\}$. Davies and McDiarmid~\cite{davies1976disjoint} observed the following~(see also \cite[Theorem 42.13]{schrijver2003combinatorial}). 

\begin{theorem}[Davies and McDiarmid]
\label{thm:DM76}
Let $M_1=(S,\cB_1)$ and $M_2=(S,\cB_2)$ be strongly base orderable matroids. If $X\subseteq S$ can be partitioned into $k$ bases in both $M_1$ and $M_2$, then $X$ can be partitioned into $k$ common bases.
Furthermore, such $k$ common bases can be computed in polynomial time if the bijection $f$ can be computed in polynomial time for any pair of bases.
\end{theorem}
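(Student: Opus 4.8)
My plan is to reduce, using the strong base orderability bijections, to a purely combinatorial selection problem that can be solved by König's edge‑colouring theorem, together with a potential‑function argument to push the idea past two parts; the case $k=2$ already exhibits the key mechanism, so I would present it first. Before anything else I normalize: if $X$ partitions into $k$ bases in $M_i$ then $|X|=k\cdot r_{M_i}(S)$, so $r:=r_{M_1}(S)=r_{M_2}(S)$, and by the matroid union theorem $X$ partitions into $k$ bases of $M_i$ if and only if $X$ is a basis of the $k$‑fold union $M_i+\dots+M_i$. Fix partitions $X=A_1\sqcup\dots\sqcup A_k$ into bases of $M_1$ and $X=C_1\sqcup\dots\sqcup C_k$ into bases of $M_2$. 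For $k=1$ there is nothing to prove, since $X\in\cB_1\cap\cB_2$.

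For $k=2$ I would argue as follows. Apply strong base orderability of $M_1$ to $A_1,A_2$ to get a bijection $f:A_1\to A_2$ with $(A_1\setminus Z)\cup f(Z)\in\cB_1$ for every $Z\subseteq A_1$. Then $\mathcal P_1:=\{\{a,f(a)\}:a\in A_1\}$ is a partition of $X$ into $r$ pairs, and for every transversal $B$ of $\mathcal P_1$ both $B$ and $X\setminus B$ lie in $\cB_1$, applying the exchange property to $Z=\{a:f(a)\in B\}$ and to its complement $A_1\setminus Z$. Symmetrically, strong base orderability of $M_2$ on $C_1,C_2$ gives a pairing $\mathcal P_2$ of $X$ each of whose transversals, together with its complement, lies in $\cB_2$. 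Hence any common transversal $B$ of $\mathcal P_1$ and $\mathcal P_2$ yields $B,\,X\setminus B\in\cB_1\cap\cB_2$, i.e.\ a partition of $X$ into two common bases. Such a common transversal always exists: the graph on vertex set $X$ with edge multiset $\mathcal P_1\cup\mathcal P_2$ is $2$‑regular, hence a disjoint union of even cycles, and on each cycle exactly the two sets of alternate vertices are common transversals, so one may choose one of them on each cycle independently; equivalently, this is König's edge‑colouring theorem applied to the bipartite multigraph joining the classes of $\mathcal P_1$ to those of $\mathcal P_2$. Everything here costs $O(1)$ bijection calls plus a polynomial matching/colouring computation.

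For general $k$ the same philosophy should work, and this is where I expect the real work. The route I would try: among all pairs consisting of a partition $\mathcal A$ of $X$ into $k$ bases of $M_1$ and a partition $\mathcal C$ into $k$ bases of $M_2$, maximize $\Phi(\mathcal A,\mathcal C):=\sum_{i,j}|A_i\cap C_j|^2$. Since $\sum_j|A_i\cap C_j|=r$ for each $i$, one has $\Phi\le kr^2$, with equality exactly when every $A_i$ is contained in some $C_j$, that is, when $\mathcal A=\mathcal C$ --- and then each $A_i$ is itself a common basis, so we are done. Thus it suffices to show that whenever $\mathcal A\neq\mathcal C$ one can strictly increase $\Phi$ by replacing two classes $A_i,A_{i'}$ (apply a strong base orderability bijection of $M_1$ to them and swap a carefully chosen subset of the pairs it induces) or, symmetrically, two classes of $\mathcal C$ using $M_2$. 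The obstacle is that the naive hope --- that every transversal of the $k$‑row structure assembled from the pairwise SBO bijections is already a basis --- fails (one sees this on rank‑$2$ transversal matroids with an unlucky choice of bijections), so the improving swap has to be located by a genuine counting argument on the intersection pattern $\bigl(|A_i\cap C_j|\bigr)_{i,j}$: one picks which pairs to exchange so that the sum of squares of the affected intersection sizes strictly increases. Making this counting argument work --- or, alternatively, directly extracting a single common basis $B$ with $X\setminus B$ still partitioning into $k-1$ bases of each $M_i$ and then inducting on $k$ --- is the crux. Once it is in place the procedure is polynomial: each step is one SBO‑bijection computation plus elementary bookkeeping, and $\Phi$ is a bounded non‑negative integer that strictly increases, so at most $kr^2$ steps occur.
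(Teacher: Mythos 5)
Your $k=2$ argument is complete and correct: the two SBO bijections give two perfect pairings of $X$, their union is a $2$-regular multigraph whose components are even alternating cycles, and any common transversal $B$ satisfies $B, X\setminus B\in\mathcal{B}_1\cap\mathcal{B}_2$. (Note, for calibration, that the paper does not prove this theorem at all --- it is quoted from Davies--McDiarmid, cf.\ Schrijver, Theorem 42.13 --- so you are reproving a cited result; your $k=2$ case is essentially the standard argument.)

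For general $k$, however, there is a genuine gap, and you name it yourself: the entire content of the theorem for $k\ge 3$ is the claim that whenever the two partitions $\mathcal{A}$ and $\mathcal{C}$ do not coincide, one can modify two classes of one of them (staying within partitions into bases of the respective matroid) so that the potential $\Phi=\sum_{i,j}|A_i\cap C_j|^2$ strictly increases --- or, in your alternative route, that one can extract a single common basis $B$ such that $X\setminus B$ still partitions into $k-1$ bases of each matroid. Neither statement is proved in your proposal; you explicitly defer it as ``the crux'' and ``the real work.'' The difficulty is real: as you observe, the pairwise SBO bijections between classes of $\mathcal{A}$ need not be mutually consistent, so transversals of the assembled structure need not be bases, and the $k=2$ cycle argument does not lift directly. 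Without a proof that an improving exchange always exists (including a specification of which subset of the induced pairs to swap and a verification that the swap preserves ``partition into $k$ bases''), the potential-function scheme is only a plan, and consequently the polynomial-time claim in the ``furthermore'' part is also unsupported for $k\ge 3$. To complete the argument you would need to supply this exchange lemma (the classical proofs do exactly this, typically by choosing partitions maximizing an agreement measure such as $\sum_i|B_i\cap C_i|$ and deriving a contradiction from a misaligned element via the SBO bijections), or else follow the Davies--McDiarmid induction directly.
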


The following technical lemma about strongly base orderable matroids will be used in the proof of Corollary~\ref{cor:sbow}. 

\begin{lemma}
\label{lem:maximizerSBO}
Let $M = (S, \mathcal{B})$ be a strongly base orderable matroid, $q: S \to \mathbb{R}$ be a function, and define a matroid $\hat{M} = (S, \hat{\mathcal{B}})$ by $\hat{\mathcal{B}} = \argmax_{X \in \mathcal{B}} q(X)$. Then $\hat{M}$ is strongly base orderable. 
\end{lemma}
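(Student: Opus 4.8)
The plan is to show directly that the bijection witnessing strong base orderability for $M$ restricts to a bijection witnessing it for $\hat M$. Fix two bases $B_1, B_2 \in \hat{\mathcal B}$; in particular $q(B_1) = q(B_2) = \max_{X \in \mathcal B} q(X) =: q^\ast$. Since $B_1, B_2$ are also bases of $M$, strong base orderability of $M$ gives a bijection $f : B_1 \to B_2$ with $(B_1 \setminus X) \cup f(X) \in \mathcal B$ for every $X \subseteq B_1$. I claim this same $f$ works for $\hat M$, i.e.\ that $(B_1 \setminus X) \cup f(X) \in \hat{\mathcal B}$ for every $X \subseteq B_1$, which amounts to checking $q\bigl((B_1 \setminus X) \cup f(X)\bigr) = q^\ast$.

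The key computation is the following. For any $X \subseteq B_1$, write $B_X := (B_1 \setminus X) \cup f(X)$, which is a basis of $M$, hence $q(B_X) \le q^\ast$. Now consider the complementary set $X' := B_1 \setminus X$ and note $f(X') = B_2 \setminus f(X)$ since $f$ is a bijection from $B_1$ onto $B_2$; thus $B_{X'} = (B_1 \setminus X') \cup f(X') = X \cup (B_2 \setminus f(X))$, which is also a basis of $M$, so $q(B_{X'}) \le q^\ast$ as well. Adding these two inequalities and using that $B_X$ and $B_{X'}$ together use each element of $B_1$ exactly once and each element of $B_2$ exactly once,
\[
q(B_X) + q(B_{X'}) = q(B_1) + q(B_2) = 2q^\ast .
\]
Combined with $q(B_X) \le q^\ast$ and $q(B_{X'}) \le q^\ast$, this forces $q(B_X) = q(B_{X'}) = q^\ast$, so $B_X \in \hat{\mathcal B}$, as desired.

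Two small points complete the argument. First, one should note that $\hat M$ is indeed a matroid: $\hat{\mathcal B}$ is nonempty, and it is the base family of the matroid obtained from $M$ by an appropriate weighting/truncation — more simply, the set of maximum-$q$-weight bases of any matroid forms the base family of a matroid (this is standard, and also follows from Theorem~\ref{thm:Frank} applied with a second free matroid, or directly by a greedy exchange argument). Second, the bijection $f$ we use for $\hat M$ is literally one of the bijections supplied by $M$, so no new computation is needed beyond what $M$ already provides. I do not foresee a genuine obstacle here; the only thing to be careful about is the bookkeeping that $q(B_X) + q(B_{X'})$ telescopes to $q(B_1) + q(B_2)$, which relies precisely on $f$ being a bijection (not merely an injection) between the two bases.
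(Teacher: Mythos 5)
Your proposal is correct and takes essentially the same approach as the paper: reuse the bijection $f$ supplied by strong base orderability of $M$ and show that every exchanged set $(B_1 \setminus X) \cup f(X)$ still attains the maximum weight $q^\ast$, hence lies in $\hat{\mathcal B}$. The only difference is in the bookkeeping — the paper derives $q(x) = q(f(x))$ elementwise from $q(B_1) \ge q\bigl((B_1\setminus\{x\})\cup\{f(x)\}\bigr)$ together with $q(B_2) = q(f(B_1)) = q(B_1)$, while you pair $X$ with its complement and use $q(B_X) + q(B_{X'}) = 2q^\ast$; both verifications tacitly use that $B_1\setminus X$ and $f(X)$ are disjoint, which holds because $(B_1\setminus X)\cup f(X)$ is a basis of the same cardinality as $B_1$.
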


\begin{proof}
Let $B_1, B_2 \in \hat{\mathcal{B}}$. Since both $B_1$ and $B_2$ are bases of $M = (S, \mathcal{B})$, there exists a bijection $f: B_1 \to B_2$ such that $(B_1 \setminus X) \cup f(X) \in \mathcal{B}$ for any $X \subseteq B_1$.  Since $q(B_1) \ge q( (B_1 \setminus X) \cup f(X))$ for any $X \subseteq B_1$ by $B_1 \in \hat{\mathcal{B}}$, it holds that $q(X) \ge q(f(X))$.  In particular, $q(x) \ge q(f(x))$ for any $x \in B_1$.  Since $B_2 \in \hat{\mathcal{B}}$, we obtain $q(B_1) = q(B_2) = q(f(B_1))$, which shows that $q(x) = q(f(x))$ for any $x \in B_1$.  Therefore, $q(B_1) = q( (B_1 \setminus X) \cup f(X))$ for any $X \subseteq B_1$, and hence  $(B_1 \setminus X) \cup f(X) \in \hat{\mathcal{B}}$. This shows that $\hat{M}$ is strongly base orderable. 
\end{proof}

\paragraph*{Market model}

In a combinatorial market, we are given a set $S$ of \emph{indivisible items} and a set $J$ of  \emph{buyers}. Each buyer $i\in J$ has a \emph{valuation function} $v_i:2^S\to\bR$ that describes the buyer's preferences over the subsets of items. 
Given \emph{prices} $p:S\to\bR$, the \emph{utility} of buyer $i\in J$ for a subset $X\subseteq S$ is defined by $u_i(X)=v_i(X)-p(X)$. 
The buyers arrive in an undetermined order, and the next buyer always picks a subset of items that maximizes her utility. The goal is to set the prices in such a way that no matter which buyer arrives next, the final allocation of items maximizes the social welfare. In a dynamic pricing scheme, the prices can be updated between buyer arrivals based on the remaining sets of items and buyers.

We focus on the case of two buyers with matroid rank functions as valuations. Let $M_1=(S,\cB_1)$ and $M_2=(S,\cB_2)$ be matroids with rank functions $r_1$ and $r_2$, respectively. The valuation of agent $i$ is $r_i$ for $i=1,2$. The valuations are accessed through one of the standard matroid oracles (e.g. independence or rank oracle). As described in the introduction, this setting can be reduced to the case in which each buyer always chooses a basis that maximizes her utility, that is, Conjecture~\ref{conj:00} can be reduced to Conjecture~\ref{conj:01}.


\begin{lemma}
\label{lem:00to01}
If Conjecture~\ref{conj:01} is true, then Conjecture~\ref{conj:00} is also true. 
\end{lemma}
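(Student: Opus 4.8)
The plan is to reduce an instance of Conjecture~\ref{conj:00} to an instance of Conjecture~\ref{conj:01}. Write $\mu_i:=r_i(S)$ and let $\nu:=\max\{r_1(X)+r_2(Y):X,Y\subseteq S,\ X\cap Y=\emptyset\}$ denote the optimal social welfare. The key preliminary observation is a description of best responses in the matroid-rank market: if a price vector $p$ satisfies $0<p(s)<1$ for every $s\in S$, then for every $T\subseteq S$ the set $\argmax_{X\subseteq T}(r_i(X)-p(X))$ is exactly the family of minimum-$p$-weight bases of the restriction $M_i|_T$. Indeed, with $0<p(s)<1$, enlarging $X$ strictly increases $r_i(X)-p(X)$ precisely when it increases $r_i(X)$, so every maximizer is a maximal independent subset of $T$, i.e.\ a basis of $M_i|_T$; and on those bases, which all have size $r_i(T)$, the objective is $r_i(T)-p(X)$, so the maximizers are the cheapest ones. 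Moreover, a price vector that works for Conjecture~\ref{conj:01} continues to work after any transformation $p\mapsto\alpha p+\beta\mathbf 1$ with $\alpha>0$, since all bases of a matroid have the same cardinality and hence this does not affect which bases of $\cB_1$ or $\cB_2$ are cheapest; so a Conjecture~\ref{conj:01} price may always be rescaled into the interval $(0,1)$.

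Granting this, the reduction runs as follows. One constructs auxiliary matroids $N_1,N_2$ on a common ground set $S'\supseteq S$ with the following two properties: (i) $N_1$ and $N_2$ admit disjoint bases whose union is $S'$, so Conjecture~\ref{conj:01} applies to them; and (ii) the extra elements $S'\setminus S$ are free or parallel ``dummy'' elements, arranged so that a $p'$-cheapest basis of $N_1$ always meets $S$ in a basis $B_1\in\cB_1$, its complement within $S$ being precisely the set available to the other buyer, and so that the conclusion of Conjecture~\ref{conj:01} for $(N_1,N_2)$ forces $r_2(S\setminus B_1)=\nu-\mu_1$ for every such $B_1$ (and symmetrically with the roles of $1$ and $2$ swapped). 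Given (i) and (ii), we apply Conjecture~\ref{conj:01} to $(N_1,N_2)$, rescale the resulting price into $(0,1)$, and take $p$ to be its restriction to $S$. If buyer~$1$ arrives first, by the best-response description she picks (the $S$-part of) a cheapest basis of $N_1$, a basis $B_1\in\cB_1$ with $r_2(S\setminus B_1)=\nu-\mu_1$ by~(ii); buyer~$2$ then takes a maximal independent subset of $S\setminus B_1$ in $M_2$, which has rank $\nu-\mu_1$, so the welfare equals $\mu_1+(\nu-\mu_1)=\nu$. The symmetric computation gives condition~2 of Conjecture~\ref{conj:00}. (That bases $B_1\in\cB_1$ with $r_2(S\setminus B_1)=\nu-\mu_1$ exist at all follows from a one-line exchange argument: starting from an optimal pair $(X^*,S\setminus X^*)$, pick a basis $I_1$ of $X^*$ in $M_1$, extend it to a basis $B_1\supseteq I_1$ of $M_1$, and check $\mu_1+r_2(S\setminus B_1)\ge r_1(X^*)+r_2(S\setminus X^*)=\nu$, while $\le$ is immediate.)

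The crux — and the main obstacle — is constructing the gadget $(N_1,N_2)$ with properties (i) and (ii) for arbitrary $M_1,M_2$. Property (i) forces $r(N_1)+r(N_2)=|S'|$, so the construction has to repair the possible mismatch between $\mu_1+\mu_2$ and $|S|$, and it must manufacture the partition into disjoint bases that $(M_1,M_2)$ itself need not admit; the natural device is to attach to each buyer an appropriate number of free items together with parallel copies of certain elements of $S$, chosen so that the elements of $S$ that no optimal co-allocation uses get absorbed as dummies on one side. One then has to verify that this construction is consistent — that the optimal welfare of $(N_1,N_2)$ and the conclusion of Conjecture~\ref{conj:01} for $(N_1,N_2)$ translate, after restriction to $S$, into property (ii) and the welfare value $\nu$ for $(M_1,M_2)$ — doing the bookkeeping separately for the two arrival orders. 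Everything outside this construction is the routine computation sketched above; one may equivalently carry out the whole argument through the dual reformulation, Conjecture~\ref{conj:02}.
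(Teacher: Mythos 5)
Your preliminary observations are correct (and close in spirit to what the paper uses): with prices in $(0,1)$ the utility maximizers over $T$ are exactly the cheapest bases of $M_i|_T$, a Conjecture~\ref{conj:01} price survives any transformation $p\mapsto\alpha p+\beta\mathbf{1}$ with $\alpha>0$, and the exchange argument showing that some full basis $B_1\in\mathcal{B}_1$ attains $r_1(B_1)+r_2(S\setminus B_1)=\nu$ is fine. But these are the routine parts. The proof has a genuine gap exactly where you flag it: the gadget $(N_1,N_2)$ with properties (i) and (ii) is never constructed, and without it the reduction does not exist. Worse, even granting some construction by free elements and parallel copies, the step ``buyer~1 picks (the $S$-part of) a cheapest basis of $N_1$'' is not a routine computation: the prices on the dummy elements are produced by the black-box application of Conjecture~\ref{conj:01} on $S'$, so a $p'$-cheapest basis of $N_1$ may prefer dummies to elements of $S$, in which case its $S$-part need not be a basis of $M_1$; conversely, the buyer's actual best response on $S$ (a cheapest basis of $M_1$ with respect to $p|_S$) need not arise as the $S$-part of any cheapest $N_1$-basis. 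Similarly, translating ``the $S'$-complement of a cheapest $N_1$-basis is a basis of $N_2$'' into the exact statement $r_2(S\setminus B_1)=\nu-\mu_1$ depends entirely on the unspecified rank bookkeeping of the gadget. So what is missing is precisely the content of the lemma.

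For comparison, the paper reduces in the opposite direction: it shrinks the instance instead of padding it. Take $\hat B_1\in\mathcal{B}_1$ and $\hat B_2\in\mathcal{B}_2$ maximizing $|\hat B_1\cup\hat B_2|$ (computable by matroid intersection on $M_1$ and $M_2^*$; note $\nu=|\hat B_1\cup\hat B_2|$). Deleting $S\setminus(\hat B_1\cup\hat B_2)$ and contracting $\hat B_1\cap\hat B_2$ in both matroids yields matroids on $S'=(\hat B_1\cup\hat B_2)\setminus(\hat B_1\cap\hat B_2)$ that possess the disjoint spanning bases $\hat B_1\setminus\hat B_2$ and $\hat B_2\setminus\hat B_1$, so Conjecture~\ref{conj:01} applies directly, with no dummy elements to control. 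The resulting price is rescaled into $(0,1)$ on $S'$ and extended by $0$ on $\hat B_1\cap\hat B_2$ and by $1$ on $S\setminus(\hat B_1\cup\hat B_2)$; the maximality of $|\hat B_1\cup\hat B_2|$ then makes the welfare computation immediate for either arrival order. If you want to salvage your outline, you would essentially have to reinvent this deletion/contraction argument; that is the concrete construction your proposal is missing.
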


\begin{proof}
Let $M_1=(S, \mathcal{B}_1)$ and $M_2=(S, \mathcal{B}_2)$ be matroids as in Conjecture~\ref{conj:00} and let $\hat B_1 \in \mathcal{B}_1$ and $\hat B_2 \in \mathcal{B}_2$ be a pair of bases that maximizes $|\hat B_1 \cup \hat B_2|$. 
For $i \in \{1, 2\}$, let $M'_i$ be the matroid obtained from $M_i$ by deleting $S \setminus (\hat B_1 \cup \hat B_2)$ and contracting $\hat B_1 \cap \hat B_2$. Then, $M'_1=(S', \mathcal{B}'_1)$ and $M'_2=(S', \mathcal{B}'_2)$ are matroids with a common ground set $S':=(\hat B_1 \cup \hat B_2) \setminus (\hat B_1 \cap \hat B_2)$ such that there exist disjoint bases $\hat B_1 \setminus \hat B_2 \in \mathcal{B}'_1$ and $\hat B_2 \setminus \hat B_1 \in \mathcal{B}'_2$ whose union is $S'$. 
Hence, by assuming that Conjecture~\ref{conj:01} is true, there exists a price vector $p': S' \to \mathbb{R}$ with the following conditions.  
\begin{enumerate}
\item For any $B'_1 \in \argmin_{X \in \mathcal{B}'_1} p'(X)$, it holds that $S' \setminus B'_1 \in \mathcal{B}'_2$. 
\item For any $B'_2 \in \argmin_{X \in \mathcal{B}'_2} p'(X)$, it holds that $S' \setminus B'_2 \in \mathcal{B}'_1$. 
\end{enumerate}
We observe that we can modify the price vector $p'$ so that $0 < p'(s) < 1$ for every $s \in S'$, by replacing $p'(s)$ with $\alpha \cdot p'(s) + \beta$ for some $\alpha>0$ and $\beta \in \mathbb{R}$. By using such a function $p'$, define $p: S \to \mathbb{R}$ by 
$$
p(s) = 
\begin{cases}
p'(s) & \text{if $s \in S'$,} \\
0 & \text{if $s \in \hat B_1 \cap \hat B_2$,} \\
1 & \text{if $s \in S \setminus (\hat B_1 \cup \hat B_2)$.} 
\end{cases}
$$
For $B_1 \in \argmax_{X \subseteq S} (r_1(X)-p(X))$, the definition of $p$ shows that $B_1 = B'_1 \cup (\hat B_1 \cap \hat B_2)$ for some $B'_1 \in \argmin_{X \in \mathcal{B}'_1} p'(X)$. Since this implies $S' \setminus B'_1 \in \mathcal{B}'_2$, it holds that $S' \setminus B'_1$ is a maximal independent set of $M_1$ in $S \setminus B_1$ by the maximality of $|\hat B_1 \cup \hat B_2|$. Therefore, if $B_2 \in \argmax_{Y\subseteq S \setminus B_1} (r_2(Y)-p(Y))$, then $B_2 = S' \setminus B'_1$ and hence 
\begin{align*}
r_1(B_1)+r_2(B_2) 
&= |B'_1| +|\hat B_1 \cap \hat B_2| + |S' \setminus B'_1| = |\hat B_1 \cup \hat B_2| \\
&= \max \{r_1(X) + r_2(Y) \mid X, Y \subseteq S,\ X\cap Y=\emptyset \},  
\end{align*}
which shows the first requirement of Conjecture~\ref{conj:00}. The same argument works for $B_2 \in \argmax_{X \subseteq S} (r_2(X)-p(X))$. Therefore, $p$ satisfies the requirements in Conjecture~\ref{conj:00}. 
\end{proof}

Note that a pair of bases $\hat B_1 \in \mathcal{B}_1$ and $\hat B_2 \in \mathcal{B}_2$ maximizing $|\hat B_1 \cup \hat B_2|$ can be computed in polynomial time by applying a matroid intersection algorithm to $M_1$ and $M^*_2$. Note also that the price vector $p$ obtained in the above proof is not necessarily a Walrasian price.

We can consider a weighted variant of Conjecture~\ref{conj:01} in which we are given weight functions $w_1:S\to\bR$ and $w_2:S\to\bR$. For a buyer $i\in \{1,2\}$ and for a basis $X\in \mathcal{B}_i$, the valuation $v_i(X)$ is defined as $w_i(X)$. Each buyer chooses a basis that maximizes her utility. Note that choosing a basis is a hard constraint, and hence we do not have to define $v_i(X)$ for $X \not\in \mathcal{B}_i$. The goal is to find a price vector $p$ that achieves the optimal social welfare $\max\{w_1(X)+w_2(S\setminus X)\mid X\in\cB_1,\, S\setminus X\in\cB_2\}$.

Recently, Berger et al.~\cite{berger2020power} investigated the existence of optimal dynamic pricing schemes for $k$-demand valuations. 
A valuation $v:2^S\to\bR_+$ is \emph{$k$-demand} if $v(X)=\max\{\sum_{s\in Z} v(s)\mid Z\subseteq X,\, |Z|\leq k\}$. Although this setting is similar to our weighted variant for $k$-uniform matroids, our results do not directly generalize their work because of our assumption on the buyers' choices.




\section{Partition matroids}\label{sec:part}

The aim of this section is to prove the existence of a required price vector $p$ for instances where $M_1$ is a partition matroid of special type. 


\partition*


\begin{proof}
Since Conjectures~\ref{conj:01} and \ref{conj:02} are equivalent by replacing $M_2$ with its dual $M^*_2$, we show Conjecture~\ref{conj:02}. 
Let $M_1=(S,\cB_1)$ be a partition matroid defined by partition $S=S_1\cup\dots\cup S_q$ where $|S_i|\le 2$ for $i=1,\dots,q$, that is, $\cB_1=\{X\subseteq S\mid |X\cap S_i|=1\ \text{for}\ i=1,\dots,q\}$. Let $M_2=(S,\cB_2)$ be an arbitrary matroid such that $M_1$ and $M_2$ have a common basis.  

Let $B_1\in\cB_1\cap\cB_2$ be an arbitrary common basis. Take another common basis $B_2\in\cB_1\cap\cB_2$ (possibly $B_2=B_1$) such that $|B_1 \cap B_2|$ is minimized. We consider a bipartite digraph $D=(V, E)$ defined by
\begin{align}
V &= (B_1 \cap B_2) \cup (S \setminus (B_1 \cup B_2)), \notag \\
E &= \{(x, y) \mid x \in B_1 \cap B_2,\, y \in S \setminus (B_1 \cup B_2),\, B_1 - x + y \in \mathcal{B}_1 \} \label{eq:D}\\
   & \qquad \cup \{(y, x) \mid x \in B_1 \cap B_2,\, y \in S \setminus (B_1 \cup B_2),\, B_2 - x + y \in \mathcal{B}_2 \}. \notag
\end{align}

\begin{claim}\label{clm:02b}
The digraph $D$ is acyclic.
\end{claim}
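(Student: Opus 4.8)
The plan is to argue by contradiction. Assuming $D$ has a directed cycle, pick a shortest one $\Gamma$; I will use it to build a common basis $B_2'$ of $M_1$ and $M_2$ with $|B_1\cap B_2'|<|B_1\cap B_2|$, contradicting the choice of $B_2$. Since $D$ is bipartite with sides $A:=B_1\cap B_2$ and $C:=S\setminus(B_1\cup B_2)$ and arcs only go $A\to C$ (``type~1'', recording $B_1-x+y\in\mathcal B_1$) or $C\to A$ (``type~2'', recording $B_2-x+y\in\mathcal B_2$), a shortest directed cycle is a simple cycle of even length $2k$ and can be written as $\Gamma\colon x_1\to y_1\to x_2\to\cdots\to x_k\to y_k\to x_1$ with the $x_i\in A$ and the $y_i\in C$ all distinct, the arcs $x_i\to y_i$ of type~1 and the arcs $y_i\to x_{i+1}$ (indices mod $k$) of type~2. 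Put $X=\{x_1,\dots,x_k\}\subseteq B_1\cap B_2$ and $Y=\{y_1,\dots,y_k\}$, which is disjoint from $B_1\cup B_2$, and set $B_2':=(B_2\setminus X)\cup Y$; then $B_2\triangle B_2'=X\cup Y$ and $|B_1\cap B_2'|=|B_1\cap B_2|-k$, so it remains only to check $B_2'\in\mathcal B_1\cap\mathcal B_2$.

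For $B_2'\in\mathcal B_1$ I would use the partition structure of $M_1$: a type-1 arc $x_i\to y_i$ forces $\{x_i,y_i\}$ to be a partition class, and these classes are pairwise distinct (if $\{x_i,y_i\}=\{x_{i'},y_{i'}\}$ with $i\ne i'$ then, as $x_i\ne x_{i'}$, we would get $x_i=y_{i'}\notin B_1$, contradicting $x_i\in B_1$); since the basis $B_2$ meets each such class exactly in $\{x_i\}$, replacing every $x_i$ by $y_i$ keeps one element per class, so $B_2'\in\mathcal B_1$. For $B_2'\in\mathcal B_2$ I would apply Krogdahl's theorem (Theorem~\ref{thm:krog}) to $M_2$ with basis $B_2$: the type-2 arcs of $\Gamma$ give the perfect matching $M^*:=\{(x_{i+1},y_i)\mid i=1,\dots,k\}$ of the exchange graph $E[B_2]$ restricted to $B_2\triangle B_2'=X\cup Y$. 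The only nontrivial point, and what I expect to be the main obstacle, is to show that $M^*$ is the \emph{unique} perfect matching of $E[B_2]$ on $X\cup Y$.

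To prove uniqueness, suppose $N\ne M^*$ is another perfect matching of $E[B_2]$ on $X\cup Y$. Regarding $M^*$ and $N$ as bijections $Y\to X$ gives permutations $\mu$ (the $k$-cycle $i\mapsto i+1$) and $\tau\ne\mu$ of $\{1,\dots,k\}$, and every edge of $M^*\cup N$ corresponds to a type-2 arc of $D$; together with the type-1 arcs $x_i\to y_i$, any cyclic sequence $a_1,\dots,a_m$ of distinct indices with $a_{j+1}\in\{\mu(a_j),\tau(a_j)\}$ produces a directed cycle of $D$ of length $2m$. So it suffices to find such a sequence with $m\le k-1$ (the case $k=1$ being vacuous, since then $X\cup Y$ admits only one matching), which contradicts the minimality of $\Gamma$. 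If $\tau$ has a cycle of length $\le k-1$ in its cycle decomposition, that cycle already works; otherwise $\tau$ is a single $k$-cycle, and I would choose $i$ with $\mu(i)\ne\tau(i)$, set $a:=\mu(i)\ne i$, let $s\in\{1,\dots,k-1\}$ be the exponent with $\tau^{s}(a)=i$, and note $s\ne k-1$ (otherwise $a=\tau(i)\ne\mu(i)=a$); then $i\to a\to\tau(a)\to\cdots\to\tau^{s-1}(a)\to i$ is the desired cycle, of length $s+1\le k-1$. This forces $N=M^*$, so Krogdahl gives $B_2'\in\mathcal B_2$, and $B_2'$ is a common basis with $|B_1\cap B_2'|<|B_1\cap B_2|$ — contradicting the choice of $B_2$. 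Hence $D$ is acyclic. The small combinatorial lemma on permutations in the last step is where I expect to need the most care.
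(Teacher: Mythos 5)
Your proof is correct and follows essentially the same route as the paper: take a shortest dicycle, use the partition-class observation to handle $M_1$, and apply Krogdahl's theorem (Theorem~\ref{thm:krog}) to $M_2$ for the swapped set $B_2\triangle V(\Gamma)$, contradicting the minimality of $|B_1\cap B_2|$. The only divergence is the uniqueness step, which the paper settles in one line --- a shortest dicycle has no chord, and any edge of a second perfect matching on $X\cup Y$ would be exactly such a chord (a type-2 arc shortcutting $\Gamma$) --- whereas your permutation argument reaches the same conclusion by a longer but valid detour.
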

\begin{proof}
Let $x\in B_1\cap B_2$ and $y\in S\setminus (B_1\cup B_2)$. As $M_1$ is defined on a partition consisting of classes of size at most $2$, $B-x+y\in\cB_1$ implies that $\{x,y\}$ is one of the partition classes. This implies that $B_1-x+y\in\cB_1$ if and only if $B_2+x-y\in\cB_1$.

Now suppose to the contrary that $D$ contains a dicycle. Choose a dicycle $C$ with the smallest number of vertices, which implies that $C$ has no chord. Then, $B'_2:=B_2 \triangle V(C)$ is a common basis of $M_1$ and $M_2$ by the above observation and Theorem~\ref{thm:krog}. Since $|B_1\cap B'_2|<|B_1\cap B_2|$, this contradicts that $|B_1 \cap B_2|$ is minimized.
\end{proof}

Let $n=|S|$. We now consider a function $p: S \to \mathbb{R}$ satisfying the following: $p(x) := 0$ for $x \in B_1 \setminus B_2$, $p(x) := n+1$ for $x \in B_2 \setminus B_1$, $p(x)$ are distinct values in $\{1, 2, \dots , n\}$ for $x \in V$, and  $p(x) < p(y)$ for $(x, y) \in E$. Note that such a function exists by Claim~\ref{clm:02b}, which can be found easily by the topological sorting. In what follows, we show that $p$ satisfies that $\argmin_{X \in \mathcal{B}_1} p(X) = \{B_1\}$ and $\argmax_{X \in \mathcal{B}_2} p(X) = \{B_2\}$. 
\begin{claim}\label{clm:03b}
$\argmin_{X \in \mathcal{B}_1} p(X) = \{B_1\}$ and $\argmax_{X \in \mathcal{B}_2} p(X) = \{B_2\}$. 
\end{claim}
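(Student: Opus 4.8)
The plan is to check separately that $B_1$ is the unique $p$-minimizer over $\cB_1$ and that $B_2$ is the unique $p$-maximizer over $\cB_2$. Both halves rest on a classical matroid fact: for any two bases $A,A'$ of a matroid $M=(S,\cB)$ there is a bijection $\sigma\colon A\setminus A'\to A'\setminus A$ with $A-x+\sigma(x)\in\cB$ for every $x\in A\setminus A'$ (see e.g.\ \cite{schrijver2003combinatorial}). Consequently, if $w\colon S\to\bR$ satisfies $w(x)<w(y)$ for every $x\in A$ and $y\in S\setminus A$ with $A-x+y\in\cB$, then for any basis $A'\neq A$ we get $w(A')=w(A)+\sum_{x\in A\setminus A'}\bigl(w(\sigma(x))-w(x)\bigr)>w(A)$, so $\{A\}=\argmin_{X\in\cB}w(X)$; the dual statement (replacing $<$ by $>$) characterizes the unique $\argmax$. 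So it suffices to verify these local exchange inequalities for $(M_1,p,B_1)$ and $(M_2,p,B_2)$.

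For $\cB_1$, take $x\in B_1$ and $y\in S\setminus B_1$ with $B_1-x+y\in\cB_1$. Since $M_1$ is a partition matroid with classes of size at most $2$, this forces $\{x,y\}$ to be a partition class. If $y\in B_2$, then $x\in B_1\setminus B_2$ and $y\in B_2\setminus B_1$, hence $p(x)=0<n+1=p(y)$. If $y\notin B_2$, then $x\in B_1\cap B_2$ and $y\in S\setminus(B_1\cup B_2)$, so both lie in $V$ and $(x,y)\in E$ by the first family of arcs in \eqref{eq:D}; since $p$ respects a topological order of $D$, $p(x)<p(y)$. Either way $p(x)<p(y)$, and the fact above gives $\argmin_{X\in\cB_1}p(X)=\{B_1\}$.

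For $\cB_2$, take $x\in B_2$ and $y\in S\setminus B_2$ with $B_2-x+y\in\cB_2$; we must show $p(x)>p(y)$. If $x\in B_2\setminus B_1$, then $p(x)=n+1$ while $p(y)\le n$ (as $y\notin B_2$ implies $p(y)\in\{0\}\cup\{1,\dots,n\}$), so $p(x)>p(y)$. Otherwise $x\in B_1\cap B_2\subseteq V$, so $p(x)\ge 1$. If $y\in B_1\setminus B_2$, then $p(y)=0<p(x)$. If $y\in S\setminus(B_1\cup B_2)$, then $x\in B_1\cap B_2$ and $y\in S\setminus(B_1\cup B_2)$ with $B_2-x+y\in\cB_2$, so $(y,x)\in E$ by the second family of arcs in \eqref{eq:D}, whence $p(y)<p(x)$. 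Thus $p(x)>p(y)$ in every case, and the dual form of the fact gives $\argmax_{X\in\cB_2}p(X)=\{B_2\}$.

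The only point that needs care is the "mixed" exchanges, where an element of $B_1\cap B_2$ is swapped for an element of $S\setminus(B_1\cup B_2)$: these are exactly the exchanges recorded by the arcs of $D$, and they are controlled precisely because $p$ was chosen as a topological order of the digraph $D$, which is acyclic by Claim~\ref{clm:02b}. Every remaining exchange touches $B_1\triangle B_2$ and is settled crudely by the extreme prices $0$ and $n+1$. I expect no further obstacle once the characterization of the unique optimal basis via local exchanges is in hand.
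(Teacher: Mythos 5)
Your proof is correct, but it takes a genuinely different route from the paper. The paper establishes the claim by simulating the matroid greedy algorithm: it sets $S_k=\{x\in S\mid p(x)\le k\}$ and proves by induction on $k$ that the unique cheapest maximal independent set of $M_1$ inside $S_k$ is exactly $B_1\cap S_k$, the inductive step being blocked precisely by the arcs of $D$ and the topological order (and symmetrically for $M_2$). You instead invoke the local-to-global optimality principle for matroid bases: Brualdi's bijective exchange theorem gives $w(A')-w(A)=\sum_{x\in A\setminus A'}\bigl(w(\sigma(x))-w(x)\bigr)$, so strict improvement on every single exchange pair forces $A$ to be the unique optimum, and you then verify the exchange inequalities for $(M_1,p,B_1)$ and $(M_2,p,B_2)$ directly from the definition of $p$ and of the arc set $E$. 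Both arguments are sound; yours is more modular (it isolates a clean classical lemma and reduces the claim to a finite case check), while the paper's greedy induction is self-contained and, notably, nowhere uses that $M_1$ is a partition matroid. That last point matters because Claim~\ref{clm:03b} is reused verbatim in the proof of Theorem~\ref{thm:sbo}, where $M_1$ is only strongly base orderable; your write-up of the $\mathcal{B}_1$ half leans on the partition structure to force $\{x,y\}$ to be a class and to place $x$ relative to $B_2$ (and, incidentally, the deductions ``$y\in B_2\Rightarrow x\in B_1\setminus B_2$'' and ``$y\notin B_2\Rightarrow x\in B_1\cap B_2$'' silently use $B_2\in\mathcal{B}_1$). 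This dependence is easily removed: in the cases your partition argument excludes, namely $x\in B_1\setminus B_2$ with $y\in S\setminus(B_1\cup B_2)$, or $x\in B_1\cap B_2$ with $y\in B_2\setminus B_1$, the extreme prices $0$ and $n+1$ give $p(x)<p(y)$ at once, so the identical case analysis proves the claim for arbitrary $M_1$ and $M_2$, matching the generality of the paper's proof.
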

\begin{proof}
For a non-negative integer $k$, let $S_k := \{x \in S \mid p(x) \le k\}$ and let $I_k$ be a minimizer of $p(X)$ subject to $X$ being a maximal independent set of $M_1$ and $X \subseteq S_k$. Note that $I_k$ can be computed by a greedy algorithm. Since $S_n$ contains a basis $B_1$, the greedy algorithm chooses no element in $B_2 \setminus B_1$, which means that $I_k \cap (B_2 \setminus B_1) = \emptyset$ for every $k$. We also note that $I_k$ is uniquely determined for each $k$, since $p(x)$'s are distinct for $x \in V$. 

We show that $I_k = B_1 \cap S_k$ for every $k$ by induction on $k$. Since $I_0 = B_1 \setminus B_2$, it is obvious that $I_0 = B_1 \cap S_0$. Fix $k \ge 1$ and assume that $I_{k-1} = B_1 \cap S_{k-1}$. Then, we have the following. 
\begin{itemize}
\item If there exists $x \in B_1 \cap B_2$ with $p(x)=k$, then 
$I_{k} = I_{k-1} +x$, and hence $I_k = B_1 \cap S_k$. 
\item Suppose that there exists $y \in S \setminus (B_1 \cup B_2)$ with $p(y)=k$.  We show that $I_{k-1} +y$ is not independent in $M_1$. Suppose to the contrary that $I_{k-1} +y$ is independent. Then, there exists $x \in B_1 \setminus I_{k-1}$ such that $B_1 - x + y \in \mathcal{B}_1$, and hence $(x, y) \in E$. By the choice of $p$, we obtain $p(x) < p(y)$, i.e., $x \in S_{k-1}$. This contradicts $x \in B_1 \setminus I_{k-1}$, because $S_{k-1} \cap (B_1 \setminus I_{k-1}) = \emptyset$ by the induction hypothesis. Therefore, $I_{k-1} +y$ is not independent in $M_1$, which shows that $I_{k} = I_{k-1}$ and $I_k = B_1 \cap S_k$. 
\item If there exists no $x \in V$ with $p(x)=k$, then $I_{k} = I_{k-1}$, and hence $I_k = B_1 \cap S_k$. 
\end{itemize}
Therefore, $I_{k} = B_1 \cap S_{k}$ holds for every $k$ by induction. This shows that $I_{n+1} = B_1 \cap S_{n+1} = B_1$, and hence $\argmin_{X \in \mathcal{B}_1} p(X) = \{ I_{n+1} \} = \{B_1\}$.

By a similar argument, we obtain $\argmax_{X \in \mathcal{B}_2} p(X) = \{B_2\}$. 
\end{proof}

Since $B_1, B_2 \in \mathcal{B}_1 \cap \mathcal{B}_2$, this claim shows that $p$ satisfies the requirements in Conjecture~\ref{conj:02}. Thus, Conjecture~\ref{conj:02} holds, and hence Conjecture~\ref{conj:01} also holds.


This together with Lemma~\ref{lem:00to01} shows that Conjecture~\ref{conj:00} also holds. Note that, in the proof of Lemma~\ref{lem:00to01}, we modify given matroids by deleting and contracting some elements, but this modification does not affect the assumption on $M_1$. That is, if $M_1$ is a partition matroid with partition classes of size at most $2$ and with all-ones upper bound on the partition classes, then the obtained matroid $M'_1$ is also a partition matroid of this type.
\end{proof}

\begin{remark}
Note that in the proof of Theorem~\ref{thm:partition}, we fixed the basis $B_1\in\cB_1$ arbitrarily. That is, for any $B_1\in\cB_1$, the optimal price vector $p$ can be set in such a way that the maximum utility of the buyer corresponding to $M_1$ is attained on $B_1$. It is not difficult to see that the analogous statement holds for any basis $B_2\in\cB_2$.   
\end{remark}

\begin{remark}
Even when $\mathcal{B}_1$ is a base family of a partition matroid as in Theorem~\ref{thm:partition}, if $\mathcal{B}_2$ is an arbitrary set family of $S$, then the requirements in Conjecture~\ref{conj:01} do not necessarily hold. To see this, suppose that $S=\{1,2,3,4\}$, $\mathcal{B}_1=\{\{1,3\}, \{1,4\}, \{2,3\}, \{2,4\}\}$, and $\mathcal{B}_2=\{\{2,4\}, \{1,2\}, \{3,4\}\}$. Then, $(B_1, B_2) = (\{1,3\}, \{2,4\})$ is a unique pair of disjoint sets such that $B_1 \in \mathcal{B}_1$, $B_2 \in \mathcal{B}_2$, and $B_1 \cup B_2 =S$. If $p$ satisfies the requirements in Conjecture~\ref{conj:01}, then $p(1)<p(2)$ and $p(3)<p(4)$ hold by the first requirement and $p(4)<p(1)$ and $p(2)<p(3)$ hold by the second requirement. This shows that such $p$ does not exist. 
\end{remark}

\section{Strongly base orderable matroids}\label{sec:sbo}

In this section, we show that Conjectures~\ref{conj:01} and~\ref{conj:00} hold for strongly base orderable matroids. 
The proof is based on a similar approach to that of Theorem~\ref{thm:partition}. Nevertheless, there are small but crucial differences.

\sbo*


\begin{proof}
In order to show Conjecture~\ref{conj:01}, we first show Conjecture~\ref{conj:02} under the assumption that $M_1$ and $M_2$ are strongly base orderable. Let $M_1=(S, \mathcal{B}_1)$ and $M_2=(S, \mathcal{B}_2)$ be strongly base orderable matroids that have a common basis.  We take two common bases $B_1, B_2 \in \mathcal{B}_1 \cap \mathcal{B}_2$ (possibly $B_1 = B_2$) such that $|B_1 \cap B_2|$ is minimized.  For each element $x \in S$, we add a parallel copy $x'$ of $x$ to the matroid $M_i$ and denote the matroid thus obtained by $M^+_i = (S \cup S', \mathcal{B}^+_i)$ for $i\in\{1,2\}$.  We denote $X' := \{x' \mid x \in X\}$ for $X \subseteq S$.  Let $2M^+_i = (S \cup S', 2 \mathcal{B}^+_i)$ be the sum of two copies of $M^+_i$. As $M^+_i$ clearly has two disjoint bases, we have $2 \mathcal{B}^+_i := \{ Y_1 \cup Y_2 \mid Y_1 \text{ and } Y_2 \text{ are disjoint bases of } M^+_i \}. $

\begin{claim}\label{clm:01}
For $i\in \{1, 2\}$, $2 M^+_i$ is a strongly base orderable matroid. 
\end{claim}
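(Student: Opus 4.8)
The plan is to deduce the claim from two closure properties of the class of strongly base orderable matroids, applied to the fact that $2M_i^+ = M_i^+ + M_i^+$ is the matroid union of two copies of $M_i^+$, where $M_i^+$ is a parallel extension of $M_i$:
\begin{itemize}
\item[(P)] adding parallel copies of elements preserves strong base orderability;
\item[(U)] if $N$ is strongly base orderable and has two disjoint bases, then $N+N$ is strongly base orderable.
\end{itemize}
Granting (P) and (U), the claim is immediate: $M_i$ is strongly base orderable by hypothesis, so $M_i^+$ is strongly base orderable by (P); and $M_i^+$ has two disjoint bases (any basis $B$ of $M_i$ together with the set $B'$ of parallel copies of another basis — indeed $B$ and its own copy $B'$ work), so by (U) the matroid $2M_i^+$ is strongly base orderable.

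For (P) I would first record a small but convenient fact: \emph{every strongly base orderable bijection $f\colon B_1\to B_2$ is the identity on $B_1\cap B_2$}. Indeed, applying the defining property to $X=\{e\}$ for $e\in B_1\cap B_2$ shows $B_1-e+f(e)\in\mathcal{B}$, which forces $f(e)=e$ whenever $f(e)\in B_1$; a short counting argument (using that $f$ is a bijection and $|B_1\setminus B_2|=|B_2\setminus B_1|$) then rules out $f$ sending any element of $B_1\cap B_2$ outside $B_1\cap B_2$. With this, (P) is a routine lifting argument: given bases $P,Q$ of the parallel extension, let $\bar P,\bar Q$ be their projections to $M$ (replace each added copy by its original), take a strongly base orderable bijection $\bar f\colon\bar P\to\bar Q$, and lift it to $f\colon P\to Q$ by sending the element of $P$ lying over $x\in\bar P$ to the (unique) representative in $Q$ of $\bar f(x)$. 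One checks $f$ is a bijection and that, for every $X\subseteq P$, the set $(P\setminus X)\cup f(X)$ projects to a basis of $M$ and meets each parallel class at most once, hence is a basis of the parallel extension.

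For (U), the approach is to glue strongly base orderable bijections. Let $\mathcal A$ and $\mathcal C$ be bases of $N+N$; since $N$ has two disjoint bases, $\mathcal A=A_1\sqcup A_2$ and $\mathcal C=C_1\sqcup C_2$ for some bases $A_1,A_2,C_1,C_2$ of $N$, which are not yet fixed. Suppose we can choose these decompositions so that $A_1\cap C_2=A_2\cap C_1=\emptyset$ (equivalently, the two decompositions agree on $\mathcal A\cap\mathcal C$). Pick strongly base orderable bijections $f_1\colon A_1\to C_1$ and $f_2\colon A_2\to C_2$ in $N$ and set $g:=f_1\cup f_2\colon\mathcal A\to\mathcal C$, a bijection. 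For $Z\subseteq\mathcal A$, write $Z_j:=Z\cap A_j$; then $(\mathcal A\setminus Z)\cup g(Z)$ is the union of $(A_1\setminus Z_1)\cup f_1(Z_1)$ and $(A_2\setminus Z_2)\cup f_2(Z_2)$. Each of these is a basis of $N$ by the choice of $f_j$, and they are disjoint: the four cross intersections are controlled by $A_1\cap A_2=\emptyset$, $C_1\cap C_2=\emptyset$, and precisely the assumed conditions $A_1\cap C_2=A_2\cap C_1=\emptyset$. Hence $(\mathcal A\setminus Z)\cup g(Z)$ is a basis of $N+N$, so $g$ witnesses strong base orderability.

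The main obstacle is the alignment step: producing decompositions of $\mathcal A$ and $\mathcal C$ into bases of $N$ that agree on $I:=\mathcal A\cap\mathcal C$. Here is how I would attack it. Since $I$ is independent in $N+N$, write $I=I_1\sqcup I_2$ with $I_1,I_2$ independent in $N$; the goal becomes to complete this fixed coloring of $I$ to a decomposition of $\mathcal A$ into bases \emph{and} to a decomposition of $\mathcal C$ into bases. For $\mathcal A$, this is the statement that $\mathcal A\setminus I$ partitions into a basis of $N/I_1$ and a basis of $N/I_2$; by the Nash-Williams matroid-union (covering) theorem this reduces to the inequality $r_{N/I_1}(Y)+r_{N/I_2}(Y)\ge|Y|$ for all $Y\subseteq\mathcal A\setminus I$, which one proves from submodularity of $r_N$ and the independence of $\mathcal A$ in $N+N$; the same for $\mathcal C$. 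The delicate point is to pick a single coloring $(I_1,I_2)$ that works on both sides simultaneously, which I expect to require analyzing the traces on $I$ of the (matroid-structured) families of valid decompositions of $\mathcal A$ and of $\mathcal C$ and showing they intersect. Alternatively, one may bypass (U) altogether by invoking the known closure of the strongly base orderable class under matroid union. Either way the construction is effective: the aligned decompositions and the bijection $g$ are obtained in polynomial time from the given strongly base orderable bijections of $M_i$ (hence of $M_i^+$), matching the algorithmic claim in Theorem~\ref{thm:sbo}.
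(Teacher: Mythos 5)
Your overall route is the same as the paper's: prove that $M^+_i$ is strongly base orderable and then build a bijection for $2M^+_i$ by gluing the two bijections obtained from decompositions of the bases into two disjoint bases of $M^+_i$. Your auxiliary fact is true: if $f\colon B_1\to B_2$ is a strongly base orderable bijection and $f(e')=e$ for some $e\in B_1\cap B_2$ with $e'\neq e$, then $(B_1\setminus\{e'\})\cup\{f(e')\}=B_1\setminus\{e'\}$ has the wrong cardinality, so $f$ fixes $B_1\cap B_2$ pointwise (this one-line argument is cleaner than your counting sketch), and with it your lifting argument for (P) is sound. Your insistence on aligned decompositions in (U) is also correct and, in fact, sharper than the paper: the paper's proof of this claim glues $f_1$ and $f_2$ for an \emph{arbitrary} pair of decompositions and never addresses whether the two resulting bases are disjoint, and with misaligned decompositions the glued map can genuinely fail. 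For instance, with $M_i=U_{1,2}$ one has $M^+_i=U_{1,4}$ on $\{a,a',b,b'\}$; take $X_1=\{a\}\sqcup\{b\}$, $X_2=\{a'\}\sqcup\{a\}$, $f_1(a)=a'$, $f_2(b)=a$, and $X=\{b\}$, which gives $(X_1\setminus X)\cup f(X)=\{a\}\notin 2\mathcal{B}^+_i$. So the cross-intersection issue you isolate is exactly what makes the statement nontrivial.

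The gap is that you do not prove the alignment step, which is the crux of your route. You reduce it to choosing a single bicoloring of $I=\mathcal{A}\cap\mathcal{C}$ that extends to a two-basis decomposition of $\mathcal{A}$ and of $\mathcal{C}$ simultaneously, observe that each side separately is a matroid-union covering condition, and then say you ``expect'' the two families of achievable traces on $I$ to intersect; no argument is given, and none is routine, since the valid decompositions of $\mathcal{A}$ are precisely the common bases of $N|_{\mathcal{A}}$ and its dual, so the traces on $I$ carry no obvious exchange structure that would force the two families to meet. Your escape hatch — that strongly base orderable matroids are closed under matroid union (equivalently, under induction by bipartite graphs, the union $N+N$ being induced from the direct sum $N\oplus N$, which is easily strongly base orderable) — is indeed a known theorem and would finish the proof, but you neither prove it nor cite a specific source, so as written the proposal is incomplete at exactly the step where care is required. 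To make it a proof, either carry out the alignment argument in full or explicitly invoke (with reference) or prove the closure of the strongly base orderable class under matroid union.
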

\begin{proof}
Fix $i \in \{1, 2\}$. We can easily see that $M^+_i$ is strongly base orderable.  Suppose that we are given two bases $X_1, X_2 \in 2 \mathcal{B}^+_i$, and  suppose also that $X_1 = Y^1_1 \cup Y^2_1$ and $X_2 = Y^1_2 \cup Y^2_2$, where $Y^1_1, Y^2_1, Y^1_2, Y^2_2 \in \mathcal{B}^+_i$. Since $M^+_i$ is strongly base orderable, for $j \in \{1, 2\}$, there exists a bijection $f_j: Y^j_1 \to Y^j_2$ such that $(Y^j_1 \setminus X) \cup f_j(X) \in \mathcal{B}^+_i$ for any $X \subseteq Y^j_1$. Then, $f_1$ and $f_2$ naturally define a bijection $f: X_1 \to X_2$ such that $(X_1 \setminus X) \cup f(X) \in 2\mathcal{B}^+_i$ for any $X \subseteq X_1$. This shows that $2 M^+_i$ is strongly base orderable. 
\end{proof}

Let $X_0 := (B_1 \cup B_2) \cup (B_1 \cap B_2)'$. Then, $X_0$ is a common basis of $2M^+_1$ and $2M^+_2$. We consider a bipartite digraph $D^+ = (V, E^+)$ defined by 
\begin{align*}
V &= (B_1 \cap B_2) \cup (S \setminus (B_1 \cup B_2)), \\
E^+ &= \{(x, y) \mid x \in B_1 \cap B_2,\, y \in S \setminus (B_1 \cup B_2),\, X_0 - x + y \in 2\mathcal{B}^+_1 \} \\
   & \qquad \cup \{(y, x) \mid x \in B_1 \cap B_2,\, y \in S \setminus (B_1 \cup B_2),\, X_0 - x + y \in 2\mathcal{B}^+_2 \}. 
\end{align*}

\begin{claim}\label{clm:02}
The digraph $D^+$ is acyclic. 
\end{claim}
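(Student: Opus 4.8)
The plan is to argue by contradiction, following the scheme of the proof of Claim~\ref{clm:02b}, but with the ad hoc partition-class exchange identity used there replaced by Krogdahl's theorem (Theorem~\ref{thm:krog}) applied inside the matroids $2M^+_1$ and $2M^+_2$ together with the Davies--McDiarmid packing result (Theorem~\ref{thm:DM76}). Suppose $D^+$ contains a dicycle, and choose one, $C$, with the fewest vertices; as usual this forces $C$ to be chordless. Since $D^+$ is bipartite between $B_1 \cap B_2$ and $S \setminus (B_1 \cup B_2)$, we may write $C$ as $x_1 \to y_1 \to x_2 \to \cdots \to x_k \to y_k \to x_1$ with $x_j \in B_1 \cap B_2$ and $y_j \in S \setminus (B_1 \cup B_2)$, where each arc $x_j \to y_j$ witnesses $X_0 - x_j + y_j \in 2\mathcal{B}^+_1$ and each arc $y_j \to x_{j+1}$ (indices taken modulo $k$) witnesses $X_0 - x_{j+1} + y_j \in 2\mathcal{B}^+_2$.

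First I would show $B' := X_0 \triangle V(C) = X_0 - \{x_1, \dots, x_k\} + \{y_1, \dots, y_k\}$ lies in $2\mathcal{B}^+_1 \cap 2\mathcal{B}^+_2$. Look at the exchange graph $E[X_0]$ of the matroid $2M^+_1$ with respect to the basis $X_0$. Its restriction to $V(C)$ is a bipartite graph between $X_0 \cap V(C) = \{x_1, \dots, x_k\}$ and $V(C) \setminus X_0 = \{y_1, \dots, y_k\}$, and $x_i y_j$ is an edge exactly when $X_0 - x_i + y_j \in 2\mathcal{B}^+_1$, i.e., exactly when $x_i \to y_j$ is an $M_1$-type arc of $D^+$. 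Since $C$ has no chord, the only such arcs inside $V(C)$ are $x_1 \to y_1, \dots, x_k \to y_k$; hence $E[X_0]$ restricted to $V(C)$ is precisely the perfect matching $\{x_j y_j : j\}$, which is therefore its unique perfect matching. As $|X_0| = |B'|$ and $X_0 \triangle B' = V(C)$, Theorem~\ref{thm:krog} gives $B' \in 2\mathcal{B}^+_1$. The symmetric argument for $2M^+_2$, using that the $M_2$-type arcs of $C$ form the perfect matching $\{x_{j+1} y_j : j\}$ on $V(C)$, yields $B' \in 2\mathcal{B}^+_2$.

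Next, $M^+_1$ and $M^+_2$ are strongly base orderable (as noted inside the proof of Claim~\ref{clm:01}), and $B'$ is a disjoint union of two bases of $M^+_1$ as well as of two bases of $M^+_2$; so Theorem~\ref{thm:DM76} lets us write $B' = R_1 \cup R_2$ as a disjoint union of two common bases $R_1, R_2$ of $M^+_1$ and $M^+_2$. Since in $M^+_i$ the elements $s$ and $s'$ are parallel, each $R_\ell$ contains at most one element of $\{s, s'\}$ for every $s \in S$; replacing each copy $s' \in R_\ell$ by its original $s$ then yields a set $R^o_\ell \subseteq S$ with $|R^o_\ell| = |R_\ell|$ that is a common basis of $M_1$ and $M_2$. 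Finally I would determine $R^o_1 \cap R^o_2$. The elements of $B'$ occurring ``doubled'' (both an original and its copy) are exactly those of $(B_1 \cap B_2) \setminus \{x_1, \dots, x_k\}$, and for each such element one of its two copies goes into $R_1$ and the other into $R_2$; every remaining element of $B'$ occurs once, and the originals underlying these singletons are pairwise distinct. Consequently $R^o_1 \cap R^o_2 = (B_1 \cap B_2) \setminus \{x_1, \dots, x_k\}$, which has $|B_1 \cap B_2| - k < |B_1 \cap B_2|$ elements because $k \ge 1$. This contradicts the minimality of $|B_1 \cap B_2|$ over all pairs of common bases of $M_1$ and $M_2$, so $D^+$ is acyclic.

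The main obstacle is the first step: pinning down the correspondence between arcs of $D^+$ and edges of the exchange graph of $2M^+_i$ carefully enough that chordlessness of $C$ really forces that exchange graph, restricted to $V(C)$, to be a single perfect matching, which is exactly what makes Krogdahl's theorem applicable. Working with $2M^+_i$ and the parallel copies, rather than with $M_i$ directly, is precisely what substitutes for the partition-class exchange identity available in the setting of Theorem~\ref{thm:partition}; once $B' \in 2\mathcal{B}^+_1 \cap 2\mathcal{B}^+_2$ is in hand, the rest — Davies--McDiarmid plus passing to underlying originals — is routine.
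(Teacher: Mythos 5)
Your proposal is correct and follows essentially the same route as the paper: take a minimum (hence chordless) dicycle $C$, use Theorem~\ref{thm:krog} to conclude $X_0 \triangle V(C) \in 2\mathcal{B}^+_1 \cap 2\mathcal{B}^+_2$, apply Theorem~\ref{thm:DM76} (with the strong base orderability from Claim~\ref{clm:01}) to split it into two common bases of $M^+_1$ and $M^+_2$, and project back to $S$ to contradict the minimality of $|B_1 \cap B_2|$. You merely spell out in more detail the steps the paper leaves implicit, namely the identification of the exchange-graph edges on $V(C)$ with the arcs of $C$ and the computation $R^o_1 \cap R^o_2 = (B_1 \cap B_2)\setminus\{x_1,\dots,x_k\}$.
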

\begin{proof}
Suppose to the contrary that $D^+$ contains a dicycle. 
Choose a dicycle $C$ with the smallest number of vertices, which implies that $C$ has no chord. 
Then, $X_0 \triangle V(C)$ is a common basis of $2M^+_1$ and $2M^+_2$ by Theorem~\ref{thm:krog}.  
By Theorem~\ref{thm:DM76} and Claim~\ref{clm:01}, 
$X_0 \triangle V(C)$ can be partitioned into two common bases of $M^+_1$ and $M^+_2$. 
Let $\tilde{B}_1$ and $\tilde{B}_2$ be the sets in $S$ corresponding to these common bases. 
Then, $\tilde{B}_1, \tilde{B}_2 \in \mathcal{B}_1 \cap \mathcal{B}_2$ and $|\tilde{B}_1 \cap \tilde{B}_2| < |B_1 \cap B_2|$. 
This contradicts that $|B_1 \cap B_2|$ is minimized.
\end{proof}

We now consider the digraph $D=(V,E)$ defined by (\ref{eq:D}). 
For $x\in B_1 \cap B_2$ and $y\in S \setminus (B_1 \cup B_2)$, we observe that $B_1-x+y\in \mathcal{B}_1$ implies $X_0-x+y\in 2\mathcal{B}^+_1$ and $B_2-x+y\in \mathcal{B}_2$ implies $X_0-x+y\in 2\mathcal{B}^+_2$. This shows that $D$ is a subgraph of $D^+$, and hence $D$ is acyclic by Claim~\ref{clm:02}. 
Therefore, we can find a function $p: S \to \mathbb{R}$ such that
$p(x) := 0$ for $x \in B_1 \setminus B_2$, 
$p(x) := |S|+1$ for $x \in B_2 \setminus B_1$, 
$p(x)$ are distinct values in $\{1, 2, \dots , |S|\}$ for $x \in V$, and 
$p(x) < p(y)$ for $(x, y) \in E$.
Then, Claim~\ref{clm:03b} shows that $\argmin_{X \in \mathcal{B}_1} p(X) = \{B_1\}$ and $\argmax_{X \in \mathcal{B}_2} p(X) = \{B_2\}$. 
Since $B_1, B_2 \in \mathcal{B}_1 \cap \mathcal{B}_2$, $p$ satisfies the requirements in Conjecture~\ref{conj:02}. Thus, Conjecture~\ref{conj:02} holds.

This proof can be converted to a polynomial-time algorithm for computing $p$ as follows. We first pick up two arbitrary common bases $B_1, B_2 \in \mathcal{B}_1 \cap \mathcal{B}_2$ and 
construct a digraph $D^+$ as above. 
If $D^+$ is acyclic, then we can find an appropriate function $p$. Otherwise, the proof of Claim~\ref{clm:02} shows that we can find $\tilde{B}_1, \tilde{B}_2 \in \mathcal{B}_1 \cap \mathcal{B}_2$ with $|\tilde{B}_1 \cap \tilde{B}_2| < |B_1 \cap B_2|$. Then, we update $B_i \leftarrow \tilde{B}_i$ for $i \in \{1,2\}$, construct $D^+$, and repeat this procedure. 
Since $|B_1 \cap B_2|$ decreases monotonically, this procedure is executed at most $|S|$ times.

Recall that Conjectures~\ref{conj:01} and \ref{conj:02} are equivalent by replacing $M_2$ with $M^*_2$. Since $M_2$ is strongly base orderable if and only if $M^*_2$ is strongly base orderable, Conjecture~\ref{conj:01} also holds for strongly base orderable matroids. 

This together with Lemma~\ref{lem:00to01} shows that Conjecture~\ref{conj:00} also holds. 
We note that, if $M_1$ and $M_2$ are strongly base orderbale matroids, then the matroids $M'_1$ and $M'_2$ obtained by deletion and contraction in the proof of Lemma~\ref{lem:00to01} are also strongly base orderable. 
\end{proof}

Finally in this section, we show an application of Theorem~\ref{thm:sbo} to bipartite matching, which is of independent interest. For a vertex $v$ in a graph, let $\delta(v)$ denote the set of all the edges incident to $v$.  

\begin{corollary}
For a bipartite graph $G=(U,V;E)$ containing a perfect matching, there exists a weight function $w: E \to \mathbb{R}$ satisfying the following conditions.
\begin{enumerate}
    \item For each $u \in U$, let $e_u$ be a lightest edge in $\delta(u)$ with respect to $w$. Then, $\{e_u \mid u \in U\}$ is a perfect matching in $G$. 
    \item For each $v \in V$, let $e_v$ be a heaviest edge in $\delta(v)$ with respect to $w$. Then, $\{e_v \mid v \in V\}$ is a perfect matching in $G$. 
\end{enumerate}
\end{corollary}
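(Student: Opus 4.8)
The plan is to realize the two ``one edge per vertex'' requirements as partition matroids on the edge set and then invoke Theorem~\ref{thm:sbo}. Concretely, set $S=E$ and let $M_1=(E,\mathcal{B}_1)$ be the partition matroid whose classes are the sets $\delta(u)$ for $u\in U$, with all-ones upper bounds; thus $\mathcal{B}_1$ consists of the edge sets containing exactly one edge incident to each vertex of $U$. Define $M_2=(E,\mathcal{B}_2)$ analogously, with classes $\delta(v)$ for $v\in V$. Since $G$ is bipartite, $\{\delta(u)\mid u\in U\}$ and $\{\delta(v)\mid v\in V\}$ are genuine partitions of $E$, and since $G$ has a perfect matching every class is nonempty. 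A set $F\subseteq E$ lies in $\mathcal{B}_1\cap\mathcal{B}_2$ exactly when it has one edge at every vertex of $U$ and one at every vertex of $V$, i.e.\ precisely when $F$ is a perfect matching of $G$ (here we use $|U|=|V|$, which follows from the existence of a perfect matching); hence $\mathcal{B}_1\cap\mathcal{B}_2\neq\emptyset$ by hypothesis.

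Next I would observe that $M_1$ and $M_2$, being partition matroids, are direct sums of uniform matroids and therefore strongly base orderable. Theorem~\ref{thm:sbo} — whose proof in fact establishes Conjecture~\ref{conj:02} for strongly base orderable matroids having a common basis — then supplies a function $p:E\to\mathbb{R}$ such that every $B_1\in\argmin_{X\in\mathcal{B}_1}p(X)$ satisfies $B_1\in\mathcal{B}_2$, and every $B_2\in\argmax_{X\in\mathcal{B}_2}p(X)$ satisfies $B_2\in\mathcal{B}_1$. I claim $w:=p$ works.

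Finally, for each $u\in U$ let $e_u$ be any lightest edge of $\delta(u)$ with respect to $w$. Because $M_1$ is the direct sum of the classes $\delta(u)$, minimization of $w$ over $\mathcal{B}_1$ decomposes class by class, so $w(\{e_u\mid u\in U\})=\sum_{u\in U}\min_{e\in\delta(u)}w(e)=\min_{X\in\mathcal{B}_1}w(X)$; hence $\{e_u\mid u\in U\}\in\argmin_{X\in\mathcal{B}_1}w(X)\subseteq\mathcal{B}_2$, and as it also lies in $\mathcal{B}_1$ it is a perfect matching, which is the first condition. Symmetrically, for each $v\in V$ a heaviest-edge choice $e_v\in\delta(v)$ gives $\{e_v\mid v\in V\}\in\argmax_{X\in\mathcal{B}_2}w(X)\subseteq\mathcal{B}_1$, again a perfect matching, which is the second condition.

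There is no serious obstacle here; the only point to get right is the dictionary between ``pick the lightest (heaviest) edge at each vertex'' and ``pick a cheapest (most expensive) basis of a partition matroid,'' which is immediate from the direct-sum structure. One could equivalently phrase the argument through Conjecture~\ref{conj:01} by replacing $M_2$ with its dual, since duals of strongly base orderable matroids are strongly base orderable, but the $\argmin/\argmax$ formulation of Conjecture~\ref{conj:02} matches the two conditions of the corollary more directly. Note also that Theorem~\ref{thm:partition} does not suffice for this application, because a vertex of degree larger than two produces a partition class of size larger than two.
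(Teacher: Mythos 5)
Your proposal is correct and takes essentially the same route as the paper: it models the two per-vertex requirements as partition matroids $M_1,M_2$ on $E$, observes these are strongly base orderable, and applies Conjecture~\ref{conj:02} as established in the proof of Theorem~\ref{thm:sbo}, with the class-by-class decomposition translating lightest/heaviest edge choices into cheapest/most expensive bases. The paper's own proof is simply a terser statement of this same argument, so no further comparison is needed.
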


\begin{proof}
Let $\mathcal{B}_1=\{F \subseteq E \mid |F \cap \delta(u)|=1 \text{ for any } u \in U\}$ and $\mathcal{B}_2=\{F \subseteq E \mid |F \cap \delta(v)|=1 \text{ for any } v \in V\}$. By definition, $(E, \mathcal{B}_1)$ and $(E, \mathcal{B}_2)$ are partition matroids, and hence they are strongly base orderable matroids. Since Conjecture~\ref{conj:02} holds for strongly base orderable matroids and $\mathcal{B}_1 \cap \mathcal{B}_2$ is the set of perfect matchings in $G$, we obtain the corollary. 
\end{proof}

\section{Reduction from the weighted case to the unweighted case}\label{sec:weighted}

In this section, we show that the weighted problem can be reduced to the unweighted one, and prove Theorem~\ref{thm:unweight2weight}. 

\unweighttoweight*
\begin{proof}
Since Conjectures~\ref{conj:01} and~\ref{conj:02} are equivalent, it suffices to show that Conjecture~\ref{conj:03} is true by assuming that Conjecture~\ref{conj:02} is true. 

Suppose that we are given $M_i=(S, \mathcal{B}_i)$ and $w_i: S \to \mathbb{R}$ for $i \in \{1, 2\}$ as in Conjecture~\ref{conj:03}. We first consider the problem of finding a maximum weight common basis of $M_1$ and $M^*_2$ with respect to $w_1 - w_2$, where $M^*_2=(S, \mathcal{B}^*_2)$ is the dual matroid of $M_2$. By Theorem~\ref{thm:Frank}, there exist two functions $q_1: S \to \mathbb{R}$ and $q_2: S \to \mathbb{R}$ with $q_1 + q_2 = w_1 - w_2$ such that 
\begin{equation}
\argmax_{X \in \mathcal{B}_1 \cap \mathcal{B}^*_2} (w_1(X) - w_2(X)) = 
  \bigg( \argmax_{X \in \mathcal{B}_1} q_1(X) \bigg) \cap \bigg( \argmax_{X \in \mathcal{B}^*_2} q_2(X) \bigg). \label{eq:01}
\end{equation}

Define $\hat{\mathcal{B}_1} = \argmax_{X \in \mathcal{B}_1} q_1(X)$ and $\hat{\mathcal{B}_2} = \argmax_{X \in \mathcal{B}^*_2} q_2(X)$. 
Then, it is known that $\hat{M_i} = (S, \hat{\mathcal{B}_i})$ is also a matroid for $i\in\{1,2\}$ (see~\cite{edmonds1971matroids}). By (\ref{eq:01}), we obtain 
\begin{equation}
\argmax_{X \in \mathcal{B}_1 \cap \mathcal{B}^*_2} (w_1(X) - w_2(X)) = \hat{\mathcal{B}_1} \cap \hat{\mathcal{B}_2}. \label{eq:02}
\end{equation}
This together with $\mathcal{B}_1 \cap \mathcal{B}^*_2\not=\emptyset$
shows that $\hat{\mathcal{B}_1} \cap \hat{\mathcal{B}_2} \not= \emptyset$, and hence $\hat{M_1}$ and $\hat{M_2}$ satisfy the assumptions in Conjecture~\ref{conj:02}. Therefore, by assuming that Conjecture~\ref{conj:02} is true, there exists a function $\hat{p}: S \to \mathbb{R}$ satisfying the following conditions.  
\begin{enumerate}[(a)]
\item For any $B_1 \in \argmin_{X \in \hat{\mathcal{B}}_1} \hat{p}(X)$, it holds that $B_1 \in \hat{\mathcal{B}}_2$. \label{it:a} 
\item For any $B_2 \in \argmax_{X \in \hat{\mathcal{B}}_2} \hat{p}(X)$, it holds that $B_2 \in \hat{\mathcal{B}}_1$. \label{it:b}
\end{enumerate}

Let $\delta := \min \{|q_i(X) - q_i(Y)| \mid i \in \{1, 2\},\, X, Y \subseteq S,\, q_i(X) \not= q_i(Y)\}$ and let $\varepsilon$ be a positive number such that $\varepsilon \cdot |\hat{p}(X)| < \delta / 2$ for any $X \subseteq S$. We now show that $p :=  w_1 - q_1 + \varepsilon \cdot \hat{p}$ satisfies the requirements of Conjecture~\ref{conj:03}. Let $B_1$ be a set in
$ \argmax_{X \in \mathcal{B}_1} (w_1(X) - p(X)) =  \argmax_{X \in \mathcal{B}_1} (q_1(X) - \varepsilon \cdot \hat{p}(X))$. 
Since $- \delta /2 < \varepsilon \cdot \hat{p}(X) < \delta /2$ for any $X \subseteq S$, we have that  $B_1 \in \argmax_{X \in \mathcal{B}_1} q_1(X) = \hat{\mathcal{B}_1}$ and  $B_1 \in \argmin_{X \in \hat{\mathcal{B}_1}} \hat{p}(X)$. Then \eqref{it:a} shows that $B_1 \in \hat{\mathcal{B}_2}$. Therefore, 
$$
B_1 \in \hat{\mathcal{B}_1} \cap \hat{\mathcal{B}_2} = \argmax_{X \in \mathcal{B}_1 \cap \mathcal{B}^*_2} (w_1(X) - w_2(X)) = \argmax_{X \in \mathcal{B}_1 \cap \mathcal{B}^*_2} (w_1(X) + w_2(S \setminus X))
$$
holds by (\ref{eq:02}),  which means that $p$ satisfies the first requirement in Conjecture~\ref{conj:03}. 

Similarly, let $B_2$ be a set in 
$$
\argmax_{X \in \mathcal{B}_2} (w_2(X) - p(X)) =  \argmax_{X \in \mathcal{B}_2} (-q_2(X) - \varepsilon \cdot \hat{p}(X)) = \argmax_{X \in \mathcal{B}_2} (q_2(S \setminus X) + \varepsilon \cdot \hat{p}(S \setminus X)).
$$
This shows that  $S \setminus B_2 \in \argmax_{X \in \mathcal{B}^*_2} q_2(X) = \hat{\mathcal{B}_2}$ and  $S \setminus B_2 \in \argmax_{X \in \hat{\mathcal{B}_2}} \hat{p}(X)$. Then \eqref{it:b} shows that $S \setminus B_2 \in \hat{\mathcal{B}_1}$. Therefore, 
$$
S \setminus B_2 \in \hat{\mathcal{B}_1} \cap \hat{\mathcal{B}_2} = \argmax_{X \in \mathcal{B}_1 \cap \mathcal{B}^*_2} (w_1(X) - w_2(X)) = \argmax_{X \in \mathcal{B}_1 \cap \mathcal{B}^*_2} (w_1(X) + w_2(S \setminus X))
$$ 
holds by (\ref{eq:02}),  which means that $p$ satisfies the second requirement in Conjecture~\ref{conj:03}. 
Therefore, Conjecture~\ref{conj:03} is true if Conjecture~\ref{conj:02} is true. 
\end{proof}

\begin{remark}
Algorithmically, if we can compute $\hat p$, then we can compute $p$ efficiently as follows. 
First, we may assume that $w_1$ and $w_2$ are integral by multiplying by the common denominator. Then, we can take $q_1$ and $q_2$ so that they are integral~\cite{frank1981weighted}.  Therefore, we have that $\delta \ge 1$, and hence $\varepsilon := 1 / (1 + 2 \sum_{s \in S} |\hat p(s)|)$ satisfies the conditions in the proof. This shows that we can compute $p :=  w_1 - q_1 + \varepsilon \cdot \hat{p}$. 
\end{remark}

By Theorem~\ref{thm:unweight2weight}, we obtain Corollaries~\ref{cor:partitionw} and~\ref{cor:sbow} as follows. In the proof of Theorem~\ref{thm:unweight2weight}, we consider Conjecture~\ref{conj:02} for matroids $\hat{M_i} = (S, \hat{\mathcal{B}_i})$,  where $\hat{\mathcal{B}_1} = \argmax_{X \in \mathcal{B}_1} q_1(X)$ and $\hat{\mathcal{B}_2} = \argmax_{X \in \mathcal{B}^*_2} q_2(X)$. 
Observe that if $M_1$ is a partition matroid with partition classes of size at most $2$ and with all-ones upper bound on the partition classes, then so is $\hat{M_1}$. Furthermore, Lemma~\ref{lem:maximizerSBO} shows that if $M_i$ is strongly base orderable, then so is $\hat{M_i}$. Since Theorems~\ref{thm:partition} and~\ref{thm:sbo} imply that Conjecture~\ref{conj:02} also holds for these cases, we obtain Corollaries~\ref{cor:partitionw} and~\ref{cor:sbow}.



\section{Gross substitutes valuations}
\label{sec:Mnatural}

In this section, we show that the reduction technique in Section~\ref{sec:weighted} works also for M${}^\natural$-concave functions, or equivalently, gross substitutes functions.  M${}^\natural$-concave functions are introduced by Murota and Shioura~\cite{MurotaShioura1999} and play a central role in the theory of discrete convex analysis. A function $f:\mathbb{Z}^S \to \mathbb{R} \cup \{-\infty\}$ is said to be {\em M${}^\natural$-concave} if it satisfies the following exchange property: 
\begin{description}
\item[(M${}^\natural$-EXC)] 
$\forall x,y \in {\rm dom} f,\ \forall i \in \suppp(x-y),\ \exists j \in \suppm(x-y) \cup \{0\}$: 
$$
f(x)+f(y) \le f(x-\chi_i+\chi_j) + f(y+\chi_i-\chi_j),  
$$
\end{description}
where ${\rm dom} f=\{x \in \mathbb{Z}^S | f(x)>-\infty \}$, $\suppp(x)=\{i \in S \mid x(i)>0 \}$, $\suppm(x)=\{i \in S \mid x(i)<0 \}$ for $x \in \mathbb{Z}^S$, $\chi_i$ is the characteristic vector of $i \in S$, and $\chi_0$ is the all-zero vector ${\bf 0}$. When we consider a function $f$ on $\{0,1\}^S$, $f$ can be regarded as a function on $\mathbb{Z}^S$ by setting $f(x)=-\infty$ for $x \in \mathbb{Z}^S \setminus \{0,1\}^S$. It is shown by Fujishige and Yang~\cite{FujishigeYang2003} that a function $f$ on $\{0,1\}^S$ is M${}^\natural$-concave if and only if it is a gross substitutes function (see also \cite[Theorem 6.34]{murotaDCA}). See survey papers \cite{Murota2016,ShiouraTamura2015} for more details on M${}^\natural$-concave functions and gross substitutes functions. For a set $Q \subseteq \mathbb{Z}^S$, we define a function $f_Q$ on $\mathbb{Z}^S$ by $f_Q(x)=0$ if $x \in Q$ and $f_Q(x)=-\infty$ otherwise. We say that a set $Q \subseteq \mathbb{Z}^S$ is {\em M${}^\natural$-convex} if $f_Q$ is an M${}^\natural$-concave function. It is known that a set is M${}^\natural$-convex if and only if it is the set of integer points/vectors in an integral g-polymatroid~\cite{frank1984,frank2011}. 
Let ${\bf 1}$ denote the all-one vector in $\mathbb{Z}^S$.

We are interested in the existence of a pricing scheme for the two-buyer case with gross substitutes valuations (or equivalently, M${}^\natural$-concave valuations), which is stated as follows. 

\begin{conjecture} \label{conj:M01}
For $i=1, 2$, let $v_i: \{0, 1\}^S \to \bR \cup \{-\infty\}$ be an M${}^\natural$-concave function. Then, there exists a vector $p \in \mathbb{R}^S$ satisfying the following conditions.  
\begin{enumerate}
\item For any $x_1 \in \argmax_{x \in \{0,1\}^S} (v_1(x) - p \cdot x)$, it holds that $x_1 \in \argmax_{x \in \{0,1\}^S} (v_1(x) + v_2({\bf 1} - x))$. 
\item For any $x_2 \in \argmax_{x \in \{0,1\}^S} (v_2(x) - p \cdot x)$, it holds that $x_2 \in \argmax_{x \in \{0,1\}^S} (v_1({\bf 1} - x) + v_2(x))$. 
\end{enumerate}
\end{conjecture}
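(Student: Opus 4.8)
Overview of the plan. The goal is to establish Conjecture~\ref{conj:M01} conditionally, by a reduction that mirrors the proof of Theorem~\ref{thm:unweight2weight} step for step: there the engine was Frank's weight‑splitting theorem feeding into Conjecture~\ref{conj:02}; here it will be the existence of Walrasian prices for gross substitutes valuations (Kelso--Crawford, Gul--Stacchetti \cite{kelso1982job,gul1999walrasian}; equivalently the M${}^\natural$‑concave intersection theorem, cf.\ \cite{murotaDCA}) feeding into the $\{0,1\}$‑valued special case of Conjecture~\ref{conj:M01}, i.e.\ the analogue of Conjecture~\ref{conj:02} in which matroid base families are replaced by arbitrary M${}^\natural$‑convex sets. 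Throughout I assume the allocation problem is feasible, i.e.\ $\argmax_{x\in\{0,1\}^S}(v_1(x)+v_2(\mathbf 1-x))\neq\emptyset$, which Conjecture~\ref{conj:M01} implicitly requires.

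Dualizing the welfare objective. First I would apply the structure theory of gross substitutes to the problem $\max\{v_1(x)+v_2(\mathbf 1-x)\mid x\in\{0,1\}^S\}$: there is a price vector $p^*\in\mathbb{R}^S$ (integral if the $v_i$ are integral) supporting a Walrasian equilibrium. Writing $D_i(p^*):=\argmax_{x\in\{0,1\}^S}(v_i(x)-p^*\cdot x)$ for the demand sets, a one‑line argument from the equilibrium inequality shows that an allocation $(x,\mathbf 1-x)$ is welfare‑optimal if and only if $x\in D_1(p^*)$ and $\mathbf 1-x\in D_2(p^*)$, so that
\[
\argmax_{x\in\{0,1\}^S}\bigl(v_1(x)+v_2(\mathbf 1-x)\bigr)=\hat Q_1\cap\hat Q_2,\qquad \hat Q_1:=D_1(p^*),\quad \hat Q_2:=\{\mathbf 1-y\mid y\in D_2(p^*)\}.
\]
Demand sets of M${}^\natural$‑concave functions are M${}^\natural$‑convex, and M${}^\natural$‑convexity of a set is preserved under the reflection $y\mapsto\mathbf 1-y$, so $\hat Q_1$ and $\hat Q_2$ are M${}^\natural$‑convex, and by feasibility $\hat Q_1\cap\hat Q_2\neq\emptyset$. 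Hence $(\hat Q_1,\hat Q_2)$ satisfies the hypothesis of the M${}^\natural$‑convex‑set analogue of Conjecture~\ref{conj:02}, which I invoke to obtain $\hat p\in\mathbb{R}^S$ with $\argmin_{x\in\hat Q_1}(\hat p\cdot x)\subseteq\hat Q_2$ and $\argmax_{x\in\hat Q_2}(\hat p\cdot x)\subseteq\hat Q_1$.

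Constructing and verifying $p$. Set $p:=p^*+\varepsilon\hat p$ for a sufficiently small $\varepsilon>0$ (smaller than the minimum of the second‑best gaps of $v_1-p^*\cdot$ and $v_2-p^*\cdot$ over $\{0,1\}^S$, divided by $\sum_{s}|\hat p(s)|$; when the $v_i$ are integral one may take $\varepsilon=1/(1+\sum_s|\hat p(s)|)$, as in the remark after Theorem~\ref{thm:unweight2weight}). For such $\varepsilon$, any $x_1\in\argmax_x(v_1(x)-p\cdot x)$ must already lie in $\hat Q_1=D_1(p^*)$; since $v_1-p^*\cdot$ is constant on $\hat Q_1$, $x_1$ minimizes $\hat p\cdot x$ over $\hat Q_1$ and hence lies in $\hat Q_2$, so $x_1\in\hat Q_1\cap\hat Q_2=\argmax_x(v_1(x)+v_2(\mathbf 1-x))$, which is condition~1. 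Symmetrically, any $x_2\in\argmax_x(v_2(x)-p\cdot x)$ lies in $D_2(p^*)$ and minimizes $\hat p\cdot x$ over $D_2(p^*)$, hence $\mathbf 1-x_2$ maximizes $\hat p\cdot z$ over $\hat Q_2$ and thus lies in $\hat Q_1\cap\hat Q_2$; writing $h(w):=v_1(w)+v_2(\mathbf 1-w)$ and using $v_1(\mathbf 1-x)+v_2(x)=h(\mathbf 1-x)$, this yields $x_2\in\argmax_x(v_1(\mathbf 1-x)+v_2(x))$, condition~2.

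Expected obstacle. The routine part is the perturbation bookkeeping in the last step, which is essentially identical to the proof of Theorem~\ref{thm:unweight2weight}. The crux — and the place where the argument genuinely departs from the purely matroidal Section~\ref{sec:weighted} — is the dualization step: one needs a single price vector $p^*$ that simultaneously supports an equilibrium and pins down the entire set of welfare‑optimal allocations as $\hat Q_1\cap\hat Q_2$, together with the verification that $\hat Q_1$ and $\hat Q_2$ are honestly M${}^\natural$‑convex so that the $\{0,1\}$‑valued base case applies; this is exactly where the gross substitutes property is used. (For the algorithmic refinement one additionally needs $p^*$ and the bijection‑type data behind the base case to be computable in polynomial time, but the existence argument requires none of this.)
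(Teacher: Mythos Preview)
Your argument is correct and coincides with the paper's own proof of Theorem~\ref{thm:MnaturalReduction}: the paper defines $v^*_2(x)=v_2(\mathbf 1-x)$, invokes the M-convex intersection theorem to produce a vector $q$ (your $p^*$) with $\argmax_x(v_1(x)+v^*_2(x))=D_1(q)\cap\{\mathbf 1-y: y\in D_2(q)\}$, observes that the two factors are M${}^\natural$-convex, applies Conjecture~\ref{conj:M02} to obtain $\hat p$, and sets $p=q+\varepsilon\hat p$. The only cosmetic differences are that the paper phrases the dualization via the M-convex intersection theorem rather than Walrasian equilibrium (these are the same statement for gross substitutes), and formulates the base case as Conjecture~\ref{conj:M02} (two $\argmin$'s with complementation) rather than your $\argmin/\argmax$ variant; the two formulations are equivalent by the substitution $z\mapsto\mathbf 1-z$.
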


In Conjecture~\ref{conj:M01}, a subset of $S$ is represented by its characteristic vector. If buyer $i$ comes to a shop first, then she chooses an arbitrary set $x_i$ maximizing her utility $v_i(x) - p \cdot x$. Then, the second buyer takes the set of all the remaining items whose characteristic vector is ${\bf 1} - x_i$. Conjecture~\ref{conj:M01} asserts that, regardless of the choice of $x_i$, this mechanism gives an allocation maximizing the social welfare. 

As an unweighted version of this conjecture, we consider the following conjecture. 

\begin{conjecture} \label{conj:M02}
For $i=1, 2$, let $Q_i \subseteq \{0, 1\}^S$ be an M${}^\natural$-convex set such that there exist $x_1 \in Q_1$ and $x_2 \in Q_2$ with $x_1 + x_2 = {\bf 1}$. Then, there exists a vector $p \in \mathbb{R}^S$ satisfying the following conditions.  
\begin{enumerate}
\item For any $x_1 \in \argmin_{x \in Q_1} (p \cdot x)$, it holds that ${\bf 1} - x_1  \in Q_2$. 
\item For any $x_2 \in \argmin_{x \in Q_2} (p \cdot x)$, it holds that ${\bf 1} - x_2  \in Q_1$. 
\end{enumerate}
\end{conjecture}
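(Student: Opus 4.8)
The plan is to reduce Conjecture~\ref{conj:M02} to its matroid counterpart, Conjecture~\ref{conj:02}, by realizing each M$^\natural$-convex set as a matroid base family on an enlarged ground set. First I would put the statement in a symmetric form. Since g-polymatroids are closed under the reflection $x\mapsto-x$, the set $Q_2':=\{\mathbf 1-x\mid x\in Q_2\}$ is again M$^\natural$-convex; the hypothesis of Conjecture~\ref{conj:M02} then reads $Q_1\cap Q_2'\neq\emptyset$, and, using that $x_2$ minimizes $p\cdot x$ over $Q_2$ exactly when $\mathbf 1-x_2$ maximizes $p\cdot y$ over $Q_2'$, the two required conditions become $\argmin_{x\in Q_1}(p\cdot x)\subseteq Q_2'$ and $\argmax_{y\in Q_2'}(p\cdot y)\subseteq Q_1$ --- the M$^\natural$-analogue of Conjecture~\ref{conj:02}.

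Next I would homogenize. Each of $Q_1$ and $Q_2'$ is the set of $0$--$1$ points of an integral g-polymatroid on $S$, and by a standard fact of discrete convex analysis~\cite{frank1984,frank2011,murotaDCA} such a set is the projection to $\mathbb{R}^S$ of the base polytope of an integral polymatroid on $S\cup\{s_0\}$ (obtained by putting the $s_0$-coordinate equal to $|S|-\mathbf 1\cdot x$); replacing $s_0$ by $|S|$ parallel copies $T=\{t_1,\dots,t_{|S|}\}$ turns this polymatroid into a matroid $\hat M_i=(S\cup T,\hat{\mathcal B}_i)$ whose bases are exactly the sets $X\cup Y$ with $\mathbf 1_X\in Q_i$, $Y\subseteq T$, and $|Y|=|S|-|X|$. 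Taking the same $T$ for both indices gives matroids $\hat M_1,\hat M_2$ on the common ground set $S\cup T$; any element of $Q_1\cap Q_2'$ together with a suitable $Y\subseteq T$ is a common basis, so Conjecture~\ref{conj:02} applies and produces $\hat p:S\cup T\to\mathbb{R}$ with $\argmin_{X\in\hat{\mathcal B}_1}\hat p(X)\subseteq\hat{\mathcal B}_2$ and $\argmax_{X\in\hat{\mathcal B}_2}\hat p(X)\subseteq\hat{\mathcal B}_1$. Provided $\hat p$ is identically $0$ on $T$, a $\hat p$-cheapest base of $\hat M_1$ is just $X^{\min}\cup Y$, where $X^{\min}$ minimizes $\hat p|_S$ over $\{X:\mathbf 1_X\in Q_1\}$ (the free dummies being irrelevant), and symmetrically for $\hat M_2$; hence $p:=\hat p|_S$ fulfils the reformulated conditions and, undoing the reflection, Conjecture~\ref{conj:M02}.

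The step I expect to be the main obstacle is forcing $\hat p$ to be $0$ (equivalently, constant) on the dummy set $T$. Conjecture~\ref{conj:02} only guarantees \emph{some} valid price vector, and because the bases of $\hat M_1$ need not all use the same number of dummies --- precisely where the $\{0,1\}$-truncation of a g-polymatroid differs from a matroid, so that the ``shift $p$ by a constant'' freedom of the matroid setting is unavailable --- naively zeroing or averaging $\hat p$ over $T$ may alter which bases are optimal and break the containments. I would handle this by showing that the symmetrization $\hat p\mapsto\frac{1}{|T|!}\sum_{\sigma\in\mathrm{Sym}(T)}\hat p\circ\sigma$ preserves both containments (using that $\hat M_1$ and $\hat M_2$ are invariant under $\mathrm{Sym}(T)$), or by designing a more symmetric gadget in which every base of $\hat M_1$ and of $\hat M_2$ uses the same number of dummies. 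Once this is in place the reduction is effective whenever Conjecture~\ref{conj:02} is, so the proven matroid cases of Theorems~\ref{thm:partition} and~\ref{thm:sbo} transfer to the corresponding M$^\natural$-convex instances.
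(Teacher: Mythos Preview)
First note that Conjecture~\ref{conj:M02} is \emph{not proved} in the paper; it is stated as an open conjecture, and the paper only shows (Theorem~\ref{thm:MnaturalReduction}) that it implies Conjecture~\ref{conj:M01}. Your proposal is therefore an attempt to establish something the authors leave open---a reduction of Conjecture~\ref{conj:M02} to the matroid Conjecture~\ref{conj:02}---rather than a reconstruction of an existing argument.

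Your lift of an M$^\natural$-convex subset of $\{0,1\}^S$ to a matroid on $S\cup T$ via free dummy elements is sound, and your identification of the obstacle (forcing $\hat p$ to be constant on $T$) is exactly right. But the proposed repair by $\mathrm{Sym}(T)$-averaging fails: the set of valid $\hat p$ for Conjecture~\ref{conj:02} is not convex, so the orbit average of a valid vector need not be valid. Concretely, take $S=\{a,b\}$, $Q_1=\{(1,0),(0,1)\}$, $Q_2'=\{(1,0),(0,0)\}$ and $T=\{t_1,t_2\}$. The lifted base families are $\hat{\cB}_1=\{\{a,t_1\},\{a,t_2\},\{b,t_1\},\{b,t_2\}\}$ and $\hat{\cB}_2=\{\{a,t_1\},\{a,t_2\},\{t_1,t_2\}\}$, and one checks that $\hat p$ satisfies both containments of Conjecture~\ref{conj:02} precisely when $\min(\hat p(t_1),\hat p(t_2))<\hat p(a)<\hat p(b)$. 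The vector $\hat p(a)=1$, $\hat p(b)=2$, $\hat p(t_1)=0$, $\hat p(t_2)=100$ is valid, yet its symmetrization has $\hat p(t_1)=\hat p(t_2)=50>\hat p(a)$, so the unique $\hat p$-maximal element of $\hat{\cB}_2$ becomes $\{t_1,t_2\}\notin\hat{\cB}_1$. Your alternative---a gadget in which every base of each $\hat M_i$ uses the same number of dummies---amounts to requiring $Q_1$ and $Q_2'$ to be M-convex (constant cardinality), which is exactly the case where the lift is unnecessary. So the reduction has a genuine gap at the step you anticipated, and neither proposed fix closes it; in particular the claimed transfer of Theorems~\ref{thm:partition} and~\ref{thm:sbo} to M$^\natural$-convex instances is not established. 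Whether a constant-on-$T$ valid $\hat p$ always exists (it does in the example above, e.g.\ with $\hat p(t_i)=0$) is a separate question your argument does not settle, and proving it would require something other than symmetrization.
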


In Conjecture~\ref{conj:M02}, each buyer $i$ has an admissible set $Q_i$ instead of a valuation. More precisely, each buyer $i$ wants to buy a set of items whose characteristic vector $x_i$ belongs to a given M${}^\natural$-convex set $Q_i$. We can easily see that Conjecture~\ref{conj:M02} is a special case of Conjecture~\ref{conj:M01}, in which $v_i=f_{Q_i}$ for $i=1, 2$. We now prove that the reverse implication also holds, which means that Conjecture~\ref{conj:M01}  can be reduced to the unweighted case.

\begin{restatable}{theorem}{MnaturalReduction}
\label{thm:MnaturalReduction}
If Conjecture~\ref{conj:M02} is true, then Conjecture~\ref{conj:M01} is also true.
\end{restatable}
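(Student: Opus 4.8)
The plan is to mimic the reduction from the weighted matroid case (Theorem~\ref{thm:unweight2weight}) almost verbatim, replacing Frank's weight-splitting theorem for matroid intersection by its M$^\natural$-concave analogue, namely the conjugacy/discrete-separation result for the infimal convolution (or Fenchel-type duality) of two M$^\natural$-concave functions. Concretely, given M$^\natural$-concave valuations $v_1$ and $v_2$ on $\{0,1\}^S$, I would first pass to the ``dual'' valuation $\tilde v_2(x) := v_2(\mathbf{1}-x)$, which is again M$^\natural$-concave (reflection of an M$^\natural$-concave function is M$^\natural$-concave), and consider the problem of maximizing $v_1(x)+\tilde v_2(x)$ over $x\in\{0,1\}^S$ — this is exactly the social-welfare problem. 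By M$^\natural$-concave intersection duality there is a price vector $q\in\mathbb{R}^S$ such that $\argmax_x (v_1(x)+\tilde v_2(x)) = \argmax_x (v_1(x)-q\cdot x) \cap \argmax_x (\tilde v_2(x)+q\cdot x)$; moreover each of the two argmax sets on the right is an M$^\natural$-convex set (the set of maximizers of an M$^\natural$-concave function minus a linear term is M$^\natural$-convex). Call these sets $\hat Q_1$ and $\hat Q_2$; their intersection is nonempty by the assumption that a feasible split $x_1+x_2=\mathbf{1}$ exists, so $\hat Q_1,\hat Q_2$ satisfy the hypotheses of Conjecture~\ref{conj:M02}.

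Next I would invoke Conjecture~\ref{conj:M02} for $\hat Q_1$ and $\hat Q_2$ to get a vector $\hat p$ with: for any $x_1\in\argmin_{x\in\hat Q_1}(\hat p\cdot x)$ we have $\mathbf{1}-x_1\in\hat Q_2$ — wait, I need to be careful with the direction of the duality, since $\hat Q_2$ here plays the role of ``$\argmax (\tilde v_2 + q\cdot x)$'' and Conjecture~\ref{conj:M02} is phrased in terms of $\mathbf{1}-x$. Here the cleaner route is to define $\hat Q_2' := \{\mathbf{1}-x : x\in \hat Q_2\}$ so that the common feasibility becomes $\hat Q_1 \cap \hat Q_2' \ne\emptyset$ in the sense of a common point, then reformulate; but in fact it is simpler to keep the formulation of Conjecture~\ref{conj:M02} as stated, whose two conditions already speak of $\mathbf{1}-x$, and set things up so that the roles match. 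The perturbation step is then identical to the matroid case: pick $\varepsilon>0$ small enough that $\varepsilon|\hat p\cdot x|<\delta/2$ for all $x\in\{0,1\}^S$, where $\delta$ is the minimum nonzero gap among values of the relevant linear functionals, and set $p := q + \varepsilon\hat p$ (up to a sign convention dictated by whether we are taking $\argmax$ or $\argmin$). For any $x_1\in\argmax_x(v_1(x)-p\cdot x)$, the smallness of $\varepsilon\hat p$ forces $x_1\in\argmax_x(v_1(x)-q\cdot x)=\hat Q_1$ and then $x_1\in\argmin_{x\in\hat Q_1}(\hat p\cdot x)$, so Conjecture~\ref{conj:M02} gives the required membership, which by the duality identity means $x_1\in\argmax_x(v_1(x)+v_2(\mathbf{1}-x))$; the symmetric argument handles buyer~$2$.

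I should double-check one subtlety that does not arise in the matroid case: in Theorem~\ref{thm:unweight2weight} the ``integrality'' of $q_1,q_2$ guaranteed a clean bound $\delta\ge 1$, but here I only need $\delta>0$, which holds automatically because we range over the finite set $\{0,1\}^S$ and take the minimum of a finite set of positive numbers; so existence of a valid $\varepsilon$ is immediate (algorithmic efficiency is a separate matter not claimed in the statement). The only real content beyond bookkeeping is the M$^\natural$-concave duality/weight-splitting result that replaces Frank's theorem — I would cite the discrete-convex-analysis literature (Murota's book, the Fenchel-type duality theorem for M$^\natural$-concave functions, equivalently the M-convex intersection theorem) for the fact that maximizing $v_1 + \tilde v_2$ splits through a common price vector $q$ and that the induced argmax sets are M$^\natural$-convex.

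\textbf{Main obstacle.} The step I expect to require the most care is verifying the exact analogue of Frank's theorem in the M$^\natural$-concave setting — specifically, pinning down that the maximizer set $\argmax_x(v_1(x) + v_2(\mathbf{1}-x))$ decomposes as $\hat Q_1 \cap \hat Q_2$ (after the appropriate reflection) with both factors M$^\natural$-convex, and matching the $\mathbf{1}-x$ conventions of Conjecture~\ref{conj:M02} to the $+q\cdot x$ versus $-q\cdot x$ asymmetry. Everything downstream of that — the perturbation argument and the two symmetric case analyses — is a direct transcription of the proof of Theorem~\ref{thm:unweight2weight}.
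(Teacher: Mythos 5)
Your proposal matches the paper's proof essentially step for step: reflect $v_2$ to $v^*_2(x)=v_2(\mathbf{1}-x)$, apply the M-convex intersection theorem (the M$^\natural$-concave analogue of Frank's weight splitting) to get $q$ with the argmax decomposition into M$^\natural$-convex sets, reflect one of them to match the $\mathbf{1}-x$ convention of Conjecture~\ref{conj:M02}, invoke that conjecture to get $\hat p$, and set $p=q+\varepsilon\hat p$ with $\varepsilon$ small, exactly as in Theorem~\ref{thm:unweight2weight}. Your side remarks (the reflection $\hat Q_2'=\{\mathbf{1}-x : x\in\hat Q_2\}$, and that only $\delta>0$ rather than integrality is needed) are correct and resolve the only points you flagged as delicate.
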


\begin{proof}
Let $v^*_2: \{0, 1\}^S \to \bR \cup \{-\infty\}$ be the function defined by $v^*_2(x) = v_2({\bf 1} - x)$ for $x \in \{0, 1\}^S$. Then, $v^*_2$ is also an M${}^\natural$-concave function. Consider the problem of maximizing $v_1(x) + v^*_2(x)$ subject to $x \in \{0, 1\}^S$. By the M-convex intersection theorem (see~\cite[Theorem 8.17]{murotaDCA}), there exists a vector $q \in \bR^S$ such that 
\begin{equation}
\argmax_{x \in \{0,1\}^S} (v_1(x) + v^*_2(x)) =  \bigg( \argmax_{x \in \{0,1\}^S} (v_1(x) - q \cdot x) \bigg) \cap \bigg( \argmax_{x \in \{0,1\}^S} (v^*_2(x) + q \cdot x) \bigg). \label{eq:M01}
\end{equation}
Define $Q_1=\argmax_{x \in \{0,1\}^S} (v_1(x) - q \cdot x)$, $Q^*_2=\argmax_{x \in \{0,1\}^S} (v^*_2(x) + q \cdot x)$, and $Q_2=\{{\bf 1}-x \mid x \in Q^*_2\}$. Then, it is known that $Q_1$ and $Q^*_2$ are M${}^\natural$-convex sets (see~\cite[Theorem 6.30(2)]{murotaDCA}), and so is $Q_2$ (see~\cite[Theorem 6.13(2)]{murotaDCA}).  By (\ref{eq:M01}), we obtain $\argmax_{x \in \{0,1\}^S} (v_1(x) + v^*_2(x)) = Q_1 \cap Q^*_2$. This shows that $Q_1 \cap Q^*_2 \not= \emptyset$, and hence $Q_1$ and $Q_2$ satisfy the assumptions in Conjecture~\ref{conj:M02}. Therefore, by assuming that Conjecture~\ref{conj:M02} is true, there exists a vector $\hat{p} \in \mathbb{R}^S$ satisfying the following conditions.  
\begin{enumerate}
\item[(a)] For any $x_1 \in \argmin_{x \in Q_1} (\hat p \cdot x)$, it holds that ${\bf 1} - x_1  \in Q_2$. 
\item[(b)] For any $x_2 \in \argmin_{x \in Q_2} (\hat p \cdot x)$, it holds that ${\bf 1} - x_2  \in Q_1$. 
\end{enumerate}
Then, by the same argument as the proof of Theorem~\ref{thm:unweight2weight}, $p:= q + \varepsilon \cdot \hat p$ satisfies the requirements in Conjecture~\ref{conj:M01}, where $\varepsilon$ is a sufficiently small positive number. 
\end{proof}

\begin{remark}
In a market model, it is common to assume that each valuation $v_i$ is monotone and $v_i(\emptyset)=0$. We note that these assumptions are not required in the proof of Theorem~\ref{thm:MnaturalReduction}. In return for this, the obtained price vector $p$ is not necessarily non-negative. 
\end{remark}

We note that Conjecture~\ref{conj:01} is a special case of Conjecture~\ref{conj:M01}, as the characteristic vector of all the bases of a matroid forms an M${}^\natural$-convex set. This relationship supports the importance of Conjecture~\ref{conj:01}.

\section{Conclusion}\label{sec:conc}

We considered the existence of prices that are capable to achieve optimal social welfare without a central tie-breaking coordinator. Although such pricing looks similar to well-known Walrasian pricing, it is less understood even for two-buyer markets with gross substitute valuations.
This paper focuses on two-buyer markets with rank valuations, and we gave polynomial-time algorithms that always find such prices when one of the matroids is a simple partition matroid or both matroids are strongly base orderable. This result partially answers a question of D\"uetting and V\'egh. We further formalized a weighted variant of the conjecture of D\"uetting and V\'egh, and showed that the weighted variant can be reduced to the unweighted one based on the weight-splitting theorem of Frank. We also showed that a similar reduction technique works for M${}^\natural$-concave functions, or equivalently, gross substitutes functions.


\paragraph{Acknowledgement}

Krist\'of B\'erczi was supported by the J\'anos Bolyai Research Fellowship of the Hungarian Academy of Sciences and by Projects no.~NKFI-128673 and no.~ED\_18-1-2019-0030 provided from the National Research, Development and Innovation Fund of Hungary. Naonori Kakimura was supported by JSPS KAKENHI Grant Numbers JP17K00028 and JP18H05291, Japan. Yusuke Kobayashi was supported by JSPS KAKENHI grant numbers JP18H05291, JP19H05485, and JP20K11692, Japan. The work was supported by the Research Institute for Mathematical Sciences, an International Joint Usage/Research Center located in Kyoto University.

\bibliographystyle{abbrv}
\bibliography{pricing}

\begin{thebibliography}{10}

\bibitem{berger2020power}
B.~Berger, A.~Eden, and M.~Feldman.
\newblock On the power and limits of dynamic pricing in combinatorial markets.
\newblock {\em arXiv preprint arXiv:2002.06863}, 2020.

\bibitem{blumrosen2008posted}
L.~Blumrosen and T.~Holenstein.
\newblock Posted prices vs. negotiations: an asymptotic analysis.
\newblock {\em EC}, 10:1386790--1386801, 2008.

\bibitem{chawla2010multi}
S.~Chawla, J.~D. Hartline, D.~L. Malec, and B.~Sivan.
\newblock Multi-parameter mechanism design and sequential posted pricing.
\newblock In {\em Proceedings of the forty-second ACM symposium on Theory of
  computing}, pages 311--320, 2010.

\bibitem{chawla2010power}
S.~Chawla, D.~L. Malec, and B.~Sivan.
\newblock The power of randomness in bayesian optimal mechanism design.
\newblock In {\em Proceedings of the 11th ACM conference on Electronic
  commerce}, pages 149--158, 2010.

\bibitem{chawla2019pricing}
S.~Chawla, J.~B. Miller, and Y.~Teng.
\newblock Pricing for online resource allocation: intervals and paths.
\newblock In {\em Proceedings of the Thirtieth Annual ACM-SIAM Symposium on
  Discrete Algorithms}, pages 1962--1981. SIAM, 2019.

\bibitem{clarke1971multipart}
E.~H. Clarke.
\newblock Multipart pricing of public goods.
\newblock {\em Public choice}, pages 17--33, 1971.

\bibitem{cohen2016invisible}
V.~Cohen-Addad, A.~Eden, M.~Feldman, and A.~Fiat.
\newblock The invisible hand of dynamic market pricing.
\newblock In {\em Proceedings of the 2016 ACM Conference on Economics and
  Computation}, pages 383--400, 2016.

\bibitem{davies1976disjoint}
J.~Davies and C.~McDiarmid.
\newblock Disjoint common transversals and exchange structures.
\newblock {\em Journal of the London Mathematical Society}, 2(1):55--62, 1976.

\bibitem{dutting2016posted}
P.~D{\"u}tting, M.~Feldman, T.~Kesselheim, and B.~Lucier.
\newblock Posted prices, smoothness, and combinatorial prophet inequalities.
\newblock {\em arXiv preprint arXiv:1612.03161}, 2016.

\bibitem{dutting2017prophet}
P.~D\"utting, M.~Feldman, T.~Kesselheim, and B.~Lucier.
\newblock Prophet inequalities made easy: Stochastic optimization by pricing
  non-stochastic inputs.
\newblock In {\em 2017 IEEE 58th Annual Symposium on Foundations of Computer
  Science (FOCS)}, pages 540--551. IEEE, 2017.

\bibitem{vegh}
P.~D\"utting and L.~A. V\'egh.
\newblock {Private Communication}, 2017.

\bibitem{eden2019max}
A.~Eden, U.~Feige, and M.~Feldman.
\newblock Max-min greedy matching.
\newblock {\em Approximation, Randomization, and Combinatorial Optimization.
  Algorithms and Techniques (APPROX/RANDOM 2019)}, 145:7, 2019.

\bibitem{edmonds1971matroids}
J.~Edmonds.
\newblock Matroids and the greedy algorithm.
\newblock {\em Mathematical programming}, 1(1):127--136, 1971.

\bibitem{ezra2018pricing}
T.~Ezra, M.~Feldman, T.~Roughgarden, and W.~Suksompong.
\newblock Pricing multi-unit markets.
\newblock In {\em International Conference on Web and Internet Economics},
  pages 140--153. Springer, 2018.

\bibitem{feldman2014combinatorial}
M.~Feldman, N.~Gravin, and B.~Lucier.
\newblock Combinatorial auctions via posted prices.
\newblock In {\em Proceedings of the twenty-sixth annual ACM-SIAM symposium on
  Discrete algorithms}, pages 123--135. SIAM, 2014.

\bibitem{feldman2016combinatorial}
M.~Feldman, N.~Gravin, and B.~Lucier.
\newblock Combinatorial walrasian equilibrium.
\newblock {\em SIAM Journal on Computing}, 45(1):29--48, 2016.

\bibitem{frank1981weighted}
A.~Frank.
\newblock A weighted matroid intersection algorithm.
\newblock {\em Journal of Algorithms}, 2(4):328--336, 1981.

\bibitem{frank1984}
A.~Frank.
\newblock Generalized polymatroids.
\newblock In {\em Finite and Infinite Sets (Colloquia Mathematica Societatis
  J{\'a}nos Bolyai 37)}, pages 285--294, 1984.

\bibitem{frank2011}
A.~Frank.
\newblock {\em Connections in Combinatorial Optimization}.
\newblock Oxford University Press, 2011.

\bibitem{fujishige2003note}
S.~Fujishige and Z.~Yang.
\newblock A note on {Kelso and Crawford's} gross substitutes condition.
\newblock {\em Mathematics of Operations Research}, 28(3):463--469, 2003.

\bibitem{groves1973incentives}
T.~Groves.
\newblock Incentives in teams.
\newblock {\em Econometrica: Journal of the Econometric Society}, pages
  617--631, 1973.

\bibitem{gul1999walrasian}
F.~Gul, E.~Stacchetti, et~al.
\newblock Walrasian equilibrium with gross substitutes.
\newblock {\em Journal of Economic theory}, 87(1):95--124, 1999.

\bibitem{hsu2016prices}
J.~Hsu, J.~Morgenstern, R.~Rogers, A.~Roth, and R.~Vohra.
\newblock Do prices coordinate markets?
\newblock In {\em Proceedings of the forty-eighth annual ACM symposium on
  Theory of Computing}, pages 440--453, 2016.

\bibitem{huang1994relations}
R.~Huang and G.-C. Rota.
\newblock On the relations of various conjectures on latin squares and
  straightening coefficients.
\newblock {\em Discrete Mathematics}, 128(1-3):225--236, 1994.

\bibitem{kelso1982job}
A.~S. Kelso~Jr and V.~P. Crawford.
\newblock Job matching, coalition formation, and gross substitutes.
\newblock {\em Econometrica: Journal of the Econometric Society}, pages
  1483--1504, 1982.

\bibitem{krogdahl1974combinatorial}
S.~Krogdahl.
\newblock A combinatorial base for some optimal matroid intersection
  algorithms.
\newblock Technical Report STAN-CS-74-468, Computer Science Department,
  Stanford University, Stanford, CA, U.S., 1974.

\bibitem{krogdahl1976combinatorial}
S.~Krogdahl.
\newblock A combinatorial proof for a weighted matroid intersection algorithm.
\newblock Technical Report Computer Science Report 17, Institute of
  Mathematical and Physical Sciences, University of Tromso, Tromso, Norway,
  1976.

\bibitem{krogdahl1977dependence}
S.~Krogdahl.
\newblock The dependence graph for bases in matroids.
\newblock {\em Discrete Mathematics}, 19(1):47--59, 1977.

\bibitem{murotaDCA}
K.~Murota.
\newblock {\em Discrete Convex Analysis}.
\newblock Society for Industrial and Applied Mathematics, 2003.

\bibitem{Murota2016}
K.~Murota.
\newblock Discrete convex analysis: A tool for economics and game theory.
\newblock {\em The Journal of Mechanism and Institution Design}, 1(1):151--273,
  2016.

\bibitem{MurotaShioura1999}
K.~Murota and A.~Shioura.
\newblock M-convex function on generalized polymatroid.
\newblock {\em Mathematics of Operations Research}, 24(1):95--105, 1999.

\bibitem{nisan2006communication}
N.~Nisan and I.~Segal.
\newblock The communication requirements of efficient allocations and
  supporting prices.
\newblock {\em Journal of Economic Theory}, 129(1):192--224, 2006.

\bibitem{nishimura2009lost}
H.~Nishimura and S.~Kuroda.
\newblock {\em A Lost Mathematician, Takeo Nakasawa: The Forgotten Father of
  Matroid Theory}.
\newblock Springer Science \& Business Media, 2009.

\bibitem{oxley2011matroid}
J.~Oxley.
\newblock {\em Matroid Theory}.
\newblock Oxford University Press, 2011.

\bibitem{schrijver2003combinatorial}
A.~Schrijver.
\newblock {\em Combinatorial optimization: polyhedra and efficiency},
  volume~24.
\newblock Springer Science \& Business Media, 2003.

\bibitem{ShiouraTamura2015}
A.~Shioura and A.~Tamura.
\newblock Gross substitutes condition and discrete concavity for multi-unit
  valuations: a survey.
\newblock {\em Journal of the Operations Research Society of Japan},
  58(1):61--103, 2015.

\bibitem{vickrey1961counterspeculation}
W.~Vickrey.
\newblock Counterspeculation, auctions, and competitive sealed tenders.
\newblock {\em The Journal of finance}, 16(1):8--37, 1961.

\bibitem{walras1896elements}
L.~Walras.
\newblock {\em {\'E}l{\'e}ments d'{\'e}conomie politique pure, ou, Th{\'e}orie
  de la richesse sociale}.
\newblock F. Rouge, 1896.

\bibitem{whitney1992abstract}
H.~Whitney.
\newblock On the abstract properties of linear dependence.
\newblock In {\em Hassler Whitney Collected Papers}, pages 147--171. Springer,
  1992.

\end{thebibliography}

\end{document}